\newlist{abbrv}{itemize}{1}
\setlist[abbrv,1]{label=,labelwidth=1in,align=parleft,itemsep=0.1\baselineskip,leftmargin=!}
\newcolumntype{M}[1]{>{\hbox to #1\bgroup\hss$}l<{$\egroup}}
\newcommand\@brcolwidth{0.67em}
\def\@brarray[#1]{\array{r*\c@MaxMatrixCols {M{#1}}}}
\newtheorem{cons}{Construction}
\title{Almost Perfect Mutually Unbiased Bases that are Sparse}
\renewcommand*{\@fnsymbol}[1]{\ensuremath{\ifcase#1\or *\or \dagger\or \ddagger\or
   \mathsection\or \mathparagraph\or \|\or **\or \dagger\dagger
   \or \ddagger\ddagger \else\@ctrerr\fi}}
\author{Ajeet Kumar \and Subhamoy Maitra \and Somjit Roy}
\institute{
A. Kumar \at
Applied Statistics Unit, Indian Statistical Institute, Kolkata, India,
\email{ajeetk52@gmail.com}
\and
S. Maitra \at
Applied Statistics Unit, Indian Statistical Institute, Kolkata, India,
\email{subho@isical.ac.in}
\and
S. Roy \at
Department of Statistics, University of Calcutta, Kolkata, India,
\email{somjit.roy2001@gmail.com}
}
\authorrunning{\textsc{Ajeet Kumar}, \textsc{Subhamoy Maitra} \textsc{and Somjit Roy}}
\begin{document}
\maketitle
\begin{abstract}
In dimension $d$, Mutually Unbiased Bases (MUBs) are a collection of orthonormal bases over $\mathbb{C}^d$ such that for any two vectors $v_1, v_2$ belonging to different bases, the dot or scalar product $|\braket{v_1|v_2}| = \frac{1}{\sqrt{d}}$. The upper bound on the number of such bases is $d+1$.  Construction methods to achieve this bound are known for  cases when $d$ is some power of prime. The situation is more restrictive in other cases and also when we consider the results over real rather than complex. Thus, certain relaxations of this model are considered in literature and consequently Approximate MUBs (AMUB) are studied. This enables one to construct potentially large number of such objects for $\mathbb{C}^d$ as well as in $\mathbb{R}^d$. In this regard, we propose the concept of Almost Perfect MUBs (APMUB), where we restrict the absolute value of inner product $|\braket{v_1|v_2}|$ to be two-valued, one being 0 and the other $ \leq \frac{1+\mathcal{O}(d^{-\lambda})}{\sqrt{d}}$, such that $\lambda > 0$ and the numerator $1 + \mathcal{O}(d^{-\lambda}) \leq 2$. Each such vector constructed, has an important feature that large number of its components are zero and the non-zero components are of equal magnitude. Our techniques are based on combinatorial structures related to Resolvable Block Designs (RBDs). We show that for several composite dimensions $d$, one can construct $\mathcal{O}(\sqrt{d})$ many APMUBs, in which cases the number of MUBs are significantly small. To be specific, this result works for $d$ of the form $(q-e)(q+f), \ q, e, f \in \mathbb{N}$, with the conditions $0 \leq f \leq e$ for constant $e, f$ and $q$ some power of prime. We also show that such APMUBs provide sets of Bi-angular vectors which are of the order of $\mathcal{O}(d^{\frac{3}{2}})$ in numbers, having high angular distances among them. Finally, as the MUBs are equivalent to a set of Hadamard matrices, we show that the APMUBs are so with the set of Weighing matrices. 
\end{abstract}
\keywords{Almost Perfect Mutually Unbiased Bases, \ Combinatorial Design, \ Hadamard Matrices, \ Quantum Information Theory, \ Resolvable Block Design, \ Weighing Matrix.}
\subclass{81P68}
\section{Introduction}
\label{Introduction}
Mutually Unbiased Bases (MUBs) received serious attention in Quantum Information Processing, as they are useful in different aspects of Quantum  Cryptology and Communications like Quantum Key Distribution (QKD), Teleportation, Entanglement Swapping, Dense Coding, Quantum Tomography, etc. (see~\cite{Durt2010} and the references therein). For any finite dimensional Hilbert space $\mathbb{C}^d$, the number of MUBs is bounded by $d+1$. However, in spite of intense research for several decades, these could be constructed only when $d$ is some prime power~\cite{Ivonovic1981,Wootters1989,Bandyopadhyay2002,Klappenecker2003,Sulc2007}. For a specific dimension $d$, constructing a larger number of MUBs (upper bounded by $d+1$) is one of the most challenging problems in Quantum Information Theory. 

Various mathematical tools have been used to construct MUBs, among which noteworthy being the use of finite fields~\cite{Wootters1989,Klappenecker2003} and maximal set of commuting bases~\cite{Bandyopadhyay2002}. For dimensions which are not power of primes, constructing large number of MUBs still remains elusive. This is the reason, various kinds of Approximate MUBs have been constructed using character sums over Galois Rings or Galois Fields~\cite{AMUB-Shparlinski20062006,ASIC-Klappenecker2005,AMUB-MixedCharacterSum,AMUB-CharacterSum,AMUB-FrobeniusRing,AMUB-GaloisRing,ARMUB-fromComplexAMUB}, combinatorial design ~\cite{AK20,ak22} and computational search~\cite{AK22}.

When MUBs are constructed over $\mathbb{R}^d$, we get Real MUBs. They have interesting connections with Quadratic Forms~\cite{QuadraticForm}, Association Schemes~\cite{ReMUB-AssociationSchemes,AssociationScheme-Coding}, Equi-angular Lines,  Equi-angular Tight Frames over $\mathbb{R}^d$~\cite{ReMUB-FusionFrames}, Representation of Groups~\cite{ReMUB-GroupRepresentation}, Mutually Unbiased Real Hadamard Matrices, Bi-angular vectors over $\mathbb{R}^d$~\cite{Holzmann2010,Kharaghani2018,Best2015} and Codes ~\cite{Calderbank1997}. As we have elemented out earlier, large number of Real MUBs are non-existent for most of the dimensions~\cite{boykin2005real}. In fact only for $d = 4^s , s > 1$, we have $d/2 + 1$ many MUBs, whereas for most of the dimensions $d$, which are not perfect square, we have at best only $2$ Real MUBs~\cite{boykin2005real}. In view of this, various attempts have been made to construct Approximate Real MUBs (ARMUBs) which are available in literature~\cite{AK20,ak22,ARMUB-fromComplexAMUB}. 

Various efforts have been made to explore connections between MUBs and geometrical objects such as {polytopes} and {projective planes}~\cite{MUB-Polytopes,MUB-CompPolytopes,MUBs-ProjectivePlanes,MUBs-HjelmslevGeometry,GaloisUnitaries-MUB}. Since the known methods for the construction of MUBs provides complete sets only when $d$ is some power of prime, there are conjectures related to the existence of complete sets of MUBs and finite projective plane, which are also currently known to exist only for prime power orders. If $d = p_1^{n_1} p_2^{n_2} \ldots p_s^{n_s}$, then the lower bound on the number of MUBs is $p_r^{n_r}  +1$ where $p_r^{n_r} = \min\{{p_1^{n_1}, p_2^{n_2}, \ldots,p_s^{n_s}}\}$. Thus, constructing a large number of MUBs for any composite dimension has proven to be elusive even over $\mathbb{C}^d$. In fact, the number of such bases is very small when we consider the problem over the real vector space $\mathbb{R}^d$ (see~\cite{boykin2005real}). For discussions on the basics and open problems related to MUBs and the approximate version, one may refer to our earlier works~\cite{AK20,ak22} and the references therein. 

Towards constructing the approximate MUBs over $\mathbb{C}^d$, certain techniques were proposed in~\cite{AMUB-Shparlinski20062006,ASIC-Klappenecker2005,AMUB-CharacterSum}. However, such techniques cannot be applied in construction of Real MUBs. In this direction, we could propose a construction in~\cite{AK20} for Approximate Real MUBs using Real Hadamard matrices. It has been shown in~\cite{AK20} 
that $\frac{\sqrt{d}}{4}+1$ ARMUBs with maximum value of the inner product as $\frac{4}{\sqrt{d}}$ could be achieved for $d = (4q)^2$, where $q$ is a prime. In~\cite{ak22}, we have shown that such results can be generalized as well as improved to a great extent using Resolvable Block Designs (RBDs). The earlier result of~\cite{AK20} could be generalized in~\cite{ak22} for $d = sq^2$, where $q$ is a prime power. Further, the parameters could be improved too in~\cite{ak22}. It has been shown in~\cite{ak22} that for $d = q(q+1)$, where $q$ is a prime power and $q \equiv 3 \bmod 4$, it is possible to construct $\lceil{\sqrt{d}}\rceil = q+1$ many ARMUBs with maximum value of the inner product upper bounded by $\frac{2}{\sqrt{d}}$, between vectors belonging to different bases. Therefore, the improvement in the result of~\cite{ak22} is two-fold. First, the number of MUBs is greater and second, the maximum of the inner product values is lower compared to~\cite{AK20}. As these were achieved by several kinds of combinatorial designs, we explore this idea further. One important feature of all our construction techniques~\cite{AK20,ak22} was that, all the components of the basis vectors so constructed is either zero or of constant magnitude. Hence taking the normalizing factor outside the vectors would render the numerical value of the components as zero or of unit magnitude, which is like Weighing matrices~\cite{weigh1} -- a generalization of Hadamard matrices. Finally, note that the construction of~\cite{ak22} provides the vectors which are sparse and this property is inherited in this paper too. 

Approximate MUBs have been defined in various manners in the state of the art literature. The cue has been taken from the two initial papers~\cite{ASIC-Klappenecker2005,AMUB-Shparlinski20062006}. Although the work of~\cite{ASIC-Klappenecker2005} is related to Approximate SIC POVM, the definition of ``Approximate" has been carried over to MUBs as well. Various mathematical meanings of approximations have been used in relaxing the condition on absolute value of the inner product between two vectors say $\ket{v_1}, \ket{v_2}$. For example, in these two papers~\cite{ASIC-Klappenecker2005,AMUB-Shparlinski20062006}, we get references to $|\braket{v_1|v_2}|$ bounded by  $\frac{1+o(1)}{\sqrt{d}}$, $\frac{2+o(1)}{\sqrt{d}}$, $\mathcal{O}(\frac{log(d)}{\sqrt{d}})$,  $\mathcal{O}(\frac{1}{\sqrt[4]{d}})$, and $\mathcal{O}(\frac{1}{\sqrt{d}})$ etc. Subsequent researchers investigating Approximate MUBs have also adopted similar mathematical definition~\cite{AMUB-CharacterSum,AMUB-MixedCharacterSum,AMUB-FrobeniusRing,AMUB-GaloisRing,ARMUB-fromComplexAMUB}.

One may note that each MUB in the space $C^{d}$ consists of $d$ orthogonal unit vectors which, collectively, can be thought of as a unitary $d \times d$ matrix. Two (or more) MUBs thus correspond to two (or more) unitary matrices, one of which can always be mapped to the identity $I$ of the space $C^{d}$, using a unitary transformation. For example, suppose we have $r$ many MUBs $\{M_{1}, M_{2}, M_{3}, \dots, M_{r}\}$ in $C^{d}$ where $r \leq d+1$ and also we can thought them as a $r$ numbers of $d \times d$ unitary matrices.
If we multiply $M_{1}^{-1}$ to each of the matrices at right, then one can obtain $\{I, M_2 M_{1}^{-1}, M_3 M_{1}^{-1}, \dots, M_r M_{1}^{-1}\}$ as the transformed set of MUBs. As the inverse of any unitary matrix is equal to its conjugate transpose, to obtain $M_j M_{i}^{-1}$, for $i \neq j$,
we are considering inner products of each row of the two matrices in the set of MUBs. Thus, the modulus of each element of the product matrix will be $\frac{1}{\sqrt{d}}$. Taking $\frac{1}{\sqrt{d}}$ common, the modulus of each of the elements will be 1, i.e., we will have complex Hadamard matrices. The result will be similar if we multiply the inverse from the left too. In this regard, we now consider the Weighing matrices.

\begin{definition}
A square matrix of order $d$ and weight $w$ is called a (complex) weighing matrix, denoted by $W(w,d)$, if its elements belong to the set $\left\{0, \frac{\exp(i \theta)}{\sqrt{w}} \right\}$ with $\theta \in \mathbb{R}$, and it satisfies $W(w,d)^\dag W(w,d) = I$. If the elements are confined to the set $\left\{0, \pm \frac{1}{\sqrt{w}}\right\}$, it becomes a real weighing matrix.
\end{definition}
 
The use of complex weighing matrices in quantum error correcting codes has been explored in \cite{egan2023survey}. Furthermore, the connection between real weighing matrices and classical codes are also investigated, as evident from the studies such as \cite{crnkovic2021lcd,jungnickel1999perfect,jungnickel2002perfect,arasu2001self}, which also delve into applications involving spherical codes \cite{nozaki2015weighing}. For more analysis, one can refer to \cite{koukouvinos1997weighing} and the references therein. As we will proceed further, the definition of APMUBs will be presented and we will show that the weighing matrices will related to APMUBs as the Hadamard matrices relate to MUBs.

Let us now summarize the contribution of this paper and outline its presentation. 
\subsection{Organization and Contribution}
\label{Contribution}
We begin with Section~\ref{Background} to present a background of related combinatorial objects. Then towards the constructions, in Section~\ref{Lemmas}, we show bounds on the values of certain parameters, expressed in terms of the block size $k$ and number of elements in the RBD, i.e., $d$. In this regard, we define a combinatorial quantity $\mathcal{A}(d,k,\mu)$, relevant to our analysis, which can be of its own independent interest.  Thereafter in Lemma~\ref{MOLS-RBD} we describe an interesting class of RBDs which can be constructed from MOLS(s) yielding $\mu = 1$ and $r = N(s) +2$. Here $\mu$ is the maximum number of common elements between any pair of blocks from different parallel classes, as we will explain in the following background section. A constructive proof to obtain the same from MOLS(s) has also been given and in Lemma~\ref{RBD-MOLS}. We further show that the converse of Lemma~\ref{MOLS-RBD} is also true. The results are further explained with illustrative examples. 

Then we consider the Almost Perfect MUBs (APMUBs) and some generic ideas of construction in Section~\ref{apmub}.
The basic motivation and its relationship with Bi-angular vectors are presented in Section~\ref{Motivation}. 
In Section~\ref{TheoreticalAnalysis}, we analyze certain properties of the AMUBs which can be constructed using RBDs having blocks of constant size. In this direction, we consider RBD$(X,A)$ with $|X| = d = k \cdot s = (q-e)(q+f)$ and $A$ with resolution $r$, where each block is of size $(q-e)$. We study the asymptotic behaviours of the parameters of AMUBs thus generated. It is also shown that our construction can provide APMUBs only when $\mu =1$, therefore putting strong constraints over the nature of the RBDs required for this kind of constructions.

Section~\ref{ConsAPMUB} contains our algorithms towards constructing RBDs that can be consequently used for the obtaining APMUB's with parameters that could be achieved for the first time. We first show that whenever the dimension $d$ is a composite number and can be expressed as $k \cdot s, k\leq s$, such that $\beta = \sqrt{\frac{s}{k}}\leq 2$, one can construct $N(s)+1$ many APMUBs, where $N(s)$ is the number of MOLS$(s)$. We refer to this as the MOLS Lower Bound Construction for APMUBs. Since a composite number $d$ can be factored in multiple ways ensuring $\beta \leq 2$, there can be more than one  MOLS Lower Bound Constructions for a dimension $d$. It is to be noted that if $s = q$, some power of prime, then $N(q) = q-1$. Hence in such situations we get $q$ many APMUBs. The best known asymptotic bound for $N(s)$ is given by $N(s) \rightarrow \mathcal{O}(s^{\frac{1}{14.8}})$, which generally results into a  small number of APMUBs. We improve this significantly. 

In this direction we show that when $d = (q-e)(q+f), e, f \in \mathbb{N}$ and $e \geq f$, where $q$ is some power of prime, we can obtain $\mathcal{O}(q)$ many APMUBs. That is, when $e, f$ are constant, then the number of such APMUBs is $\mathcal{O}(\sqrt{d})$.  Illustrative examples to describe the above construction have also been provided. This is explained in Section~\ref{case2}.
  
We conclude with the following two comments in Section~\ref{conclusion} towards future direction. First, the dimensions $d$ for which we obtain $O(\sqrt{d})$ many MUBs have the density of the order of the primes in the space of natural numbers. Secondly, the constructions of APMUBs directly translates to the construction of Bi-angular vectors. The other direction may also be explored. It needs further disciplined effort to explore this part too.  

\section{Background}
\label{Background} 
Let us start with the basic definition related to Mutually Unbiased Bases.
\begin{definition}
Consider two orthonormal bases, 
$$M_l = \left\{\ket{\psi_1^l}, \ket{\psi_2^l}, \ldots, \ket{\psi_{d}^l}\right\} 
\mbox{ and } M_m = \left\{\ket{\psi_1^m}, \ket{\psi_2^m}, \ldots, \ket{\psi_{d}^m}\right\},$$ 
in $d$-dimensional complex vector space, i.e., $\mathbb{C}^{d}$. These two bases will be called {Mutually Unbiased} if we have
\begin{equation}
    \left|\braket{\psi_i^l | \psi_j^m}\right| = \displaystyle\frac{1}{\sqrt{d}}, \ \forall i, j \in \left \{1, 2, \ldots, d\right\} .   
\end{equation}
\end{definition}
The set $\mathbb{M} = \left\{M_1, M_2, \ldots, M_r\right\}$ consisting of such orthonormal bases will form MUBs of size $r$, 
if every pair in the set is mutually unbiased.

When the conditions among two different bases are relaxed such that $\left|\braket{\psi_i^l | \psi_j^m}\right|$ can take values other than $\frac{1}{\sqrt{d}}$, then we consider the approximate version of 
this problem.   This is due to the fact that, for most of the dimensions, which are not power of some primes, obtaining a large number of MUBs reaching the upper bound is elusive. In this context we denote $\Delta$ for the set containing different   values of $\left|\braket{\psi_i^l | \psi_j^m}\right|$ for $l \neq m$. In this initiative, extending the ideas of~\cite{ak22}, we exploit the well known combinatorial object, the Resolvable Block Design (RBD), towards construction of Approximate MUBs with improved parameters. One may refer to the book ~\cite{stinson2007combinatorial} for more details on RBDs, and we present certain definitions in this regard. 
\begin{definition}
A combinatorial block design is a pair $(X, A)$, where $X$ is a set of elements, called elements, and $A$ is a collection of non-empty subsets of $X$, called blocks. A combinatorial design is called {simple}, if there is no repeated block in $A$.
\end{definition}
Generally all the combinatorial designs are assumed to be simple, i.e., they do not have any repeated blocks. 
\begin{definition}
\label{def3x}
A combinatorial design $(X,A)$ is a $t-(d,k,\lambda)$ design if each block in $A$ is of size $k$, and that any set of $t$ elements from $X$, appears as subset of exactly $\lambda $ blocks in $A$. Note that here $t, d, k$ and $\lambda$ are positive integers with $1< k< d$.
\end{definition}

\subsection{Resolvable Block Design (RBD)}
Resolvable Block Design (RBD) is a special kind of Combinatorial design, where the set $A$ can be partitioned into parallel classes which are called resolutions of $A$. Initially RBD was defined by R. C. Bose in the context of Balanced Incomplete Block Design~\cite{Bose1939,Bose1942,Bose1947}. Later various generalizations could be achieved as explained in varied literatures~\cite{Shrikhande1965,Kageyama1976,Patterson1976,John1999}. For the purpose of our paper we consider the following simple definition of RBD as presented in our previous paper ~\cite{ak22}.

\begin{definition}
Combinatorial design $(X, A)$, is called a Resolvable Block Design (RBD), if $A$ can be partitioned into $r \geq 1$ parallel classes, called resolutions. Where a parallel class in design $(X, A)$ is a subset of the disjoint blocks in $A$ whose union is $X$. 
\end{definition} 

There is a special kind of RBD called Affine Resolvable BIBD (ARBIBD)~\cite{Bose1942,Bose1947,Shrikhande1976} (see also~\cite[Chapter 5]{stinson2007combinatorial}). It is well known that whenever $q$ is some power of a prime, one can construct $(q^2,q,1)$ ARBIBD. An Affine plane of order $q$ is an example of this. Here $|X| = q^2$, and $A$ consists of $q(q+1)$ blocks, which can be resolved into $q+1$ many parallel classes. Each parallel class consists of $q$ many blocks of constant size $q$. Most importantly, any pair of blocks from different parallel classes has exactly one element in common. Affine Planes are known only when $q$ is power of a prime. For detail, one may refer to~\cite[Sections 2.3, 5.2, 5.3, 6.4]{stinson2007combinatorial}.
 
Let us define the notation that we will use in this connection. For RBD$(X,A)$, with $|X| = d$, we will indicate the elements (also called elements) of $X$ by simple numbering, i.e., $X = \{1, 2, 3, \ldots, d\}$. Here $r$ will denote the number of parallel classes in RBD and parallel class will be represented by $P_1, P_2, \ldots, P_r$. The blocks in the $l^{th}$ parallel class will be represented by $\{b_1^l, b_2^l, \ldots, b_s^l \}$, indicating that the $l^{th}$ parallel class has $s$ many blocks.  Since in our entire analysis we will be using RBDs with constant block size, let us denote the block size by $k$. Further, we denote the number of blocks in a parallel class of RBD by $s$. Since in our analysis we are making of use of RBDs with constant block size, hence each parallel class will always have $s$ many blocks and $|X| = d = k \cdot s$.  The notation $b_{ij}^l $ would represent the $j^{th}$ element of the $i^{th}$ block in the $l^{th}$ parallel class. Further, the notation $b_{i}^l $ would represent  $i^{th}$ block of $l^{th}$ parallel class. Note that $b_{ij}^l \in X$. In every block, we will arrange the elements in increasing order, and we will follow this convention throughout the paper, unless mentioned specifically. Thus $b_{ij}^l \leq b_{i,j+1}^l, \ \forall j$. This will be important to revisit when we convert the parallel classes into orthonormal bases. Another important parameter for our construction is the value of  the maximum number of common elements between any pair of blocks from different parallel classes. We denote this positive integer by $\mu$. Note that $\mu \geq 1$ for any RBD, with $r \geq 2$. One may further refer to Lemma \ref{RBD1} of Section \ref{Lemmas} in this regard. 

\subsection{Mutually Orthogonal Latin Square (MOLS)}
A Latin Square of order $s$ is an $s \times s$ array, and a cell of the array consists of a single element from a set $Y$, such that $|Y| = s$. Every row of the Latin Square is a permutation of  the elements of set $Y$ and every column of the Latin square is also  permutation of the elements from the set $Y$. For more details one may refer to~\cite[Definition 6.1]{stinson2007combinatorial} as well as~\cite[Example 1.1]{HandBook}. A pair of Latin Squares $(L_1,L_2)$ of same order and having entries from same set $Y$ (or, a different set having same number of elements) is called Mutually Orthogonal, if in the ordered pair  $\{(Y,Y)\} = \{((L_1)_{ij},(L_2)_{ij})\}$, every  pair $x, y \in Y$ appears exactly once. That is, if two of the Latin Squares are superimposed, and the resulting entries in each cell is written as ordered pairs, then every  $x,y \in Y$ appears exactly once in the cell.  Further, if there is a set of $w$ many Latin Squares, say $\{L_1,L_2,\ldots,L_w\}$, each of order $s$,  such that, every pair of Latin Squares is orthogonal, then the set is called Mutually Orthogonal Latin Square of order $s$, which we denote as $w$-MOLS$(s)$. 

Let $N(s)$ denote the maximal value of $w$ such that, there are $w$ many MOLS of order $s$~\cite{HandBook,Colbourn2001},~\cite[Chapter 6]{stinson2007combinatorial}. While using the numerical values of $N(s)$, in subsequent examples in this paper, we will use the currently known values of $N(s)$ from~\cite[Table 3.87,  page 176]{HandBook}. Note that these are not always the actual values of $N(s)$ (except when $s$ is some power prime or of small order) as the exact value of $N(s)$ is still an open question in most of the cases.  It is known that, $N(s) \leq s - 1$ $\forall \ s$. When this bound is attained, we say that there is a complete set of Mutually Orthogonal Latin Squares of order $s$. The construction for complete sets of MOLS$(s)$ is known  when $s$ is some power of prime~\cite[Section 6.4]{stinson2007combinatorial}. When $s$ is not a power of prime, $N(s)$ is much smaller than $s-1$. A table with the largest known values for $w$ is presented in~\cite{HandBook} for $s < 10000$.

It is known that there exists a constant $n_0$, such that for all $s \geq n_0$, we have, $N(s) \geq \frac{1}{3} s^{\frac{1}{91}}$~\cite{Chowla1960}, which was later improved by Wilson~\cite{Wilson1974} to $N(s) \geq s^{\frac{1}{17}}$.  Further, it was shown in~\cite[Section 4]{wocjan2004new} that the exponent can be lower bounded by $\frac{1}{14.8}$. One may note that $N(s) \rightarrow \infty$ as $s \rightarrow \infty$ in general, but for the finite cases only when $s$ is some power of prime then $N(s)= s-1$, else it is considerably small. In this regard, one may also note that the Affine Planes of order $q$ are equivalent to $(q-1)$ MOLS$(q)$~\cite[Theorem 6.32]{stinson2007combinatorial}. For a brief survey on construction of MOLS, one may refer to~\cite{Colbourn2001}. Note that, in connection of RBD, we used $s$ to indicate the number of blocks in a parallel class, whereas in the context of Latin Square, we use $s$ to indicate the order of Latin Square.

\section{Some Important Technical Results}
\label{Lemmas}
Let us now consider a counting of combinatorial objects that is relevant to us. 

\begin{definition}
Let ${\cal{T}}(d,k,\mu)$, $0 \leq \mu < k < d$ be the maximum number of subsets each of size $k$, that can be constructed from $d$ distinct objects, such that, between any two different subsets there is a maximum of $\mu$ objects in common. 
\end{definition}
First we should relate with the error correcting codes, as this can be seen as  the maximum number of codewords of the binary constant weight codes of length $d$ and weight $k$ with minimum distance $2(k - \mu)$. For more details in this regard one may refer to~\cite[Theorem 2.3.6]{huff}, but we will only restrict here to some technical results only.
One may immediately note that ${\cal{T}}(d, k,0) = \left\lfloor\frac{d}{k} \right\rfloor$, and ${\cal{T}}(d, k,k-1) = \binom{d}{k}$.
For arbitrary $d, k, \mu \in \mathbb{N}$, the following result provides an estimate of ${\cal{T}}(d,k,\mu)$.

\begin{lemma}
\label{A-d-k-mu}
${\cal{T}}(d,k,\mu) \leq \left\lfloor \frac{\binom{d}{\mu+1}}{\binom{k}{\mu+1}} \right\rfloor = \left\lfloor \frac{d! (k-\mu-1)!}{k! (d-\mu-1)!} \right\rfloor$. The upper bound of $\left\lfloor \frac{d! (k-\mu-1)!}{k! (d-\mu-1)!} \right\rfloor$ is achieved whenever $(\mu+1)-(d,k,1)$ design exists and in such cases ${\cal{T}}(d,k,\mu)$ is the same as the number of blocks in $(\mu+1)-(d,k,1)$ design as in Definition~\ref{def3x}.
\end{lemma}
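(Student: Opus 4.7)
The plan is to use a standard double-counting argument on incidences between $(\mu+1)$-subsets of $X$ and the blocks. Let $\mathcal{B} = \{B_1, B_2, \ldots, B_N\}$ be any family of $k$-subsets of a $d$-set $X$ achieving the maximum, so $N = {\cal{T}}(d,k,\mu)$, and count the set
\[
I = \bigl\{(S, B_i) : S \subseteq B_i,\ |S| = \mu + 1\bigr\}.
\]

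First I would count $|I|$ by blocks: each $B_i$ has size $k$, so it contributes exactly $\binom{k}{\mu+1}$ pairs, giving $|I| = N \cdot \binom{k}{\mu+1}$. Next I would count $|I|$ by $(\mu+1)$-subsets: the key observation is that the defining property $|B_i \cap B_j| \leq \mu$ for $i \neq j$ means no $(\mu+1)$-subset $S$ can lie in two distinct blocks (otherwise those two blocks would share at least $\mu+1$ elements). Hence each $S \in \binom{X}{\mu+1}$ contributes at most one pair to $I$, giving $|I| \leq \binom{d}{\mu+1}$. Combining the two bounds yields
\[
N \;\leq\; \frac{\binom{d}{\mu+1}}{\binom{k}{\mu+1}} \;=\; \frac{d!\,(k-\mu-1)!}{k!\,(d-\mu-1)!},
\]
and since $N$ is a positive integer, we may take the floor.

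For the equality statement, assume a $(\mu+1)\text{-}(d,k,1)$ design $(X,\mathcal{A})$ exists, as in Definition~\ref{def3x}. By the very definition every $(\mu+1)$-subset of $X$ lies in exactly $\lambda = 1$ block, so the block count is $|\mathcal{A}| = \binom{d}{\mu+1}/\binom{k}{\mu+1}$ (in particular this ratio is an integer). Moreover, any two distinct blocks of $\mathcal{A}$ intersect in at most $\mu$ elements: if they shared some $(\mu+1)$-subset $S$, then $S$ would be covered by two blocks, contradicting $\lambda = 1$. Thus $\mathcal{A}$ is an admissible family for ${\cal{T}}(d,k,\mu)$, achieving the upper bound and showing the bound is tight.

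The argument is essentially the Fisher-type packing bound, so I do not anticipate a real obstacle; the only subtlety worth stating cleanly is why $|B_i \cap B_j| \leq \mu$ immediately forbids sharing a common $(\mu+1)$-subset, which is a one-line contrapositive but is the logical hinge of the whole proof. The connection to binary constant-weight codes of length $d$, weight $k$, and minimum distance $2(k-\mu)$ (as noted in the paragraph preceding the lemma) gives the same inequality via the Johnson bound, providing an independent sanity check.
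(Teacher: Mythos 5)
Your proposal is correct and follows essentially the same argument as the paper: the paper also observes that the $(\mu+1)$-subsets contained in distinct blocks form disjoint families (which is your incidence count read off by subsets), and handles the equality case by noting that a $(\mu+1)$-$(d,k,1)$ design has pairwise block intersections of size at most $\mu$ and exactly $\binom{d}{\mu+1}/\binom{k}{\mu+1}$ blocks. No gaps; the presentation differs only cosmetically.
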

\begin{proof}
Given a set of $d$ distinct elements, we like to construct the maximum number of subsets each of size $k$, such that any two subsets has $\mu$ elements in common. Let us label these blocks as $\{b_1, b_2, \ldots b_r\}$, with $r = {\cal{T}}(d,k,\mu)$. 

Now consider all the $(\mu+1)$-element subsets of $d$ distinct elements. They will be $\binom{d}{\mu+1}$ in numbers. Now consider the blocks $b_i$ and $b_j$. Since there are a maximum of $\mu$ elements in common, any $(\mu+1)$-element subset of $d$ elements cannot exist, which occur in both the blocks $b_i$ and $b_j$. Since each block $b_i$ is of size $k$, the number of $(\mu+1)$-element subsets which can be constructed by the $k$ elements in $b_i$ is $\binom{k}{\mu+1}$. Let us denote this set by $S_i$. Similarly for block $b_j$ the number of $(\mu+1)$-element subsets which can be constructed with its $k$ elements  is $\binom{k}{\mu+1}$. Let us denote this set by $S_j$. We have already seen $S_i \cap S_j = \phi$, else there would be $\mu+1$ element common between $b_i$ and $b_j$. Now since there are $r$ such blocks each of size $k$, hence $|S_1|+|S_2|+\ldots |S_r| = r \binom{k}{\mu+1}$. This must be less than or equal to  $\binom{d}{\mu+1}$, which is the maximum possible $(\mu +1)$-element subsets that can be constructed from $d$ distinct elements. This implies $r \binom{k}{\mu+1} \leq \binom{d}{\mu+1} \Rightarrow {\cal{T}}(d,k,\mu)= r \leq \left\lfloor \frac{\binom{d}{\mu+1}}{\binom{k}{\mu+1}} \right\rfloor = \left\lfloor \frac{d! (k-\mu-1)!}{k! (d-\mu-1)!} \right\rfloor$. 

To see the second part of the lemma, note that $t-(d,k,1)$ design is a design $(X,A)$ where $A$ contains the subsets of $X$ called blocks, such that $|X| = d$ and each block contains exactly $k$ elements. Every $t$-element subset of $X$ is contained in exactly one block. Hence this implies that any two blocks of the design has maximum $t-1$ elements in common. Thus it immediately follows that if $(\mu+1)-(d,k,1)$ design exists, then blocks of the design satisfies the property of ${\cal{T}}(d,k,\mu)$, and since number of blocks in $(\mu+1)-(d,k,1)$ design is $\frac{\binom{d}{\mu+1}}{\binom{k}{\mu+1}}$  ~\cite[Chapter 9, Theorem 9.4 and the following observation]{stinson2007combinatorial},  which is exactly the upper bound of ${\cal{T}}(d,k,\mu)$ as proven above.
\end{proof}

The upper bound of ${\cal{T}}(d,k,\mu)$ is achieved whenever a $(\mu+1)-(d,k,1)$ design exists, which is known for many values of $0\leq \mu < k < d$. This implies that the bound for ${\cal{T}}(d,k,\mu)$ given by the above result is tight. However, getting an exact value/expression  of ${\cal{T}}(d,k,\mu)$ appears to be an open and challenging problem.
 
To construct the AMUBs, our focus has been on RBDs with constant block size. In this connection we now focus on few results that are relevant to our constructions of APMUBs in Section \ref{ConsAPMUB}. Following is a lemma related to an RBD providing a bound for $\mu$ and $r$ in terms of the block size and the number of blocks in parallel classes where, as defined previously, $\mu$ is the maximum number of common elements between any pair of blocks from different parallel classes and $r$ is the number of parallel classes.  
\begin{lemma}
\label{RBD1}
Consider an RBD$(X,A)$ with $|X| = d = k \cdot s$ where $ k, s \in \mathbb{N}$, consisting of $r>1$ parallel classes, each having blocks of size $k$. Then $\mu \geq \lceil \frac{k}{s} \rceil$, where $\mu$ is the maximum number of common elements between any pair of blocks from different parallel classes and $r \leq {\cal{T}}(d-1,k-1,\mu-1)$. Further, if $\mu = 1$, then $r \leq \lfloor\frac{d-1}{k-1}\rfloor = s + \lfloor\frac{s-1}{k-1}\rfloor$.
\end{lemma}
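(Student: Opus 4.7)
The plan is to prove the three claims in sequence, each using a standard counting or pigeonhole argument.

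For the lower bound $\mu \geq \lceil k/s \rceil$, I would fix any block $b$ in some parallel class $P_1$ and look at its intersections with the $s$ blocks of a second parallel class $P_2$. Since the blocks of $P_2$ partition $X$, the $k$ elements of $b$ get distributed among these $s$ disjoint blocks, so by pigeonhole at least one block of $P_2$ must share at least $\lceil k/s \rceil$ elements with $b$. This gives the required lower bound on $\mu$.

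For $r \leq {\cal{T}}(d-1, k-1, \mu-1)$, the key idea is to fix a point $x \in X$ and pass to the "derived" structure at $x$. Since each parallel class partitions $X$, the element $x$ lies in exactly one block of each parallel class, so we obtain $r$ distinct blocks through $x$, one per parallel class. Deleting $x$ from each produces $r$ subsets of $X \setminus \{x\}$ of size $k-1$ inside a ground set of size $d-1$. Any two such derived subsets come from distinct parallel classes and therefore had at most $\mu$ common elements originally; since both contained $x$, their intersection in $X\setminus\{x\}$ has at most $\mu-1$ elements. By the definition of ${\cal{T}}$, the number of such subsets is at most ${\cal{T}}(d-1, k-1, \mu-1)$, yielding $r \leq {\cal{T}}(d-1, k-1, \mu-1)$.

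The final claim for $\mu = 1$ is then just an evaluation: Lemma~\ref{A-d-k-mu} (or the elementary identity) gives ${\cal{T}}(d-1, k-1, 0) = \lfloor (d-1)/(k-1) \rfloor$. Substituting $d = ks$ and writing $ks - 1 = (k-1)s + (s-1)$ rewrites $\lfloor(d-1)/(k-1)\rfloor$ as $s + \lfloor(s-1)/(k-1)\rfloor$, which is the claimed closed form.

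There is no real obstacle here; the only subtlety worth flagging in the write-up is the careful bookkeeping in the derived-design step, namely that the $r$ blocks through $x$ are genuinely distinct (guaranteed by the parallel class structure) and that pairwise intersections drop by exactly one after removing $x$. Everything else is pigeonhole and arithmetic.
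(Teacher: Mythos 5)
Your proposal is correct and follows essentially the same route as the paper: the same pigeonhole/summation argument over the blocks of a second parallel class for $\mu \geq \lceil k/s \rceil$, the same derived-design step of fixing a point $x$, deleting it from the $r$ blocks through it, and invoking the definition of ${\cal{T}}(d-1,k-1,\mu-1)$, and the same arithmetic identity $ks-1=(k-1)s+(s-1)$ for the $\mu=1$ case. No gaps.
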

\begin{proof}
Since $r>1$, consider any pair of parallel classes of RBD say $(P_l, P_m)$. Denote the blocks of $P_l$ be as $b^l_1,b^l_2,\ldots,b^l_s$. Since blocks are of constant size $\Rightarrow |b^l_i|=k$ and the blocks belonging to the same parallel class have no element in common, $\Rightarrow b^l_i \cap b^l_j = \phi \,\, \forall \,i,j= 1, 2, \ldots, s$ and $X = b^l_1 \cup b^l_2 \cup \ldots \cup b^l_s$, $|X| = k\cdot s$. Similar relations will hold for blocks of any other parallel class. 

Consider any block of $P_l$, say $b^l_i$. Since, $\mu = \max_m{|b^l_i \cap b^m_j|}$, we have $b^l_i \cap b^m_j \leq \mu, \ \forall \ j = 1, 2, \ldots, s$.  Since $X =b^m_1 \cup b^m_2 \cup \ldots \cup b^m_s$, hence, $\sum_{j=1}^{s} |b^l_i \cap b^m_j| = |b^l_i| = k$. We also have $\sum_{m=1}^{s} |b^l_i \cap b^m_j| \leq  \sum_{m=1}^{s}\mu = \mu s \Rightarrow k \leq \mu s$. Since $k, s, \mu \in \mathbb{N}$, we get $\mu \geq \lceil \frac{k}{s} \rceil$. This implies minimum value of $\mu = 1$ which is possible only if $k \leq s$ and on the other hand if $k > s$, then minimum value of $\mu = 2$. Thus, for $\mu = 1$, we must have $k \leq s$, i.e., number of blocks must be greater than or equal to the block size of the RBD.
 
To obtain a bound on $r$, fix an element, say $x \in X$. Since the blocks of a parallel class are mutually disjoint and their union is $X$, each parallel class will have exactly one block which will contain $x$. Collect all the blocks that contain the element $x$ and we will obtain a set of $r$ such blocks. Denote this set by $S$. Now, remove $x$ from every block in the set $S$. Hence $S$ will consist of blocks of size $k-1$, the maximum number of common elements between any two blocks in $S$ would be now $(\mu-1)$ and the total number of elements contained in these $r$ blocks will be $\leq (d-1)$. Therefore, we obtain $r \leq {\cal{T}}(d-1,k-1,\mu-1)$. Thus if $\mu = 1$ we have $r \leq {\cal{T}}(d-1,k-1,0) = \lfloor \frac{d-1}{k-1} \rfloor = \lfloor\frac{s \cdot k - 1}{k-1}\rfloor = s+ \lfloor\frac{s-1}{k-1}\rfloor$. 
\end{proof}
We will see that, using our construction method for APMUBs, the necessary condition on RBDs is $\mu=1$. Hence our effort will be to construct RBDs with $\mu = 1$. For this to happen, $k \leq s$ and $r \leq s + \lfloor\frac{s-1}{k-1}\rfloor$. 

\subsection{Results relating to MOLS}
We will now consider a class of Resolvable Block Designs $(X, A)$ such that $|X| = s^2$, which can be constructed from a set of $w$-MOLS$(s)$. Here $A$ consists of blocks having constant size $s$, which can be resolved into $w+2$ many parallel classes, each having $s$ many blocks, such that blocks from two different parallel classes have exactly one element in common. Through the following construction, we explain a simple and direct way to convert a set of $w$ many MOLS$(s)$ into such an RBD$(X, A)$.

\begin{cons}
\label{cons:MOLS-RBD}
To construct an RBD having $w+2$ number of parallel classes from $w$-MOLS$(s)$ using the following steps.
    
\begin{enumerate}
\item Define $M_{ref}$, which is a $s \times s$ array where each cell consists of one of the elements from $X = \{1, 2, \ldots, s^2\}$, as follows:
{\small
\begin{equation*}
 M_{\text{ref}} = \begin{bmatrix}
 1 & 2 & 3 & \ldots & s \\
  s+1 & s+2 & s+3 & \ldots & 2s\\
  2s+1 & 2s+2 & 2s+3 & \ldots & 3s\\
  \vdots & \vdots & \vdots & \ddots & \vdots\\ 
 (s-1)s+1 \,\,&\,\, (s-1)s+2 \,\,&\,\,(s-1)s+3\,\,&\ldots &\, s^2 \\ 
 \end{bmatrix},
  L_k =\begin{bmatrix}
  l_{11}^k & l_{12}^k & \ldots & l_{1s}^k\\\\
    l_{21}^k & l_{22}^k & \ldots & l_{2s}^k \\
    \vdots & \vdots & \ddots & \vdots\\ 
    l_{s1}^k & l_{n2}^k & \ldots &l_{ss}^k
    \end{bmatrix} 
\end{equation*}
}

\item Consider a Latin Square $L_k$, from the set of $w$-$MOLS(s)$. Let  $(L_k)_{ij} = l^k_{ij}$ as indicated above.

\item  Corresponding to the Latin Square $L_k$, construct a parallel class $P_k$ consisting of $s$ disjoint blocks $b^k_t$, each of size $s$ as follows,
    \begin{equation*}
        b_t^k = \{ (M_{ref})_{ij} : l^k_{ij} = t \}, \text{ where } i,j \in \{1, 2, \ldots, s\}.
    \end{equation*}
    
Each row of the the Latin Square is a permutation of $\{1, 2, \ldots, s\}$. Hence, there will be a pair $(i, j)$ in each row for which $l^k_{ij} = t$. Thus, the blocks $P_t^k$ will have a total $s$ elements, one from each row and column of $M_{ref}$. Since $t = \{1, 2, \ldots, s\}$, there will be $s$ blocks. Thus we are essentially collecting all the elements of $M_{ref}$ corresponding to a particular symbol $t$ of $L_k$ in one block $b^k_t$, and together the blocks $b^k_i$ where $i=\{1, 2, \ldots, s\}$ form a parallel class $P_k$.
    
\item Repeat the above step for all Latin Squares in the set of  $w$-$MOLS(s)$, thereby giving $w$ many parallel classes.    
    
\item Construct two more parallel classes, one using the horizontal rows of $M_{ref}$, and other using the vertical rows of $M_{ref}$ as follows:
 \begin{align*}
 P_0 &= \left\{ (1,2,\ldots,s), (s+1,s+2,\ldots, 2s ),\ldots ( (s-1)s+1,(s-1)s+2, \ldots, s^2) \right\}  \\
 P_\infty &=\left\{(1,s+1,\ldots,(s-1)s+1), (2,s+2,\ldots,(s-1)s+2),\ldots,(s,2s,\ldots,s^{2}) \right\}
\end{align*}  
\item The $RBD(X,A)$ with $X = \{1, 2, 3, \ldots, s^2\}$ and $A = \{P_0, P_\infty, P_1, P_2, \ldots, P_w \}$ is the desired outcome.
\end{enumerate}
\end{cons}

\begin{lemma}
\label{MOLS-RBD}
A set of $w$ many MOLS$(s)$ can be used to construct an RBD$(X, A)$ such that $|X| = s^2$, consisting of constant block sizes, each having $s$ elements, that can be resolved into $w+2$ many parallel classes. Here, any two blocks from different parallel classes will have exactly one element in common. 
\end{lemma}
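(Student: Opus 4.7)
The plan is to verify that the six-step construction actually produces an RBD with the advertised properties. Concretely, I need to check (i) that each of $P_0, P_\infty, P_1, \ldots, P_w$ is a parallel class of $X = \{1, 2, \ldots, s^2\}$ (i.e., a partition of $X$ into $s$ disjoint blocks of size $s$), and (ii) that any two blocks drawn from distinct parallel classes intersect in exactly one element. Once both are established, the resulting pair $(X,A)$ is by definition an RBD with the stated parameters.

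For (i), the cases $P_0$ (rows of $M_{\text{ref}}$) and $P_\infty$ (columns of $M_{\text{ref}}$) are immediate from the structure of $M_{\text{ref}}$. For a parallel class $P_k$ obtained from a Latin Square $L_k$, I would argue that $b^k_t = \{(M_{\text{ref}})_{ij} : l^k_{ij} = t\}$ has exactly $s$ entries because the symbol $t$ occurs in each row of $L_k$ precisely once (by the Latin Square property), so there is exactly one pair $(i,j)$ per row with $l^k_{ij}=t$. Distinct symbols $t \neq t'$ give disjoint blocks because each cell of $L_k$ carries a single symbol, and $\bigcup_{t=1}^{s} b^k_t = X$ since every cell is used once.

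For (ii), I would split into three cases. First, a block of $P_0$ (a row $i$ of $M_{\text{ref}}$) and a block of $P_\infty$ (a column $j$) intersect in the single element $(M_{\text{ref}})_{ij}$. Second, a block of $P_0$ (row $i$) and a block $b^k_t$ of $P_k$ share the unique element $(M_{\text{ref}})_{ij^*}$, where $j^*$ is the unique column of row $i$ in which the symbol $t$ appears in $L_k$; the analogous argument works for $P_\infty$ versus $P_k$. Third, and this is the main technical step, for two Latin-Square-derived classes $P_{k_1}$ and $P_{k_2}$ with $k_1 \neq k_2$, I would use the mutual orthogonality of $L_{k_1}$ and $L_{k_2}$: when the two squares are superimposed, every ordered pair $(t_1,t_2) \in \{1,\ldots,s\}^2$ appears in exactly one cell $(i,j)$. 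Hence $b^{k_1}_{t_1} \cap b^{k_2}_{t_2} = \{(M_{\text{ref}})_{ij}\}$, a singleton.

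The main obstacle, such as it is, is the last case: I need to be careful to invoke mutual orthogonality in the exact form ``each ordered pair of symbols appears exactly once'' so as to conclude not just $\lvert b^{k_1}_{t_1} \cap b^{k_2}_{t_2} \rvert \leq 1$ but equality. The remaining cases are bookkeeping on rows, columns, and symbols of $M_{\text{ref}}$ and $L_k$. Counting $w$ Latin-Square classes plus the two classes $P_0, P_\infty$ yields the $w+2$ parallel classes claimed, completing the proof.
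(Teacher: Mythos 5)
Your proposal is correct and follows essentially the same route as the paper: it verifies the blocks $b^k_t$ form parallel classes via the row-permutation property of each Latin square, and then checks the three intersection cases ($P_0$ vs $P_\infty$, $P_0$ or $P_\infty$ vs a Latin-square class, and two Latin-square classes via mutual orthogonality), exactly as the paper's proof does. Your treatment is if anything slightly more explicit about why each $P_k$ is a partition of $X$, which the paper leaves largely to the construction description.
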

\begin{proof} 
We claim that, any pair of  blocks from different parallel classes $\{P_0 ,P_\infty ,P_1,P_2,\ldots P_w  \}$ constructed above has exactly one element in common. Consider the $t^{th}$ and $s^{th}$ blocks of $k^{th}$ and $m^{th}$ parallel classes respectively. Then
    \begin{equation*}
        P_t^k \cap P_s^m = \{ (M_{ref})_{ij} : l^k_{ij} =t \} \cap  \{ (M_{ref})_{ij} : l^m_{ij} = s\}.
    \end{equation*}
Now since $L^k$ and $L^m$ are the orthogonal Latin Squares, there will be exactly one pair $(i,j)$ such that, $(L^k)_{ij} = t$ and $(L^m)_{ij} =s$. Hence exactly one element will be common between the blocks $P_t^k $ and $P_s^m$.
    
From the construction of $P_0$ and $P_\infty$, it is clear that any block has exactly one element in common. Since any block of $P_t^k$ is picking one element from each row and each column of $M_{ref}$, each block of $P_t^k$ will have exactly one element in common with blocks of $P_0$ and $P_\infty$, which are collection of horizontal rows $(P_0)$ and vertical rows $(P_\infty)$ of $M_{ref}$. 
\end{proof}

We will now sketch the idea of the above method with a simple example to convert a $2$-MOLS$(5)$ into $4$ parallel classes. 
\begin{example}\label{Example-Lemma3}
Let us consider the $2$-MOLS$(5)$ and $M_{ref}$ as follows. 

{\tiny
$ L_1 =\begin{bmatrix}
    5 & 1 & 2 & 3 & 4\\
    1 & 2 & 3 & 4 & 5\\
    2 & 3 & 4 & 5 & 1\\
    3 & 4 & 5 & 1 & 2\\
    4 & 5 & 1 & 2 & 3\\
    \end{bmatrix},
    L_2 =
    \begin{bmatrix}
    5 & 1 & 2 & 3 & 4\\
    2 & 3 & 4 & 5 & 1\\
    4 & 5 & 1 & 2 & 3\\
    1 & 2 & 3 & 4 & 5\\
    3 & 4 & 5 & 1 & 2\\
    \end{bmatrix},
    M_{ref} =
    \begin{bmatrix}
    1 & 2 & 3 & 4 & 5\\
    6 & 7 & 8 & 9 & 10\\
    11 & 12 & 13 & 14 & 15\\
    16 & 17 & 18 & 19 & 20\\
    21 & 22 & 23 & 24 & 25\\
    \end{bmatrix}
$.
}

We use a $5 \times 5$ Reference Matrix $M_{ref}$ consisting of elements indicated by $\{1, 2, \ldots, 25\}$. We put them in a simple row wise increasing sequence, which is to ensure that each element occur only once in the matrix $M_{ref}$. Now corresponding to each MOLS $L_1$ and $L_2$, we  construct a parallel class, as per the Construction \ref{cons:MOLS-RBD}, thereby forming blocks of $P_1$ and $P_2$ by picking elements from $M_{ref}$.
\begin{align*}
P_1 &= \left\{ (2,6,15,19,23),(3,7,11,20,24),(4,8,12,16,25),(5,9,13,17,21),(10,14,18,22,1) \right\}, \\   
P_2 &= \left\{ (2,10,13,16,24),(3,6,14,17,25),(4,7,15,18,21),(5,8,11,19,22),(1,9,12,20,23)\right\}.
 \end{align*}
 The remaining two parallel classes will be constructed using horizontal and vertical elements of $M_{ref}$ as follows:
 \begin{align*}
 P_0 &= \left\{(1,2,3,4,5),(6,7,8,9,10),(11,12,13,14,15),(16,17,18,19,20),(21,22,23,24,25)\right\}, \\
 P_\infty &= \left\{(1,6,11,16,21),(2,7,12,17,22),(3,8,13,18,23),(4,9,14,19,24),(5,10,15,20,25)\right\}.
 \end{align*}
\end{example}    
Let us now consider the construction in the other way, i.e., the converse.

\begin{cons}
\label{Cons-RBD-MOLS}
Consider an RBD$(X,A)$, where $|X| = s^2$ and $A$ consist of $w+2$ numbers of parallel classes such that, each block of a parallel class is of a constant size $s$ and any pair of blocks from different parallel classes have exactly one element in common. Let us denote the elements of $X$ by $\{1, 2, \ldots, s^2\}$, parallel classes by $\{P_0, P_\infty, P_1, \ldots, P_w\}$, and the blocks of $P_l$ by $b^l_i$. Since there are $s$ blocks in each parallel class, therefore $P_{l} = \{b^l_1, b^l_2 , \ldots, b^l_s\}$. Let  $s$ distinct symbols for construction of the Latin Squares be denoted by $Y = \{y_1, y_2, \ldots, y_s\}$.  Now construct $w$-MOLS$(s)$ using RBD$(X,A)$ as follows. 

\begin{enumerate}
\item Use $P_0$ and $P_\infty$ to construct a reference matrix $M_{ref}$ having elements from $\{1, 2, \ldots, s^2\}$ in the following manner:
\begin{equation*}
    M_{ref} =
    \begin{bmatrix}
    b^0_1\cap b^\infty_1 & & & b^0_1\cap b^\infty_2 & &\ldots &  & b^0_1\cap b^\infty_s\\
    b^0_2\cap b^\infty_1 & & & b^0_2 \cap b^\infty_2 & &\ldots & & b^0_2 \cap b^\infty_s\\
    \vdots & & & \vdots & & \ldots & & \vdots\\
    b^0_s\cap b^\infty_1 & & & b^0_s\cap b^\infty_2 & & \ldots & & b^0_s\cap b^\infty_s\\
    \end{bmatrix}.
\end{equation*}
\normalsize

Here $M_{ref}$ contains all the elements of $X$ exactly once. As any two blocks from different parallel classes have exactly one element in common, if $(M_{ref})_{ij} = (M_{ref})_{lm}$ then $b^0_i\cap b^\infty_j = b^0_l\cap b^\infty_m$, and that implies all the blocks  $\{b^0_i,b^0_l , b^\infty_j, b^\infty_m \}$ have one element in common.  Here, $b^0_i$ and $b^0_l$  are the blocks in the Parallel class $P_0$. Similarly, $b^\infty_j$ and $b^\infty_m$ are the blocks of the parallel class $P_\infty$. Since blocks in a Parallel class are mutually disjoint, this is not possible, i.e., $(M_{ref})_{ij} \neq (M_{ref})_{lm}$.

\item Corresponding to the parallel class $P_k$, construct the Latin Squares $L_k$  as follows
\begin{equation*} 
 \text {if} \ (M_{ref})_{ij} \in b^k_t \,\,\, \text{then} \,\,(L_{k})_{ij} = y_t. 
\end{equation*}
 That is, we are substituting $y_t$, wherever the element of the block $b^k_t$ is appearing in $M_{ref}$ to construct $L_k$.  
 Note that $L_k$ is a Latin Square. Since $X = b^k_1 \cup b^k_2 \ldots b^k_{s}$, for every $(M_{ref})_{ij}$ there will be one $b^k_t$ such that  $(M_{ref})_{ij} \in b^k_t$ and since $b^k_i\cap b^k_j = \phi $, $i, j \in \{1, 2, \ldots, s\}$, for each  $(M_{ref})_{ij}$, there will be a unique $b^k_t$ such that $(M_{ref})_{ij} \in b^k_t$. Now if $L_k$ is not a Latin square, then there would be  at least a pair of $(i_1, i_2)$ corresponding to which  $(L_k)_{i_1j} = (L_k)_{i_2j}$ or a pair of $(j_1, j_2)$ corresponding to which $(L_k)_{ij_1} = (L_k)_{ij_j}$. Consider $(L_k)_{i_1j} = (L_k)_{i_2j}$. This implies if $x_1  = (M_{ref})_{i_1j} = b^0_{i_1}\cap b^\infty_j  \in b^k_t$ and $x_2  = (M_{ref})_{i_2j} = b^0_{i_2}\cap b^\infty_j \in b^k_t$.  Thus $x_1$ and $x_2 \in b^\infty_j $. Hence $|b^k_t \cap b^\infty_j | \geq |\{x_1, x_2\}| = 2 $ as $x_1  = (M_{ref})_{i_1j} \neq  (M_{ref})_{i_2j} = x_2 $. This is a contradiction as there is exactly one element common between the blocks of different parallel classes, here $P^\infty$ and $P^k$. Similarly it can be argued that $(L_k)_{ij_1} \neq (L_k)_{ij_j}$ for any pair of $(j_1,j_2)$.
 
\item Repeat the above step for each of the parallel class $P_k, \ k = 1, 2, \ldots, w$, thereby constructing the set of $w$ Latin squares viz $\{L_1, L_2, \ldots, L_w\}$.
\end{enumerate}
\end{cons}
The converse of Lemma \ref{MOLS-RBD} is as follows.
\begin{lemma}
\label{RBD-MOLS}
Given an RBD$(X, A)$, where $|X| = s^2$ and consisting of $w+2$ many parallel classes such that, each block of a parallel class is of a constant size $s$ and any pair of blocks from different parallel classes have exactly one element in common. Then RBD$(X,A)$ can be used to construct $w$-MOLS$(s)$.
\end{lemma}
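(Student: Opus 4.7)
The plan is to apply Construction \ref{Cons-RBD-MOLS} directly and verify that it produces a valid $w$-MOLS$(s)$. This reduces to three verifications: that the reference matrix $M_{ref}$ gives a bijection between its $s^2$ cells and $X$, that each $L_k$ thereby defined is actually a Latin square on the symbols $Y = \{y_1, \ldots, y_s\}$, and that distinct $L_k, L_m$ are orthogonal. All three should flow from the single RBD hypothesis that any two blocks from different parallel classes share exactly one element.

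For the reference matrix, I would use $P_0$ and $P_\infty$: their blocks pairwise intersect in exactly one element, so the $s^2$ sets $b^0_i \cap b^\infty_j$ are singletons, and they are pairwise distinct because the blocks within each of $P_0, P_\infty$ are disjoint. Since $|X| = s^2$, these singletons exhaust $X$ and each element of $X$ occupies a unique cell of $M_{ref}$. For the Latin-square property of $L_k$, the assignment $(L_k)_{ij} = y_t \iff (M_{ref})_{ij} \in b^k_t$ is well-defined because $P_k$ partitions $X$. A repeated symbol in row $i$ would mean two distinct cells $(i, j_1), (i, j_2)$ whose $M_{ref}$-entries both lie in some $b^k_t$; but these entries also both lie in $b^0_i$, forcing $|b^k_t \cap b^0_i| \geq 2$, contradicting the hypothesis. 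The column argument is identical, swapping $b^0_i$ with $b^\infty_j$.

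For orthogonality of $L_k$ and $L_m$ with $k \neq m$, I would observe that the cell $(i,j)$ realises the ordered pair $(y_s, y_t)$ exactly when $(M_{ref})_{ij} \in b^k_s \cap b^m_t$. The RBD hypothesis forces this intersection to be a singleton, and the $M_{ref}$-bijection then pins down a unique cell $(i,j)$. Ranging over all $(s,t) \in Y \times Y$ shows that each ordered pair is realised exactly once, which is the definition of orthogonality.

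The main obstacle is essentially notational bookkeeping rather than a mathematical hurdle: the same single-element intersection property is being applied in three different regimes (the pair $(P_0, P_\infty)$ to build $M_{ref}$, the pairs $(P_0, P_k)$ and $(P_\infty, P_k)$ to verify the Latin-square property, and the pairs $(P_k, P_m)$ to verify orthogonality), and care is needed to keep those roles straight. Once $M_{ref}$ is established as a bijection between cells and elements of $X$, the Latin-square and orthogonality properties become immediate counting consequences, so the proof should be short.
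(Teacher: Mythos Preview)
Your proposal is correct and follows essentially the same route as the paper: the paper likewise applies Construction~\ref{Cons-RBD-MOLS}, verifying the $M_{ref}$ bijection and the Latin-square property inside the construction itself, and then proving orthogonality in the lemma via the same $|b^k_p \cap b^m_q| = 1$ argument. The only cosmetic difference is that the paper phrases the orthogonality step as a proof by contradiction (a repeated ordered pair would force $|b^k_p \cap b^m_q| \geq 2$), whereas you phrase it directly (the singleton intersection plus the $M_{ref}$ bijection pins down a unique cell); these are the same argument.
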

\begin{proof}
We claim that the set of Latin squares $L_1, L_2, \ldots, L_w$ as constructed above are Mutually Orthogonal Latin Squares of order $s$.

Recalling that a pair of Latin Squares $L_1$ and $L_2$ of same order and constructed from the entries from the same set $Y$ is called mutually orthogonal, if the ordered pair $ ((L_{1})_{ij}, (L_{2})_{ij}) \in \{(Y,Y) \}$ appears exactly once.

Consider the ordered pair, $((L_{k})_{ij}, (L_{m})_{ij}) $. Assume that, $L_k$ and $L_m$ are not Mutually Orthogonal Latin Squares, which implies that, there would be at least one pair of Indices $\{(i,j), (u,v): (i,j) \neq (u,v)\}$ such that, $((L_{k})_{ij}, (L_{m})_{ij}) = ((L_{k})_{uv}, (L_{m})_{uv}) = (y_p, y_q)$. Let $y_p = (L_{k})_{ij} \in b^k_p$ and $y_q = (L_{m})_{ij} \in b^m_q$. This implies $(M_{ref})_{ij}\in b^k_p$ and $b^m_q$. Similarly, $y_p = (L_{k})_{uv} \in b^k_p$ and $y_q = (L_{m})_{uv} \in b^m_q$ which implies $(M_{ref})_{uv} \in b^k_p$ and $b^m_q$. However, then we will have, $b^k_p \cap b^m_q = \{(M_{ref})_{ij},(M_{ref})_{uv}\}$, but if $(i,j) \neq (u,v) $ then $ (M_{ref})_{ij} \neq (M_{ref})_{uv}$. Thus, $|b^k_p \cap b^m_q| \geq 2$ which contradicts the fact that $|b^k_p \cap b^m_q| = 1$. Hence we conclude that $L_k$ and $L_m$ are Mutually Orthogonal Latin Squares.
\end{proof}

We now sketch the above method with an example to obtain a $2$-MOLS$(5)$ from $4$ parallel classes.
\begin{example}
If we proceed with the same set of $4$ parallel classes obtained as in Example \ref{Example-Lemma3}, it would naturally result into the same pair of MOLS$(5)$, i.e., $L_1$ and $L_2$, with which we have started Example \ref{Example-Lemma3}. Therefore, we consider a different set of $4$ parallel classes as follows:
 \begin{align*}
 P_0 &= \left\{(1,2,3,4,5),(6,7,8,9,10),(11,12,13,14,15),(16,17,18,19,20),(21,22,23,24,25)\right\}, \\
 P_\infty &= \left\{(1,6,11,16,21),(2,7,12,17,22),(3,8,13,18,23),(4,9,14,19,24),(5,10,15,20,25) \right\}, \\
 P_1 &= \left\{(1,9,12,20,23),(2,10,13,16,24),(3,6,14,17,25),(4,7,15,18,21),(5,8,11,19,22)\right\},  \\  
 P_2 &= \left\{ (1,7,13,19,25), (2,8,14,20,21), (3,9,15,16,22), (4,10,11,17,23),(5,6,12,18,24) \right\}.
\end{align*}

Note that, $P_0$ and $P_\infty$ are taken as above for convenience, providing the $M_{ref}$ with elements from $X = \{1, 2, \ldots, 25\}$. Observe that any pair of blocks from different parallel classes have exactly one element in common. 
Now corresponding to $s=5$, we simply use  $Y = \{1, 2, 3, 4, 5\}$ as five symbols to construct the Latin square. The remaining parallel classes $P_1$ and $P_2$ are used to construct $L_1$ and $L_2$ respectively, which are the required $2$-MOLS$(5)$. Following the Construction \ref{Cons-RBD-MOLS} above, we obtain Orthogonal Latin Squares $L_1$ and $L_2$ as follows:
 
 {\tiny   
 \begin{equation*}
    M_{ref} =
    \begin{bmatrix}
    1 & 2 & 3 & 4 & 5\\
    6 & 7 & 8 & 9 & 10\\
    11 & 12 & 13 & 14 & 15\\
    16 & 17 & 18 & 19 & 20\\
    21 & 22 & 23 & 24 & 25\\
    \end{bmatrix}, \,\,\,\,
    L_1 =\begin{bmatrix}
    1 & 2 & 3 & 4 & 5\\
    3 & 4 & 5 & 1 & 2\\
    5 & 1 & 2 & 3 & 4\\
    2 & 3 & 4 & 5 & 1\\
    4 & 5 & 1 & 2 & 3\\
    \end{bmatrix},\,\,\,\,
   L_2 = \begin{bmatrix}
    1 & 2 & 3 & 4 & 5\\
    5 & 1 & 2 & 3 & 4\\
    4 & 5 & 1 & 2 & 3\\
    3 & 4 & 5 & 1 & 2\\
    2 & 3 & 4 & 5 & 1\\
    \end{bmatrix}.
    \end{equation*}
    }
\end{example}
Based on this we now proceed for our generic construction idea in the next section.

\section{Definition of APMUBs and general construction ideas}
\label{apmub}
In this section we first present the motivation of proposing such a combinatorial object and then proceed with the general construction ideas.
  
\subsection{Motivation for defining APMUBs and its Characteristics}
\label{Motivation}
Consider a pair of orthonormal bases, say $M_l\text{ and }M_m$. If they are $\beta$-AMUBs then $|\braket{\psi_{i}^{l}|\psi_{j}^{m}}| \leq \frac{\beta}{\sqrt{d}}$, where $\beta$ is bounded by some constant. The definition of $\beta$-AMUBs does not rule out $|\braket{\psi_{i}^{l}|\psi_{j}^{m}}| = 0$.
Let for $\mathfrak{n}_{1}$ pairs, the value of $|\braket{\psi_{i}^{l}|\psi_{j}^{m}}|$ be $0$, and for the remaining pairs $\mathfrak{n}_{2}$, the value of $|\braket{\psi_{i}^{l}|\psi_{j}^{m}}|$ be non-zero ($\neq 0$). If $|\braket{\psi_{i}^{l}|\psi_{j}^{m}}| = \frac{\beta_{ij}}{\sqrt{d}}$ (say), then $\frac{\beta_{ij}}{\sqrt{d}} \leq \frac{\beta}{\sqrt{d}}$. Since $\sum^d_{ij} |\braket{\psi_{i}^{l}|\psi_{j}^{m}}|^2 = d$. Therefore,
$$
\mathfrak{n}_{1}\cdot 0 + \sum_{\substack{i,j=1\\|\braket{\psi_{i}^{l}|\psi_{j}^{m}}|\neq 0}}^{d}\left(\frac{\beta_{ij}}{\sqrt{d}}\right)^{2} = d
\Rightarrow \mathfrak{n}_{2}\frac{\beta_{\min}^{2}}{d} \leq d \leq \mathfrak{n}_{2}\frac{\beta_{\max}^{2}}{d},$$
where, $\beta_{\min} = \min_{ij}\beta_{ij}$ and $\beta_{\max} = \max_{ij}\beta_{ij}$ This implies,
$$\mathfrak{n}_{2}\frac{\beta_{\min}^{2}}{d^2} \leq 1 \leq \mathfrak{n}_{2}\frac{\beta_{\max}^{2}}{d^2}.$$
Note that, $\mathfrak{n}_{1}+\mathfrak{n}_{2} = d^{2} \Rightarrow \frac{\mathfrak{n}_{1}}{d^2} = 1 - \frac{\mathfrak{n}_{2}}{d^2}$, then we have,
\begin{equation}
1 - \frac{1}{\beta_{\min}^{2}} \leq \frac{\mathfrak{n}_{1}}{d^2} \leq 1 - \frac{1}{\beta_{\max}^{2}}.
\end{equation}
Note that, $\frac{\mathfrak{n}_{1}}{d^2}$ is the probability of randomly selecting two orthogonal vectors from two different bases. Therefore, if $\beta_{\min} = \beta = \beta_{\max}$, i.e., $\Delta = \{0,\frac{\beta}{\sqrt{d}}\}$ and $\beta = 1 + \mathcal{O}(d^{-\lambda}), \\lambda>0$, then, $\frac{\mathfrak{n}_{1}}{d^{2}} = \mathcal{O}(d^{-\lambda}), \ \lambda>0$. Hence, with these conditions on $\Delta$ and $\beta$, the probability of any pair randomly selected vectors from different orthonormal bases being orthogonal, tends to $0$. Similarly, the probability that the angle between them is $\frac{\beta}{\sqrt{d}}$ tends to $1$ asymptotically. Since $\beta \rightarrow 1$, as $d$ increases, the bases of APMUBs would behave like MUBs in this sense.

Further note that, the vectors from a set of MUBs form a set of Bi-angular vectors as $\Delta = \{0,\frac{1}{\sqrt{d}}\}$. Now if we restrict $\Delta = \{0,\frac{\beta}{\sqrt{d}}\}$, then basis vectors of AMUBs would form set of Bi-angular vectors. Thus analysis of such AMUBs would also shed light on the study of Bi-angular vectors that has close connections with Weighing Matrices, Error Correcting Codes, Orthogonal spreads, Frame theory, Association Schemes etc.~\cite{Calderbank1997,Best2013,Best2015,Kharaghani2015,Magsino2019,Cahill2018,Casazza2019,Holzmann2010,Kharaghani2018}. 

With this motivation, let us define APMUBs, which is the main focus of this paper.
\begin{definition}
\label{def:APMUB}
The set $\mathbb{M} = \{M_1,M_2,\ldots,M_r\}$ will be called {Almost Perfect MUBs (APMUBs)} if $\Delta  = \left \{0,\frac{\beta}{\sqrt{d}}  \right \}$, i.e., the set 
contains just two values, such that  $\beta = 1+\mathcal{O}(d^{-\lambda}) \leq 2$, $\lambda > 0$. When the bases are real, we call them {Almost Perfect Real MUBs (APRMUBs)}. 
\end{definition}

If the vectors are understood as states of a quantum system then the absolute value of an inner product essentially indicates the overlap between these states. Hence randomly picking two quantum states corresponding to different basis of above set of APMUBs, the overlap will have magnitude equal to $\frac{\beta}{\sqrt{d}}$ with probability almost 1. With negligible probability it will be $0$ which corresponds to the quantum sates being orthogonal.

The sets of basis vectors of APMUB are Bi-angular as they are either orthogonal or have constant absolute value of inner product, i.e., $ \Delta = \{0, \frac{\beta}{\sqrt{d}} \}$. Hence APMUBs form set of Bi-angular vectors with 0 being one of the value of inner product. Thus, for any pair of APMUBs in $\mathbb{C}^d ( \text{or}~ \mathbb{R}^d)$ we have the following lemma.
 
 \begin{lemma}
Consider any pair of APMUBs in $\mathbb{C}^d$ with $ \Delta = \{0, \frac{\beta}{\sqrt{d}} \}$. Then any basis vector of an APMUB will be orthogonal to $(1- \frac{1}{\beta^2} ) \times d$ many basis vectors of another APMUB and will be at angle $\frac{\beta}{\sqrt{d}}$ with remaining $\frac{d}{\beta^2}$ many basis vectors.
\end{lemma}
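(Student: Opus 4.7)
The plan is to run a single counting argument based on the completeness (Parseval) relation for an orthonormal basis. Fix a vector $\ket{\psi_i^l}$ from the first APMUB and consider its expansion in the second APMUB $M_m = \{\ket{\psi_j^m}\}_{j=1}^d$. Since $M_m$ is an orthonormal basis of $\mathbb{C}^d$ and $\ket{\psi_i^l}$ is a unit vector, I would invoke the identity
\[
\sum_{j=1}^{d} \bigl|\braket{\psi_i^l \mid \psi_j^m}\bigr|^{2} \;=\; \bigl\|\ket{\psi_i^l}\bigr\|^{2} \;=\; 1.
\]

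Next I would partition the indices $j \in \{1,2,\ldots,d\}$ into two classes according to the bi-angular assumption $\Delta = \{0,\frac{\beta}{\sqrt{d}}\}$. Let $n_0$ be the number of $j$ with $|\braket{\psi_i^l\mid\psi_j^m}| = 0$ and $n_1$ the number of $j$ with $|\braket{\psi_i^l\mid\psi_j^m}| = \frac{\beta}{\sqrt{d}}$, so that $n_0+n_1 = d$. Substituting these two values into the Parseval identity gives
\[
n_0 \cdot 0 \;+\; n_1 \cdot \frac{\beta^{2}}{d} \;=\; 1,
\]
which immediately yields $n_1 = \frac{d}{\beta^{2}}$ and therefore $n_0 = d - \frac{d}{\beta^{2}} = \left(1-\frac{1}{\beta^{2}}\right)d$. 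This matches the claimed counts, and since the argument is symmetric in $l$ and $m$, the statement holds for any vector of either basis.

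There is essentially no obstacle here; the only thing worth flagging is the implicit integrality assumption: the numbers $\frac{d}{\beta^{2}}$ and $\left(1-\frac{1}{\beta^{2}}\right)d$ must be non-negative integers, which is automatic from the counting interpretation but constrains the admissible values of $\beta$ for a given $d$ (consistent with Definition~\ref{def:APMUB}, where $\beta = 1+\mathcal{O}(d^{-\lambda})\leq 2$). I would note this briefly and then the proof is complete. The same argument transfers verbatim to $\mathbb{R}^d$ for APRMUBs.
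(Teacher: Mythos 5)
Your proof is correct and follows essentially the same route as the paper: expand the fixed vector in the other orthonormal basis, apply $\sum_j |\braket{\psi_i^l|\psi_j^m}|^2 = 1$, and solve for the two counts determined by $\Delta = \{0, \frac{\beta}{\sqrt{d}}\}$. Your closing remark on integrality matches the paper's own observation that $\beta^2$ must be rational.
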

\begin{proof}
Let $M_l$ and $M_n$ be any pair of APMUBs over $\mathbb{C}^d$ and let $\{\ket{\psi_{i}^{l}}:i=1,2,\ldots,d\}$ and $\{\ket{\psi_{j}^{m}}:j=1,2,\ldots,d\}$ be the corresponding basis vectors. Expressing $\ket{\psi_{i}^{l}}$ as a linear combination of $\{\ket{\psi_{j}^{m}}:j=1,2,\ldots,d\}$, we get,
$$
\ket{\psi_{i}^{l}} = \alpha_{i1}\ket{\psi_{1}^{m}} + \alpha_{i2}\ket{\psi_{2}^{m}} + \ldots + \alpha_{id}\ket{\psi_{d}^{m}},
$$
where $\alpha_{ij} = \braket{\psi_{i}^{l}|\psi_{j}^{m}}$. Since the bases consist of unit vectors, i.e., $\braket{\psi_{i}^{l}|\psi_{i}^{l}} = 1 
\forall l,i$ hence,
$$
\braket{\psi_{i}^{l}|\psi_{i}^{l}} = |\alpha_{i1}|^{2} + |\alpha_{i2}|^{2} + \ldots + |\alpha_{id}|^{2} = \sum_{j=1}^{d}|\braket{\psi_{i}^{l}|\psi_{j}^{m}}|^{2} = 1.
$$
Since $\Delta = \{0,\frac{\beta}{\sqrt{d}}\}$, let us assume that $\ket{\psi_{i}^{l}}$ is orthogonal, i.e., $|\braket{\psi_{i}^{l}|\psi_{j}^{m}}|= 0$  with $\mathfrak{t}_{1}$ many basis vectors of $M^m$, and  make an angle of $\frac{\beta}{\sqrt{d}}$ with remaining $\mathfrak{t}_{2} = d - \mathfrak{t}_{1}$ many basis vectors of $M^m$.  Hence we have, $$\sum_{j=1}^{d}|\braket{\psi_{i}^{l}|\psi_{j}^{m}}|^{2} = 1\Rightarrow \mathfrak{t}_{1} \cdot 0 + \mathfrak{t}_{2} \times \left(\frac{\beta}{\sqrt{d}}\right)^{2} = 1 \ \Rightarrow \ \mathfrak{t}_{1} = \left( 1 - \frac{1}{\beta^2}\right) \times d \text{ and } \mathfrak{t}_{2} = \frac{d}{\beta^2}.$$

Since $\ket{\psi_{i}^{l}}$ is arbitrary basis vector of $M^l$, hence the result. 
\end{proof}

The above result tells that $\beta^2$ can only be a rational number. Further, we have the following corollary considering all the vectors in the set of APMUBs.
\begin{corollary}
Consider a set of $r$ many APMUBs on dimension $d$ with $\Delta = \{0, \frac{\beta}{\sqrt{d}} \}$, that will produce $r \times  d$ vectors. In this set, each vector will have $(d-1) + (r-1) (1- \frac{1}{\beta^2}) d$ many vectors as its orthogonal and $(r-1) \frac{d}{\beta^2}$ many vectors having the dot product $\frac{\beta}{\sqrt{d}}$.
\end{corollary}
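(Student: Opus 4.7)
The plan is to reduce the statement to a direct bookkeeping exercise built on the preceding lemma. Fix an arbitrary vector $\ket{v}$ from the pooled collection of $rd$ vectors, and note that it belongs to a unique basis, say $M_l$. The total count splits naturally into two parts: contributions from the same basis $M_l$, and contributions from each of the other $r-1$ bases.

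For the contribution from $M_l$ itself, since $M_l$ is an orthonormal basis, $\ket{v}$ is orthogonal to exactly the remaining $d-1$ vectors in $M_l$, and there are no vectors in $M_l$ at angle $\frac{\beta}{\sqrt{d}}$ from $\ket{v}$ (other than $\ket{v}$ itself which is excluded). For the contribution from any other basis $M_m$ with $m \neq l$, the preceding lemma applies directly: $\ket{v}$ is orthogonal to exactly $\left(1-\frac{1}{\beta^2}\right)d$ vectors of $M_m$ and at angular distance $\frac{\beta}{\sqrt{d}}$ from the remaining $\frac{d}{\beta^2}$ vectors.

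Summing these contributions over the $r-1$ other bases, the total number of vectors orthogonal to $\ket{v}$ is
\[
(d-1) + (r-1)\left(1-\frac{1}{\beta^{2}}\right)d,
\]
and the total number of vectors having absolute inner product $\frac{\beta}{\sqrt{d}}$ with $\ket{v}$ is
\[
(r-1)\frac{d}{\beta^{2}}.
\]
Since $\ket{v}$ was an arbitrary vector in the pool, the claim follows. As a consistency check, one can verify that these two counts together with $\ket{v}$ itself add up to $1 + (d-1) + (r-1)\left(1-\tfrac{1}{\beta^2}\right)d + (r-1)\tfrac{d}{\beta^2} = rd$, matching the total size of the vector collection.

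I do not anticipate any real obstacle: the statement is a direct summation of the per-pair count supplied by the previous lemma, using only the additional fact that an orthonormal basis contributes $d-1$ orthogonal partners to each of its own vectors. The only subtlety worth flagging is to avoid double counting $\ket{v}$ itself, which is why the self-basis contribution is $d-1$ rather than $d$.
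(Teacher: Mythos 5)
Your proposal is correct and follows exactly the route the paper intends: the corollary is stated without proof as an immediate consequence of the preceding lemma, obtained by adding the $d-1$ orthogonal partners within the vector's own basis to the per-basis counts $\left(1-\frac{1}{\beta^2}\right)d$ and $\frac{d}{\beta^2}$ summed over the other $r-1$ bases. Your consistency check that the counts total $rd$ is a nice touch but not needed.
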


The main contribution here is to show that one can construct $\mathcal{O}(\sqrt{d})$ many APMUBs with values of $\beta$ slightly more than one. This says that our method can provide $\mathcal{O}(d^{\frac{3}{2}})$ many Bi-angular vectors where the dot product values are 0 and $\frac{\beta}{\sqrt{d}}$. While there are constructions of $\mathcal{O}(d^2)$ Bi-angular vectors \cite{mikhail2021biangular}, but the angles we obtain with our methods are quite high than the existing constructions. Further we achieve large sparsity and non-zero components of equal magnitude. This is related to coherence property of unit norm vectors. This shows that the construction of APMUBs may produce interesting results in related domain. Further note that any set of Orthonormal Basis vectors always form Unit Norm Tight Frames (UNTF). For a brief introduction on Frame theory and its application in Hilbert space one may refer to \cite{casazza2016brief,casazza2013introduction}. Thus the basis vectors of the set of APMUBs also constitute a Unit Norm Tight Frames apart from being Bi-angular, whereas in general the Bi-angular vectors constructed in \cite{mikhail2021biangular} do not constitute tight frame. To see this, note that since APMUBs are orthonormal basis vectors, hence any arbitrary vector $\ket{u}$ can be uniquely expressed in terms of each of the APMUBs. Thus in this context also the construction of APMUBs may be of independent interest. 

\subsubsection{Connecting with Mutually Unbiased Weighing Matrices}
Let us now demonstrate how the construction of APMUBs bears implications to the existence of Mutually Unbiased Weighing Matrices (MUWM).
\begin{definition}
Let $W_1$ and $W_2$ be a pair of weighing matrices of order $d$ and weight $w$. If $W_1^\dag W_2$ is again a weighing matrix with order $d$ and weight $w$, then the pair is called mutually unbiased weighing matrices (MUWM). Moreover, let $W=\left\{W_1, W_2, \ldots, W_r \right\}$ be a set of weighing matrices such that every pair is mutually unbiased. Then $W$ is referred to as a set of mutually unbiased weighing matrices. If $W$ consists solely of real weighing matrices, it is called a set of mutually unbiased real weighing matrices (MURWM).
\end{definition}

Note that, the MUWMs generalize mutually unbiased Hadamard matrices. The study of mutually unbiased weighing matrices of small orders has been conducted in \cite{Best2015}, where computer searches and some analytical methods were predominantly employed. Moreover, in \cite{harada2015binary}, mutually unbiased real weighing matrices have been used to study the binary codes. In this regard, we like to underline the following technical result.

\begin{lemma}
The existences of the following combinatorial objects are equivalent:
\begin{enumerate}
\item $r$ many APMUBs with $\Delta = \left\{0, \frac{\beta}{\sqrt{d}}\right\}$, and
\item $(r - 1)$ many mutually unbiased weighing matrices of order $d$ and weight $\frac{d}{\beta^2}$.
\end{enumerate}
\end{lemma}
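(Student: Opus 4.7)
The plan is to mimic the classical equivalence between MUBs and mutually unbiased complex Hadamard matrices, as already sketched informally in the paragraph preceding the definition of weighing matrices. The key observation is that the bi-angular condition $\Delta = \{0, \beta/\sqrt{d}\}$ rescales correctly: since $w = d/\beta^{2}$, we have $\beta/\sqrt{d} = 1/\sqrt{w}$, so off-the-shelf inner products between APMUB vectors are already of the modulus demanded by the weighing matrix definition.

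For the direction $(1) \Rightarrow (2)$, identify each APMUB $M_l$ with a $d \times d$ unitary whose columns are the basis vectors $\ket{\psi_1^l}, \ldots, \ket{\psi_d^l}$. Define, for $i = 2, \ldots, r$,
\[
 W_i \;=\; M_1^{\dagger} M_i,
\]
so that the $(j,k)$ entry is $\braket{\psi_j^1 | \psi_k^i}$. By the APMUB hypothesis each entry lies in $\{0,\ \exp(i\theta)/\sqrt{w}\}$ with $w = d/\beta^2$, and unitarity $W_i^{\dagger} W_i = I$ is immediate from the unitarity of $M_1$ and $M_i$. The preceding lemma (that each vector of an APMUB is orthogonal to exactly $(1-1/\beta^2)d$ vectors and has nonzero overlap with the remaining $d/\beta^2$ vectors of any other basis) guarantees that every row and column of $W_i$ has exactly $w$ nonzero entries, confirming $W_i \in W(w,d)$. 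Mutual unbiasedness follows from
\[
 W_i^{\dagger} W_j \;=\; M_i^{\dagger} M_1 M_1^{\dagger} M_j \;=\; M_i^{\dagger} M_j,
\]
whose entries are the inner products $\braket{\psi_j^i | \psi_k^j}$ of two different APMUBs, hence again lie in $\{0,\ \exp(i\theta)/\sqrt{w}\}$, and which is unitary for the same reason as above. Thus $\{W_2, \ldots, W_r\}$ is a set of $r-1$ MUWM$(w,d)$.

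For the converse $(2) \Rightarrow (1)$, given MUWM $W_2, \ldots, W_r$ of order $d$ and weight $w = d/\beta^2$, set $M_1 := I$ and $M_i := W_i$ for $i \ge 2$. Each $M_i$ is unitary (since $W_i^{\dagger} W_i = I$), so its columns form an orthonormal basis of $\mathbb{C}^d$. For the APMUB condition on any pair $(M_i, M_j)$ with $i < j$, examine $M_i^{\dagger} M_j$: when $i = 1$ this equals $W_j$, whose nonzero entries have modulus $1/\sqrt{w} = \beta/\sqrt{d}$; when $1 < i < j$, it equals $W_i^{\dagger} W_j$, which by the mutual unbiasedness hypothesis is itself a weighing matrix of weight $w$, again with nonzero entries of modulus $\beta/\sqrt{d}$. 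In both cases $\Delta \subseteq \{0, \beta/\sqrt{d}\}$, so $\{M_1, \ldots, M_r\}$ is a set of $r$ APMUBs.

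The construction is essentially mechanical, so I do not foresee a serious obstacle; the one place demanding care is confirming that $W_i = M_1^{\dagger} M_i$ truly has constant row/column weight $w$, which is a combinatorial fact not built into the definition of APMUBs but extracted via the counting identity $\sum_k |\braket{\psi_j^1|\psi_k^i}|^2 = 1$ used in the preceding lemma. Once this structural consequence of $\Delta = \{0, \beta/\sqrt{d}\}$ is in hand, both implications reduce to the elementary algebraic manipulations $(M_1^{\dagger} M_i)^{\dagger}(M_1^{\dagger} M_j) = M_i^{\dagger} M_j$ and the observation that $\beta/\sqrt{d}$ and $1/\sqrt{w}$ agree under the identification $w = d/\beta^2$.
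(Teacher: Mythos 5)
Your proposal is correct and follows essentially the same route as the paper: defining $W_i = M_1^{\dagger} M_i$, verifying its entries lie in $\left\{0, \frac{\beta \exp(i\theta)}{\sqrt{d}}\right\}$, and using $W_i^{\dagger} W_j = M_i^{\dagger} M_j$ for mutual unbiasedness, with the converse obtained by adjoining the identity. Your added care about confirming the constant row/column weight via the counting lemma is a useful elaboration of a point the paper's proof passes over silently, but it does not constitute a different approach.
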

\begin{proof}
$(1) \Rightarrow (2)$: Let $\left\{M_1, M_2,\ldots, M_r \right\}$ be a set of $r$ many APMUB. Choose any weighing matrix, say $M_1$ and consider the set $\left\{M_1^\dag M_1, M_1^\dag  M_2, \ldots, M_1^\dag M_r \right\} = \left\{I, W_2, W_3, \ldots, W_r \right\}$. Since $M_i$'s are unitary matrices, $W_i$ are also unitary. Moreover, since $M_1$ and $M_i$ are APMUB, the elements of $W_i = M_1^\dag M_i $ are from the set $\left\{0, \frac{\beta \exp(i \theta)}{\sqrt{d}}\right\}$ where $\theta \in \mathbb{R}$. Hence, $W_i$ is a weighing matrices with weight $\frac{d}{\beta^2}$.
 
Further, $W_i^\dag W_j = (M_1^\dag  M_i)^\dag (M_1^\dag  M_j) =M_i^\dag M_1 M_1^\dag M_j = M_i^\dag M_j$. Since $M_i$ and $M_j$ are APMUB, the elements of $M_i^\dag M_j$ are from the set $\left\{0, \frac{\beta \exp(i\theta)}{\sqrt{d}}\right\}$ where $\theta \in \mathbb{R}$, making $W_i^\dag W_j$ a weighing matrices with weight $\frac{d}{\beta^2}$. Hence, $W_i$ and $W_j$ are mutually unbiased weighing matrices, for any pair of $W_i,W_j$. Thus $\left\{W_2,W_3,\ldots, W_r\right\}$ is a set of $(r-1)$ mutually unbiased weighing matrices of weight $\frac{d}{\beta^2}$.

$(2)\Rightarrow (1)$ The set of $(r-1)$ mutually unbiased weighing matrices along with identity matrix ($I$), constitute the set of $r$ many APMUB.
\end{proof}
 
In the lemma above, when we restrict the matrices to APRMUB, we obtain a set of mutually unbiased real weighing matrices (MURWM). It's also noteworthy that this lemma parallels the connection between MUBs and mutually unbiased Hadamard matrices (MUHM), where $r$ MUBs are equivalent to $(r-1)$ MUHMs and vice versa.

\subsection{Our general construction ideas}
\label{TheoreticalAnalysis}
Let us now refer to the construction method of orthonormal bases using RBDs as given in~\cite[Section 3]{ak22}. The construction idea of~\cite{ak22} is generic in nature, where unitary matrices can be employed to construct orthogonal bases corresponding to each parallel class of an RBD. However, here we will confine ourselves to the choice of Hadamard Matrices (which are special kind of unitary matrices) for constructing orthonormal bases from parallel classes of an RBD. This will enable us to bound $\beta$ as per~\cite[Theorem 1]{ak22}, which is required for constructing APMUBs with good parameters. The order of the Hadamard matrices used in this construction must be same as the block size of each parallel class. With this method in place for constructing the set of orthonormal bases from RBD, our work here primarily focuses on constructing suitable RBDs, and consequently analyzing the parameters $\Delta, \beta$ and $\epsilon$ of corresponding AMUBs constructed from them.

We focus on constructing RBDs, having constant block size. The constant block size is essential if we want to use Hadamard matrices of same order (real or complex) for all the blocks of the RBD. This is required as it renders all the components of the basis finally constructed to be either zero or of a constant magnitude $(\frac{1}{\sqrt{k}})$, which is the normalizing factor of each basis vector. We will examine when they can satisfy the conditions of APMUBs (APRMUBs). Our construction method results into vectors which are vary sparse with the non-zero components of constant magnitude. This provides large sets of both real as well as complex AMUBs for the dimensions where it is known that not more that two or three real MUBs exist.

We first present a generic result, which is dependent on the existence of suitable RBDs. Thereafter, we explore the methods to construct such RBDs. We further show that if an RBD satisfy $\mu =1$, then it will result into an APMUB. In fact, $\mu$ is the most critical parameter which we control in the construction of RBDs. In all the constructions of AMUBs, the number of elements in an RBD, i.e., $|X|$ can be increased without bound whereas, the parameter $\mu$ remains constant. All our constructions will have this property, which justifies asymptotic analysis of the parameters for AMUBs thus constructed. 

\begin{theorem}
\label{th1}
Consider an RBD$(X, A)$ with $|X|= d= (q-e)(q+f)$, with $q, e, f \in \mathbb{N}$. If the said RBD$(X, A)$ consists of $r$ parallel classes, each having blocks of size $(q-e)$, and if $0 \leq (e+f) \leq \left( \frac{c^2 - \mu^2}{\mu c} \right) d^{\frac{1}{2}}$ where $c$ is some constant, then one can construct $r$ many $\beta$-AMUBs in dimension $d$, where $\beta = \mu \sqrt{\frac{q+f}{q-e}} \leq c$ and $\epsilon = 1 - \frac{1}{q+f}$. When $\mu=1$ and $c = 2$, we will get $r$ many APMUBs with $\beta = 1 + \mathcal{O}(d^{-\frac{1}{2}} ) \leq 2$ and  $\Delta =\{0, \frac{1}{q-e} \}$. Further, if there exists a real Hadamard matrix of order $(q-e)$,  we can construct $r$ many APRMUBs with the same parameters.
\end{theorem}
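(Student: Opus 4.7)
The plan is to mirror the orthonormal-basis construction of \cite{ak22} and then reduce the theorem to a sparsity count and an algebraic verification. For each parallel class $P_l = \{b_1^l, \ldots, b_s^l\}$ of the RBD, with $s = q+f$ blocks of constant size $k = q-e$, I would fix a complex Hadamard matrix $H$ of order $k$ and, for every block $b_a^l$, embed the normalized Hadamard rows $\tfrac{1}{\sqrt{k}} H_{i,\ast}$ into the coordinates indexed by $b_a^l$, leaving all coordinates outside the block equal to zero. This produces $k$ unit vectors supported on $b_a^l$, each nonzero component of magnitude $\tfrac{1}{\sqrt{k}}$. Since the $s$ blocks of $P_l$ partition $X$, vectors drawn from distinct blocks are orthogonal by disjoint support, and vectors drawn from the same block are orthogonal by the Hadamard property; hence each $P_l$ yields an orthonormal basis $M_l$ of $\mathbb{C}^d$, and the $r$ parallel classes give $r$ such bases.

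Next I would bound the cross-basis inner products. For $l \neq m$, with $\ket{\psi_i^l}$ supported on $b_a^l$ and $\ket{\psi_j^m}$ supported on $b_b^m$, one has
$$
\braket{\psi_i^l|\psi_j^m} = \sum_{x \in b_a^l \cap b_b^m} \overline{(\psi_i^l)_x}\,(\psi_j^m)_x,
$$
each summand of magnitude $1/k$. Using $|b_a^l \cap b_b^m| \leq \mu$, the modulus is at most $\mu/k = \mu/(q-e)$, which on invoking $d = (q-e)(q+f)$ equals $\mu\sqrt{(q+f)/(q-e)}/\sqrt{d} = \beta/\sqrt{d}$. When $\mu = 1$, every nonempty intersection has size exactly one and the Hadamard entries have unit modulus, so every nonzero cross-basis inner product has magnitude exactly $1/(q-e)$, yielding $\Delta = \{0, 1/(q-e)\}$. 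A direct count of nonzeros — $k$ nonzero coordinates per basis vector out of $d = ks$ — gives the sparsity $\epsilon = 1 - k/d = 1 - 1/(q+f)$.

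The remaining work is purely algebraic: translating the hypothesis $e+f \leq (c^2-\mu^2)\sqrt{d}/(\mu c)$ into $\beta \leq c$. Using the identities $\beta^2 - \mu^2 = \mu^2(e+f)/(q-e)$ and $\sqrt{d}/(q-e) = \beta/\mu$, the hypothesis rearranges to $\beta^2 - \mu^2 \leq (c^2-\mu^2)\beta/c$. The factorization
$$
x - \mu - \frac{(c^2-\mu^2)x}{c(x+\mu)} = \frac{(x-c)(cx+\mu^2)}{c(x+\mu)}
$$
then forces $\beta \leq c$, since $c\beta + \mu^2$ and $c(\beta+\mu)$ are both strictly positive. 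In the APMUB specialization $\mu = 1,\, c = 2$, the identity $\beta - 1 = (e+f)/[\sqrt{q-e}\,(\sqrt{q+f} + \sqrt{q-e})]$ makes the estimate $\beta = 1 + \mathcal{O}((e+f)/\sqrt{d}) = 1 + \mathcal{O}(d^{-1/2})$ transparent whenever $e+f$ is held constant, which is precisely the regime targeted by Section \ref{ConsAPMUB}.

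I expect the only delicate point to be this last piece of algebra: the hypothesis is phrased in $e+f$ and $\sqrt{d}$ while the conclusion concerns $\beta$, and it is the factorization of the quadratic in $\beta$ that makes the implication tight — any cruder estimate would only give a suboptimal sufficient condition. The real case is then immediate: replacing the complex Hadamard $H$ by a real Hadamard of order $(q-e)$, when one exists, leaves every orthogonality and modulus computation unchanged and produces APRMUBs with the same parameters.
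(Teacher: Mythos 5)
Your proposal is correct and follows essentially the same route as the paper: build one orthonormal basis per parallel class by embedding normalized Hadamard rows into the block supports (the paper cites this step to its earlier work rather than re-deriving it), bound cross-basis inner products by $\mu/(q-e) = \beta/\sqrt{d}$, read off $\Delta$ and $\epsilon$ from the $\mu=1$ and constant-block-size structure, and translate the hypothesis on $e+f$ into $\beta \leq c$. Your quadratic factorization in $\beta$ and the conjugate-surd identity for $\beta - 1$ are algebraically equivalent to the paper's parametrization $\beta = \mu\bigl(\sqrt{1+x^2}+x\bigr)$ with $x = \tfrac{e+f}{2\sqrt{d}}$ and its series expansion, and both arguments are sound.
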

\begin{proof}
We have $|X|= (q-e)(q+f)$ with each parallel class having the block size $k = (q-e)$. Here $\mu$ is the maximum number of
elements that are common between two blocks from different parallel classes. Following~\cite[ Theorem 1]{ak22}, this implies that, $\beta =  \frac{\mu \sqrt{d}}{q-e} = \mu \sqrt{\frac{q+f}{q-e}}$. Further, using the relation $d= q^2 +(f-e) q - ef$, we obtain  $\beta = \mu(\sqrt{1+x^{2}} + x)$, where $x = \frac{e+f}{2\sqrt{d}}$. From the definition of $\beta$-AMUBs, $\beta$ must be bounded for all values of $d$. Let this bound for $\beta$ be $c$, then $\mu(\sqrt{1+x^{2}} + x) \leq c \Rightarrow 0 \leq (e+f) \leq \left( \frac{c^2 - \mu^2}{\mu c} \right) d^{\frac{1}{2}} $. This inequality can be restated in terms of $q$ as $0 \leq ( c^2 e+\mu^2 f) \leq (c^2-\mu^2) q$, which is the condition for $\beta$ being bounded above by the constant $c$.

In order to see the asymptotic variation of $\beta$ in terms of $q$, we consider the expansion of terms as follows:
\begin{equation}
\label{beta-q}
\beta = \mu\left(1+ \frac{ e+f }{2q} +\frac{ (e+f)(3e-f) }{ 2^3 q^2} +\frac{(e+f)(5e^2-2ef+f^2) }{  2^4 q^3 }+ \ldots \right).
\end{equation} 
To understand the asymptotic variation of $\beta$ in terms of $d$, we again use the relation $d= q^2 +(f-e) q - ef$ to express $q$ in terms of $\sqrt{d}$ and thereafter, expanding the expression for $\beta= \mu \sqrt{\frac{q+f}{q-e}} = \mu(\sqrt{1+x^{2}} + x)$, where $x = \frac{f+e}{2\sqrt{d}}$, in terms of negative power of $\sqrt{d}$ and we obtain
\begin{equation}
\label{beta-d}
\beta  = \mu \left( 1 + \frac{e+f}{2 \sqrt{d}}+ \frac{(e+f)^2}{2^3 d} - \frac{(e+f)^4}{2^7 d^2}+  \frac{(e+f)^6}{2^{10} d^3} - \frac{5 (e+f)^8}{2^{15} d^4} + \ldots \right).
\end{equation}
Thus, for a given $e$ and $f$, for large $d~(\text{or} ~q)$, asymptotically   $\beta = \mu +\mathcal{O}(\frac{1}{q}) = \mu +\mathcal{O}(\frac{1}{\sqrt{d}}) $. Therefore, if $\mu = 1$, the construction yields APMUBs, provided $\beta \leq 2 \Rightarrow c =2$ as we have seen above that $\beta$ is bounded  by $c$. And in this situation we get $0 \leq (e+f) \leq  \frac{3}{2} d^{\frac{1}{2}}$. 

To get the values of the set $\Delta$, note that when $\mu =1$, there is maximum one element common between any pair of blocks from different parallel classes. And since Hadamard matrices are used for constructing orthonormal bases, thus $|\braket{u|v}| = \frac{1}{q-e} $ corresponding to the situations when one element is common between the pair of blocks and $|\braket{u|v}| = 0 $ corresponding to situation, when no elements are common between the pair of blocks. Thus $\Delta = \{0, \frac{1}{q-e} \}$. 

To calculate sparsity, note that for each vector constructed from a block of size $k$, we will have exactly $k$ many non-zero and $d- k$ many zero entries, hence $$\epsilon = \frac{d-k}{d} = 1 - \frac{k}{d}  = 1 - \frac{q-e}{(q+f)(q-e)}= 1 - \frac{1}{q+f}.$$

If a real Hadamard matrix of order $(q-e)$ exists, we can exploit it to obtain $r$ many real approximate MUBs in $\mathbb{R}^d$, with same values of the parameters $\beta$, $\Delta$ and $\epsilon$.
\end{proof}

If the construction in~\cite{ak22} is to be used for APMUBs, then $\mu$ should be 1. Thus $\mu$, which is the maximum number of elements common between any pair of blocks from different parallel classes, is the most critical parameter here. Further note that, $\mu$ is always greater than or equal to $1$, hence, a very limited kinds of RBDs can be used to construct APMUBs. As per Lemma~\ref{RBD1}, $\mu \geq \lceil \frac{k}{s} \rceil$. Thus, for $\mu = 1$, an RBD having a constant block size must have $k \leq s$, i.e., the block size must not be greater than the number of blocks in the parallel class. In this connection, we have noted that an RBD constructed using MOLS have $\mu = 1$. In fact, between any pair of blocks from different parallel classes, in such an RBD, there is exactly one element in common.

In our above theorem, we have $|X| = d = (q-e)(q+f)$, where number of elements in a block is $(q-e)$, i.e., $k = (q-e)$, and number of blocks in a parallel class is $(q+f)$, hence $s= (q+f)$. The reason we are expressing it like this will be clear in Theorem \ref{th:(q-e)(q+f)} where we  demonstrate the construction of such an RBD. Since $e$ and $ f$ are bounded by a positive integer, if $e\geq f$, it will ensure that $k \leq s$. Since$\beta = \mu \sqrt{\frac{s}{k}} = \mu \sqrt{\frac{q+f}{q-e}} $, hence for large $d= (q-e)(q+f)$ we obtain $\beta \rightarrow \mu$, which is also evident from the asymptotic expansion of $\beta $ above. 

For $|X| = d = (q-e)(q+f)$, we can have an RBD, where the block size is $(q+f)$, hence having $(q-e)$ blocks in each parallel class. However, in such a situation, $\mu > 1$ Lemma \ref{RBD1}, and hence we cannot get APMUBs. However, they can provide AMUBs as in~\cite{ak22}. The result of~\cite[Theorem 4]{ak22} is a particular case of this situation, with $e = 0, f = 1$ and $\mu = 2$. In this case, $q+1$ many parallel classes are there in an RBD, each having a constant block size of $(q+1)$. In this case, $\beta = 2 \sqrt{\frac{q}{q+1}} = 2 - \mathcal{O}(\frac{1}{\sqrt{d}})$, i.e., though the maximum value of the inner product was slightly less than $\frac{2}{\sqrt{d}}$, but asymptotically $\beta$ converges to 2.  Further $\Delta$ is also not two-valued. Thus the construction did not satisfy  the conditions needed for APMUBs which is $\beta = 1+ \mathcal{O}(d^{-\lambda})$, for some $\lambda > 0 $ and $\Delta$ being the set consisting of just two elements with one being 0. 

Thus, in order to obtain APMUBs, the RBDs in use must have $\mu=1$ and all the block sizes must be same. With this understanding, we will explore more suitable designs in the following sections, so that the upper bound on the absolute inner product values can be improved than the results presented in~\cite{ak22} to obtain APMUBs.  

\section{Exact constructions of APMUBs through RBDs}
\label{ConsAPMUB}
As followed from previous section, an RBD having constant block size must have $\mu =1$, in order to obtain APMUBs. In this section we explain the constructions of such RBDs. Since our focus is to build APMUBs in composite, we concentrate on $d= k \cdot s$, and consider two categories. 
\begin{itemize}
\item The first  construction, being generic in nature, will work for  any composite $d = k\cdot s = (s-e)s $ with $0 \leq e \leq \frac{3}{2}d^{\frac{1}{2}}$. Here the number of APMUBs is  at least $N(s)+1$ when $e>0$ and $N(s)+2$ when $e=0$. 
\item The second one is considered when $d$ can be expressed as $(q-e)(q+f),\,\, 0< \ f \leq e$, where $q$ is some power of prime. Here the number of APMUBs is at least $\lfloor \frac{q-e}{f} \rfloor+1$.
\end{itemize}

These are presented in the Sections~\ref{case1} and ~\ref{case2} respectively as below.
Thus here our approach to is to obtain large numbers of APMUBs, if the composite dimension $d$ can be expressed in some generic form. In the first category our starting point is $w$-MOLS$(s)$ and in the second category we initiate with $(q^2,q,1)$-ARBIBD. In each case, we will first demonstrate the construction with an example, then outline the algorithm for the respective construction and then provide the proof of correctness. 

\subsection{$d = k \cdot s= (s-e)s$, $0 \leq e \leq \frac{3}{2}d^{\frac{1}{2}}$}
\label{case1}
Let us first demonstrate the method by explicitly constructing RBD$(X,A)$ with $|X|=  2 \cdot 5 = 10$, i.e., here $s=5$ and $k=2$.  

\begin{enumerate}
\item To begin with, consider the following $4$-MOLS$(5)$ and the $M_{ref}$:

{\tiny
$LS_1 = \begin{bmatrix}
     1& 2   &3  &4  &5   \\
      5& 1   &2 &3  &4  \\
     4& 5  &1  &2  &3 \\
     3& 4 &5  &1   &2 \\
  2& 3  &4  &5  &1  \\
 \end{bmatrix},
 LS_2 = \begin{bmatrix}
     1& 2   &3  &4  &5   \\
      4& 5   &1  &2  &3  \\
     2& 3   &4  &5  &1 \\
     5& 1 &2  &3   &4 \\
  3& 4   &5  &1  &2  \\
 \end{bmatrix},
 LS_3 = \begin{bmatrix}
     1& 2   &3  &4  &5   \\
     3& 4   &5  &1  &2  \\
     5& 1   &2  &3  &4 \\
     2& 3 &4  &5   &1 \\
  4& 5   &1  &2  &3  \\
 \end{bmatrix}$,

$LS_4 = \begin{bmatrix}
     1& 2   &3  &4  &5   \\
      2& 3   &4  &5  &1  \\
     3& 4   &5  &1  &2 \\
     4& 5 &1  &2   &3 \\
  5& 1   &2  &3  &4  \\
 \end{bmatrix},
 M_{ref} = \begin{bmatrix}
   1& 2& 3&4&5 \\
  6&7 &8& 9 &10  \\
 11&12  &13&14 &15 \\
 16 &17  &18 &19 &20 \\
21&22 &23 &24 &25 \\
 \end{bmatrix}$.
}

\item Using Construction \ref{cons:MOLS-RBD}, we construct RBD$(\bar{X},\bar{A})$, with $|\bar{X}| = 25$ and $\bar{A}$ having $6$ parallel classes. We used $M_{ref}$ as shown above to obtain the following RBD. We collect blocks from each parallel class, in a $5 \times 5$ matrix, where each row represents one block of the parallel class, and index the blocks as $\bar{b}^l_i$, where $l$ represents the index of parallel class and $i$ represents the block number within the parallel class.

{\tiny
$\bar{P}_1 = \begin{bmatrix}
     \bar{b}^1_5=  \{ 1& 7   &13  &19  &25 \}  \\
     \bar{b}^1_4=  \{ 2& 8   &14  &20  &21 \}  \\
     \bar{b}^1_3=  \{ 3& 9   &15  &16  &22 \} \\
     \bar{b}^1_2=  \{ 4& 10 &11  &17   &23 \} \\
     \bar{b}^1_1= \{ 5& 6   &12  &18  &24  \}\\
 \end{bmatrix},
\bar{P}_2 = \begin{bmatrix}
     \bar{b}^2_5= \{1&8  &15  &17  &24 \}\\
     \bar{b}^2_4= \{ 2&9  &11  &18  &25 \} \\
     \bar{b}^2_3= \{3&10 &12 &19  &21\}\\
     \bar{b}^2_2= \{ 4 &6  &13 &20  &22 \}\\
     \bar{b}^2_1= \{5&7   &14 &16  &23\}\\
 \end{bmatrix}$,

$\bar{P}_3 = \begin{bmatrix}
    \bar{b}^3_5= \{ 1&9   &12 &20 &23 \}\\
    \bar{b}^3_4= \{ 2&10 &13 & 16 &24 \}  \\
    \bar{b}^3_3= \{  3&6   &14 &17 &25  \}\\
    \bar{b}^3_2= \{  4&7   &15 &18 &21 \} \\
    \bar{b}^3_1= \{  5&8   &11 &19  &22 \} \\
 \end{bmatrix},
\bar{P}_4 = \begin{bmatrix}
     \bar{b}^4_5=\{ 1&10 &14&18&22\} \\
     \bar{b}^4_4=\{ 2&6  &15 &19&23\}   \\
     \bar{b}^4_3=\{ 3&7  &11&20&24\}  \\
     \bar{b}^4_2=\{ 4 &8 &12 &16 &25 \} \\
     \bar{b}^4_1=\{  5&9 &13 &17 &21 \} \\
 \end{bmatrix}$,

$\bar{P}_{\infty} = \begin{bmatrix}
   \bar{b}^5_5= \{1&6&11&16&21\} \\
   \bar{b}^5_4=\{  2&7 &12 &17 &22  \} \\
   \bar{b}^5_3=\{  3&8  &13&18 &23 \} \\
   \bar{b}^5_2= \{ 4 &9  &14 &19 &24 \} \\
   \bar{b}^5_1=\{    5&10 &15 &20 &25 \}  \\
 \end{bmatrix},
  \bar{P}_0 = \begin{bmatrix}
  \bar{b}^6_5=\{1& 2& 3&4&5 \}\\
  \bar{b}^6_4=\{ 6&7 &8& 9 &10 \}  \\
  \bar{b}^6_3=\{ 11&12  &13&14 &15  \}\\
  \bar{b}^6_2=\{ 16 &17  &18 &19 &20 \} \\
  \bar{b}^6_1=\{ 21&22 &23 &24 &25 \} \\
 \end{bmatrix}
$.
}

Here $\bar{A} = \{\bar{P}_1 \cup \bar{P}_2 \cup \bar{P}_3 \cup \bar{P}_4 \cup \bar{P}_0 \cup \bar{P}_{\infty} \}$. Note that, any pair of blocks from different parallel classes has exactly one element in common. 

\item Now we remove any 3 blocks from the parallel class $\bar{P}_0$, say $\{\bar{b}^6_1, \bar{b}^6_2, \bar{b}^6_3\}$.

\item Next we remove the elements contained in this block from the entire design $(\bar{X}, \bar{A})$.

\item Then we discard $\bar{P}_0$ from the design. Here we get RBD$(X, A)$ consisting of $5$ parallel classes, each having $5$ blocks of size $2$, where $X = \{1, 2, 3, 4, 5, 6, 7, 8, 9, 10\}$ and $A = \{P_1 \cup P_2 \cup P_3 \cup P_4 \cup P_\infty \}$. Explicitly, we have,

\small
\begin{equation*}
P_1 = \begin{bmatrix}
     b^1_5=  \{ 1& 7   \}  \\
     b^1_4=  \{ 2& 8    \}  \\
     b^1_3=  \{ 3& 9    \} \\
     b^1_2=  \{ 4& 10  \} \\
     b^1_1= \{ 5& 6     \}\\
 \end{bmatrix},
P_2 = \begin{bmatrix}
    b^2_5= \{1&8   \}\\
    b^2_4= \{ 2&9   \} \\
    b^2_3= \{3&10 \}\\
    b^2_2= \{ 4 &6   \}\\
    b^2_1= \{5&7   \}\\
 \end{bmatrix},
 P_3= \begin{bmatrix}
     b^3_5= \{ 1&9    \}\\
     b^3_4= \{ 2&10  \}  \\
     b^3_3= \{  3&6  \}\\
     b^3_2= \{  4&7    \} \\
     b^3_1= \{  5&8    \} \\
 \end{bmatrix},
 \end{equation*}
\begin{equation*} 
P_4 = \begin{bmatrix}
    b^4_5=\{ 1&10 \} \\
    b^4_4=\{ 2&6  \}   \\
    b^4_3=\{ 3&7  \}  \\
    b^4_2=\{ 4 &8  \} \\
    b^4_1=\{  5&9  \} \\
 \end{bmatrix},
P_\infty = \begin{bmatrix}
   b^5_5= \{1&6\} \\
   b^5_4=\{  2&7   \} \\
   b^5_3=\{  3&8   \} \\
   b^5_2= \{ 4 &9   \} \\
   b^5_1=\{    5&10  \}  \\
 \end{bmatrix}.
 \end{equation*}
\end{enumerate}
\normalsize

Note that, for this particular case using $(10,2,1)$-BIBD, one can construct an RBD with $9$ parallel classes each having $5$ many blocks of constant block size $2$.  
One must note that this construction may not provide RBDs having maximum number of parallel classes, with constant block size and $\mu = 1$. Even if we include the parallel class $P_0 = \bar{P}_0 \setminus \{\bar{b}^6_1, \bar{b}^6_2, \bar{b}^6_3\} $ in the RBD$(X,A)$, it will not change the value of $\mu$ which will remain equal to 1. However, the block size of $P_0$ will be 5 then and hence RBD(X,A) will contain two different block sizes. Therefore we discard the $P_0$. Nevertheless we will see that even $P_0$ can be used to construct orthonormal basis which will be mutually unbiased with all the orthonormal basis constructed using $\{P_1 \cup P_2 \cup P_3 \cup P_4 \cup P_\infty \}$.

The technique is more formally explained for the general case in Construction \ref{cons:(s-e)s} below. 
\begin{cons}
\label{cons:(s-e)s}
\normalfont
Let $d= k \cdot s =(s-e)s$, with $0< e \leq s$.

\begin{enumerate}
\item Using Construction \ref{cons:MOLS-RBD}, construct $RBD(\bar{X}, \bar{A})$, where $\bar{X} = \{1, 2, \ldots, s^2$\}. It will have $r = N(s)+2$ many parallel classes, namely $\{\bar{P}_1, \bar{P}_2, \ldots, \bar{P}_w, \bar{P}_0, \bar{P}_\infty \}$, each having $s$ many blocks of constant size $s$. Denoting blocks of the parallel class $\bar{P}_l$ with $\bar{b}^l_i$, for $i = 1, 2, \ldots, s$, we note that between any two blocks from different parallel classes, there is exactly one element in common, i.e., $|\bar{b}^l_i \cap \bar{b}^m_j| = 1, \ \forall ~l \neq m$.

\item Pick a parallel class, say $\bar{P}_{0}$. Remove $e$ many blocks from it and denote as $S = \{\bar{b}^0_1 \cup \bar{b}^0_2 \cup \ldots \cup \bar{b}^0_e \}$.

\item Remove the elements in $S$ from $\bar{X}$ and let us denote the new set with $X$, i.e., $X= \bar{X} \setminus S$. Further, we remove the elements in $S$ from the parallel classes $\{\bar{P}_2, \bar{P}_3, \ldots, \bar{P}_{w},\bar{P}_\infty \}$ and denote them by $P_l$, for $l = 2, 3, \ldots, r$, i.e., $P_l =  \bar{P}_{l} \setminus S$. Then $A =\{P_2, P_3, \ldots, P_{w}, P_\infty \}$ with $P_l = \{b^l_1, b^l_2, \ldots, b^l_q \}$, where $b^l_i = \bar{b}^l_i \setminus S$.

\item Discard the parallel class $\bar{P}_0$. The resulting RBD$(X,A)$ is the required design. For convenience, rename the elements from 1 to $(s-e)s$.
\end{enumerate}
\end{cons}

We claim that the above design $(X, A)$ is an RBD, such that $|X| = (s-e)s$ and $A$ consist of $N(s)+1$ many parallel classes, i.e., $A = \{P_2, P_3, \ldots,P_{w}, P_{\infty} \}$, each having $s$ many blocks, i.e., $P_l = \{ b^l_1, b^l_2, \ldots, b^l_s\}, l = 1, 2, \ldots, s$, each of size $(s-e)$, i.e., $ |b^l_i| = (s-e) \ \forall i,l$, such that blocks from different parallel classes have at most one element in common, i.e., $ |b^l_i \cap b^m_j| \leq 1 \ \forall \, l \neq m$. We formalize this in the form of a lemma below.

\begin{lemma}
\label{lemxx1}
Let $d=(s-e)s$ for $s, e \in \mathbb{N}$ with $0< e \leq s$. Then one can construct an RBD$(X,A)$, with $|X| = d$ having constant block size $(s-e)$ with $\mu = 1$, and having $N(s)+1$ many parallel classes, where $N(s)$ is the number of MOLS$(s)$.
\end{lemma}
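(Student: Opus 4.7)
The plan is to follow Construction~\ref{cons:(s-e)s} step by step and verify each of the four claimed properties: $|X|=d$, constant block size $s-e$, precisely $N(s)+1$ parallel classes, and $\mu=1$.

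First I would invoke Lemma~\ref{MOLS-RBD} applied to an $N(s)$-MOLS$(s)$ to obtain the auxiliary design $(\bar X,\bar A)$ with $|\bar X|=s^2$, consisting of $N(s)+2$ parallel classes, each containing $s$ pairwise disjoint blocks of size $s$, and enjoying the key intersection property that any two blocks from different parallel classes share exactly one element. Picking any parallel class $\bar P_0$ and forming $S$ as the union of $e$ of its blocks, I get $|S|=es$ (those $e$ removed blocks are pairwise disjoint inside $\bar P_0$), whence $X=\bar X\setminus S$ has cardinality $s^2-es=(s-e)s=d$.

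The heart of the argument is verifying the new block size. For any $l\neq 0$ and any $i$, the block $\bar b^l_i$ meets each of the $e$ removed blocks $\bar b^0_1,\ldots,\bar b^0_e$ in exactly one element, and these $e$ witnesses are distinct because distinct blocks of $\bar P_0$ are disjoint. Therefore $|\bar b^l_i\cap S|=e$ exactly, so $b^l_i:=\bar b^l_i\setminus S$ has constant size $s-e$, independent of $l$ and $i$. Grouping these modified blocks class-by-class yields a parallel class $P_l$ of $(X,A)$: disjointness is inherited from $\bar P_l$, and the union evaluates to $\bar X\setminus S=X$. Discarding $\bar P_0$ then leaves exactly $(N(s)+2)-1=N(s)+1$ parallel classes.

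Finally, for the parameter $\mu$: the inclusions $b^l_i\subseteq\bar b^l_i$ force $|b^l_i\cap b^m_j|\leq|\bar b^l_i\cap\bar b^m_j|=1$ for every pair with $l\neq m$, giving $\mu\leq 1$; and Lemma~\ref{RBD1} applied to the new design supplies $\mu\geq\lceil(s-e)/s\rceil=1$ whenever $0<e<s$, so $\mu=1$. The only step requiring genuine care is the constant block-size count, which rides entirely on the ``exactly one element in common'' property guaranteed by the MOLS-based construction; the remaining checks are bookkeeping.
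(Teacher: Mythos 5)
Your proposal is correct and follows essentially the same route as the paper: both rely on Construction~\ref{cons:(s-e)s} (hence Lemma~\ref{MOLS-RBD}), and both hinge on the observation that each block outside the sacrificed parallel class meets the removed set $S$ in exactly $e$ elements, yielding constant block size $s-e$ and $\mu=1$. Your write-up is merely more explicit than the paper's (e.g., the distinctness of the $e$ intersection witnesses and the $\mu\geq 1$ direction via Lemma~\ref{RBD1}), but it adds no new idea.
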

\begin{proof}
Refer to Construction \ref{cons:(s-e)s} above. In RBD$(\bar{X}, \bar{A})$ any pair of blocks from different parallel classes is of size $s$ and has exactly one element in common, i.e., $ |\bar{b}^l_i \cap \bar{b}^m_j| = 1 \ \forall \ l \neq m$. Hence removal of the elements $S = \{ \bar{b}^1_1 \cup \bar{b}^1_2 \cup \ldots \cup \bar{b}^1_e \}$ from entire design will discard exactly $e$ elements from each block $\bar{b}^l_i, l\neq 1$. Hence, the blocks $b^l_i= \bar{b}^l_i \backslash S$ will be of constant size  $|b^l_i| = s-e$ and $|b^l_i \cap b^m_j| \leq 1 \ \forall \, l \neq m$. 
\end{proof}
Now we can use this RBD$(X,A)$ to construct APMUBs in dimension $d= |X|= (s-e)s$ following Theorem~\ref{th1}. 
\begin{theorem}
\label{th:(s-e)s}
Let $d=(s-e)s$ for $s, e \in \mathbb{N}$ with $0 < e \leq \frac{3}{2} d^{\frac{1}{2}}$ . Then there exist $N(s)+1$ many APMUBs with $\Delta = \{0, \frac{1}{s-e}\}$ and $\beta = \sqrt{\frac{s}{s-e}} = 1 + \mathcal{O}(d^ {-\lambda}) \leq 2$, where $\lambda = \frac{1}{2}$ and sparsity  $\epsilon = 1 - \frac{1}{s}$.  Further, if there exists a real Hadamard matrix of order $(s-e)$, then we can construct  $N(s)+1$ many APRMUBs with the same parameters. For the case $e=0$, there exist $N(s)+2$ many MUBs, and if there exists a real Hadamard matrix of order $s$, then we can construct $N(s)+2$ many Real MUBs.
\end{theorem}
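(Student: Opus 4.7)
The plan is to combine the RBD construction of Lemma \ref{lemxx1} with the generic AMUB construction of Theorem \ref{th1}, specialised to $q = s$ and $f = 0$, which makes $|X| = (q-e)(q+f) = (s-e)s = d$ with block size $k = s-e$. Under this identification, the upper bound $0 \leq e+f \leq \frac{3}{2}d^{1/2}$ appearing in Theorem \ref{th1} (with $\mu=1$ and $c=2$) collapses to the hypothesis $0 < e \leq \frac{3}{2}\sqrt{d}$ of the present statement, so the condition $\beta \leq 2$ needed for the APMUB regime is automatically in force.

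First I would invoke Lemma \ref{lemxx1} to produce an RBD$(X,A)$ with $|X|=d$, $N(s)+1$ parallel classes, constant block size $s-e$, and $\mu=1$. Feeding this RBD into Theorem \ref{th1} then yields $N(s)+1$ many $\beta$-AMUBs. The parameter values follow by direct substitution: $\beta = \mu\sqrt{(q+f)/(q-e)} = \sqrt{s/(s-e)}$, which by the asymptotic expansion of $\beta$ in powers of $d^{-1/2}$ given in Theorem \ref{th1} equals $1+\mathcal{O}(d^{-1/2})$, giving $\lambda = 1/2$; since $\mu=1$ forces any two blocks from different parallel classes to share at most one element and we use Hadamard matrices of order $s-e$ to turn each parallel class into an orthonormal basis, the only possible values of $|\braket{u|v}|$ between vectors from different bases are $0$ (disjoint blocks) or $1/(s-e)$ (exactly one common element), so $\Delta = \{0, 1/(s-e)\}$; and the sparsity is $\epsilon = 1 - k/d = 1 - 1/s$. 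The real statement is obtained by the same argument with a real Hadamard matrix of order $s-e$ replacing the complex one, and all parameters are preserved.

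The case $e = 0$ must be treated separately because in that situation Construction \ref{cons:(s-e)s} skips the deletion step entirely: the whole RBD$(\bar{X},\bar{A})$ of Lemma \ref{MOLS-RBD} is retained, yielding $N(s)+2$ parallel classes of constant block size $s$ with $\mu = 1$. Applying the generic construction now gives $\beta = \sqrt{s/s} = 1$ and $\Delta = \{0, 1/s\} = \{0, 1/\sqrt{d}\}$, so the resulting bases are in fact exact MUBs rather than merely APMUBs. The corresponding real statement uses a real Hadamard matrix of order $s$.

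The main obstacle is not any single deep step but the bookkeeping required to see that the specialisation $f=0$ of Theorem \ref{th1} delivers exactly the parameters claimed, and in particular to verify that $\beta \leq 2$ is genuinely equivalent to the hypothesis $e \leq \frac{3}{2}\sqrt{d}$. The latter is a short computation from $\sqrt{s/(s-e)} \leq 2 \iff e \leq 3s/4$ combined with the quadratic relation $s = \tfrac12\bigl(e + \sqrt{e^{2}+4d}\bigr)$, which gives $16e^{2} \leq 36 d$, i.e., $e \leq \tfrac{3}{2}\sqrt{d}$. With that calculation in hand, and with the $e=0$ sub-case peeled off to recover an extra parallel class and exact (rather than approximate) unbiasedness, the theorem follows as a direct corollary of Lemma \ref{lemxx1} and Theorem \ref{th1}.
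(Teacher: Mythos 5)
Your proposal is correct and follows essentially the same route as the paper: both invoke Lemma \ref{lemxx1} to obtain the RBD with $N(s)+1$ parallel classes, constant block size $s-e$ and $\mu=1$, then specialise Theorem \ref{th1} with $q=s$, $f=0$ to read off $\beta$, $\Delta$ and $\epsilon$, and peel off the $e=0$ case via Construction \ref{cons:MOLS-RBD} to gain the extra parallel class and exact unbiasedness. Your explicit verification that $\sqrt{s/(s-e)}\leq 2 \iff e\leq \tfrac{3}{4}s \iff e\leq\tfrac{3}{2}\sqrt{d}$ (via $s=\tfrac12(e+\sqrt{e^2+4d})$) is a correct and slightly more detailed account of a step the paper only states.
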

\begin{proof}
In order to show that we can produce such number of APMUBs, let us consider an RBD$(X,A)$ with $|X| = (s-e)s$ having $N(s)+1$ parallel classes of constant block size $(s-e)$ such that between the blocks from different parallel classes, there is at most one element in common, and hence $\mu = 1$. This follows from Lemma~\ref{lemxx1}. Then using this RBD along with a Hadamard matrix of order $(s-e)$, we can construct orthonormal bases following Theorem~\ref{th1}, with the values of $\Delta, \beta, \epsilon$ by substituting $q = s$, and $f = 0$. Further, the condition that $\beta \leq 2$ for APMUB gives $e \leq \frac{3}{2} d^{\frac{1}{2}}$. In terms of $s$ this inequality becomes $e \leq \frac{3}{4} s$ and in terms of $k = (s-e)$, the inequality becomes $s \leq 4k$ so that $\beta \leq 2$. The parameters of APMUBs are  $\Delta = \{0, \frac{1}{s-e} \}$, $\beta = \sqrt{\frac{s}{s-e}} = 1 + \mathcal{O}(d^{-\frac{1}{2}}) \leq 2$ and $\epsilon = 1- \frac{1}{s}$. Since the number of parallel classes is one more than number of Mutually Orthogonal Latin Squares of Order $s$, we have $r = N(s)+1$. 

In the situation when $d = s^2$, i.e., $e=0$, using Construction~\ref{cons:MOLS-RBD} above, we obtain RBD$(X,A)$ having $N(s)+2$ parallel classes, such that any pair of blocks have exactly one point in common.  Now using this RBD$(X,A)$,  along with a Hadamard matrix of order $s$, we can construct orthonormal bases following Theorem~\ref{th1} which will provide $N(s)+2$ many MUBs. Hence for $e=0$, the result follows directly. Construction~\ref{cons:(s-e)s} is applied only when $e> 0$. Thus when $d = s^2$, we get $N(s)+2$ MUBs, and when real Hadamard matrix of order $s$ is available, that can be used to construct $N(s)+2$ real MUBs. 
\end{proof} 
\begin{remark}
Note that we can construct an orthonormal basis corresponding to the parallel class $P_0 = \bar{P}_0 \setminus S$, again having $(s-e)s$ elements. These basis vectors will be mutually unbiased with all the orthonormal bases constructed using the parallel classes $P_1, P_2, \ldots P_w, P_\infty$. However, if we include it in the set of orthonormal bases then $\Delta = \{0, \frac{1}{\sqrt{d}}, \frac{\beta}{\sqrt{d}} \}$ will have three values, and the condition of APMUB will not be satisfied. Thus we ignore $P_0$, even though it provides an orthonormal basis which is mutually unbiased with all the bases constructed above.
\end{remark}
For a composite $s$, $N(s) \rightarrow \mathcal{O}(s^{\frac{1}{14.8}}) \ll s-1$, which is an upper bound~\cite{HandBook,stinson2007combinatorial,wocjan2004new}. Thus in case $d$ can be expressed as $d =k\cdot s = (s-e)s$, where $s, e \in \mathbb{N}$ with $k < s \leq 4k$  (or $0 <  e \leq \frac{3}{4}s$), we can always construct $N(s)+1$ many APMUBs. We refer to this as {\sf{Mutually Orthogonal Latin Square Lower Bound construction}} for APMUBs. For example, given $d= 2^2 \cdot 3^2 \cdot 5 \cdot 7$, 
\begin{itemize}
\item this can be factored as $d= 35 \cdot 36$, which will provide $N(36)+ 1 = 9$ many APMUBs with $\beta =1.01$, 
\item or $d= 30 \cdot 42$ which will give $N(42)+ 1 = 6$ many APMUBs with $\beta = 1.18$, 
\item or $d = 28 \cdot 45$, which will give $N(45)+ 1 = 7$ many APMUB with $\beta = 1.27$. 
\end{itemize}
Here, $N(36) = 8, N(42) = 5$ and $N(45) = 6\,$ are the presently known values of the maximum number of MOLS of these orders \cite{HandBook}. Let us now explain the significance of our construction method through this example.
\begin{itemize}
\item The number of complex MUBs for this $d= 2^2 \cdot 3^2 \cdot 5 \cdot 7$ which can be constructed using prime power decomposition, and then taking tensor product, would be $2^2 +1 = 5$. This is the lower bound and there is no better known result than this in the number of MUBs in this dimension. The value of $\beta$ is $1$ in this case as exact MUBs are referred.
\item Expressing $d= 35 \cdot 36$, we have more number of APMUBs (9 many) than MUBs, with $\beta =1.01$. 
\item Further, expressing $d = 28 \cdot 45$, we can get  7 many APRMUBs with $\beta = 1.27$. This is because, we have real Hadamard matrix on the dimension $4 \cdot 7 = 28$.
\end{itemize}

Our Theorem \ref{th:(s-e)s} can be compared with that of \cite[Theorem 3]{wocjan2004new}, where the result could be achieved using $(k,s)$-nets. This, in turn, can be constructed from Mutually Orthogonal Latin Squares of order $s$. Thus, for a square dimension $d = s^2$, there would be $N(s)+2$ many MUBs. Moreover, the result in \cite[Corollary 3]{ak22} points out that using RBDs, one can construct $q+1$ many MUBs of dimension $d$ when $d = q^2$. Note that $N(q) = q-1$ and thus the number of MUBs from~\cite[Corollary 3]{ak22} is same as that presented in~ \cite[Theorem 3]{wocjan2004new}. The construction of~\cite[Corollary 3]{ak22} had the advantage of using different Hadamard matrices for the construction of each MUBs, whereas a single Hadamard matrix can be used for construction of MUBs in~\cite[Theorem 3]{wocjan2004new}. 

Since Theorem~\ref{th:(s-e)s} is based on RBDs as in~\cite[Corollary 3]{ak22}, here also different Hadamard matrices can be used for the construction of each MUBs. Hence, Theorem \ref{th:(s-e)s} is a generalization of~\cite[Theorem 3]{wocjan2004new} and~\cite[Corollary 3]{ak22} as enumerated below. 
\begin{enumerate}
\item For $d = s^2$, Theorem \ref{th:(s-e)s} reproduces the results of ~\cite[Theorem 3]{wocjan2004new} in terms of number of MUBs constructed.
\item For $d= q^2$, Theorem \ref{th:(s-e)s} reproduces the results of ~\cite[Corollary 3]{ak22}, in terms of  having the advantage of using different Hadamard matrices for the construction of each MUBs and the number of MUBs constructed. 
\item As the additional contribution, for any composite $d = k\cdot s= (s-e)\cdot s$, with $0< e \leq \frac{3}{2} d^{\frac{1}{2}}$, Theorem \ref{th:(s-e)s} provides MOLS$(s)+1$ many APMUBs and one can also use different Hadamard matrices for the construction of each basis.
\end{enumerate}

The case $N(s)= s-1$ corresponds to affine plane of order $s$ and the corresponding RBD is called Affine Resolvable BIBD. When $s = q$, where $q$ is a prime power, we have well known methods to construct Affine Resolvable $(q^2,q,1)$-BIBDs. Hence in such a situation we will have $q \sim \mathcal{O}(\sqrt{d})$ many APMUBs for composite dimensions which are not square. For example, if $d = 3^4 \cdot 7 = 21 \cdot 27$, we have 29 APMUBs with $\beta = 1.13$ or for $d = 2^4\cdot 3 = 6\cdot 8$ we obtain 9 APMUBs, with $\beta = 1.15$ whereas number of MUBs is 8 and 4 respectively in these cases.
 
Now let us explain the consequences for Approximate Real MUBs. When $q$ is a prime power, and real Hadamard matrix of order $(q-e)$ exists, then we will obtain $(q+1) > \sqrt{d}$ many APRMUBs, which provides large numbers of such objects over $\mathbb{R}^d$. This is presented in the following result.

\begin{corollary}
\label{cor:(q-e)q}
Let $d=(q-e)q$, where $q$ is a prime power and $e \in \mathbb{N}$, with $0< e \leq \frac{3}{2} d^{\frac{1}{2}}$. Then there exist $q+1$ many APMUBs with $\Delta = \{0, \frac{1}{q-e} \}$, $\beta = \sqrt{\frac{q}{q-e}} = 1+ \mathcal{O}(d^{-\lambda})$, where $\lambda = \frac{1}{2}$ and $\epsilon = 1- \frac{1}{q}$. Further, if there exist real Hadamard matrices of order $(q-e)$, then one can construct $q+1$ many Almost Perfect Real MUBs with same parameters.
\end{corollary}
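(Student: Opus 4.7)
The plan is to obtain this as a direct specialization of Theorem \ref{th:(s-e)s} to the prime-power case $s = q$. The essential input is that a prime power order admits a complete set of mutually orthogonal Latin squares, so $N(q) = q - 1$; equivalently, the affine plane of order $q$ exists as a $(q^{2}, q, 1)$-ARBIBD with $q + 1$ parallel classes of $q$ disjoint blocks of size $q$, any two blocks from different classes meeting in exactly one element.

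Starting from this ARBIBD (produced, say, via Construction \ref{cons:MOLS-RBD} on $(q-1)$-MOLS$(q)$), I would invoke Construction \ref{cons:(s-e)s}: pick one parallel class $\bar{P}_{0}$, delete $e$ of its blocks, let $S$ be the union of these $e$ blocks (so $|S| = eq$), and remove $S$ from $X$ as well as from every block of every other parallel class. Lemma \ref{lemxx1} guarantees that the result is an RBD$(X,A)$ with $|X| = (q-e)q = d$, constant block size $q - e$, and $\mu = 1$. Feeding this RBD into Theorem \ref{th1} with $\mu = 1, f = 0$, and a Hadamard matrix of order $q - e$ on each block delivers orthonormal bases with the stated parameters $\Delta = \{0, \tfrac{1}{q-e}\}$, $\beta = \sqrt{q/(q-e)} = 1 + \mathcal{O}(d^{-1/2})$, and sparsity $\epsilon = 1 - 1/q$; the hypothesis $0 < e \leq \tfrac{3}{2} d^{1/2}$ is precisely what is needed to force $\beta \leq 2$ in Theorem \ref{th1}. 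The real variant follows verbatim whenever a real Hadamard matrix of order $q - e$ exists.

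The step that requires extra work, and which I expect to be the main obstacle, is justifying the count $q + 1$ rather than the $N(q) + 1 = q$ produced by a plain substitution into Theorem \ref{th:(s-e)s}. The extra basis must come from the affine-plane structure that is strictly richer than a generic MOLS-derived RBD. Concretely, the $q - e$ residual blocks of $\bar{P}_{0}$ (those not placed in $S$) remain disjoint from $S$ and still cover $X$, forming one additional parallel class of block size $q$; together with a Hadamard matrix of order $q$ this yields one more orthonormal basis whose pairwise intersections with the blocks of the other $q$ parallel classes are exactly one element each by the affine-plane property. Verifying that this extra basis combines with the previous $q$ to respect the two-element $\Delta$ of the corollary is the subtle point: either a careful choice of Hadamard matrices must be made, or, alternatively, one constructs directly an RBD with $q + 1$ parallel classes of constant block size $q - e$ and $\mu = 1$, which is consistent with the bound $r \leq s + \lfloor (s-1)/(k-1)\rfloor$ from Lemma \ref{RBD1} with $s = q$ and $k = q - e$. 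Closing this gap is the only nontrivial part of the argument; once the RBD of $q + 1$ constant-block-size parallel classes is in hand, the parameter computations are immediate from Theorem \ref{th1}.
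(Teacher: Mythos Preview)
Your core approach---specializing Theorem~\ref{th:(s-e)s} to $s=q$ a prime power, so that $N(q)=q-1$ and the construction yields $N(q)+1=q$ APMUBs with the stated $\Delta$, $\beta$, and $\epsilon$---is exactly the paper's implicit derivation. The paper offers no separate proof for this corollary; it is presented as an immediate consequence of Theorem~\ref{th:(s-e)s}.

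You are right to flag the count $q+1$ as the obstacle, and your instinct that the residual parallel class $P_0=\bar P_0\setminus S$ is the natural candidate is sound. However, this route does \emph{not} close the gap: the paper's own Remark following Theorem~\ref{th:(s-e)s} observes that the blocks of $P_0$ have size $q$ (not $q-e$), so the basis built from $P_0$ via a Hadamard matrix of order $q$ contributes inner-product values $\frac{1}{\sqrt{d}}$, forcing $\Delta=\{0,\frac{1}{\sqrt{d}},\frac{1}{q-e}\}$ and destroying the two-valued APMUB condition. Your alternative---building an RBD with $q+1$ parallel classes of \emph{constant} block size $q-e$ and $\mu=1$---is not supplied anywhere in the paper.

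In fact the paper is internally inconsistent on this point: the paragraph immediately preceding the corollary says ``we will have $q\sim\mathcal{O}(\sqrt{d})$ many APMUBs,'' and the very next corollary (Corollary~\ref{cor:(q-1)q}, the case $e=1$) states $q$ many, not $q+1$. So the ``$q+1$'' in Corollary~\ref{cor:(q-e)q} appears to be a typo for $q$; your substitution argument already proves the corrected statement, and no further work is needed.
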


The condition $e \leq \frac{3}{2} d^{\frac{1}{2}}$ is because we require $\beta \leq 2$ for APMUBs. If $e > \frac{3}{2} d^{\frac{1}{2}}$ then $\beta > 2$, hence the constructed Approximate MUBs will not satisfy the criteria for APMUBs [Definition \ref{def:APMUB}]. Further examining the expression for $\beta$ in Equation~\ref{beta-d}, for this particular construction with $\mu = 1, f = 0$, we obtain the best possible APRMUBs when $e=1$. That is, we have this situation when the dimensions are of the form $d = (q-1)q$. This we formally state in the following corollary.

\begin{corollary}
\label{cor:(q-1)q}
Consider $d = (q-1)q$, such that $q$ is a prime power and assume that a real Hadamard matrix of order $(q-1)$ exists. Then one can construct $q$ many Almost Perfect Real MUBs in dimension $d$ with $\Delta  =  \left\{0, \frac{1}{q-1} \right\}$, $\beta = \sqrt{\frac{q}{q-1}} = 1+\mathcal{O}(d^{-\frac{1}{2}})$, and $\epsilon = 1 - \frac{1}{q}$. 
\end{corollary}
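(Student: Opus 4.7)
The plan is to derive this corollary as a direct specialization of Corollary~\ref{cor:(q-e)q} (equivalently, of Theorem~\ref{th:(s-e)s}) to the parameter choice $e=1$, relying on the well-known fact that $N(q)=q-1$ for any prime power $q$. So the bulk of the work has already been done in Lemma~\ref{lemxx1} and Theorem~\ref{th:(s-e)s}; what remains is to check that the hypotheses of those results are met in this specialization and then read off the parameters.

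First I would verify the admissibility condition $0 < e \leq \tfrac{3}{2}d^{1/2}$ from Theorem~\ref{th:(s-e)s}. With $e=1$ and $d=(q-1)q$, this becomes $1 \leq \tfrac{3}{2}\sqrt{(q-1)q}$, which holds for every prime power $q\geq 2$. Hence Lemma~\ref{lemxx1} supplies an RBD$(X,A)$ with $|X|=(q-1)q$, constant block size $k=q-1$, $\mu=1$, and $N(q)+1 = q$ parallel classes. Since a real Hadamard matrix of order $q-1$ is assumed to exist, Theorem~\ref{th:(s-e)s} applies in its real form and produces $q$ many APRMUBs in dimension $d$.

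Next I would substitute $s=q$, $e=1$ into the parameter formulas given by Theorem~\ref{th1} (which is the engine underlying Theorem~\ref{th:(s-e)s}):
\begin{align*}
\Delta &= \left\{0,\tfrac{1}{s-e}\right\} = \left\{0,\tfrac{1}{q-1}\right\}, \\
\beta  &= \sqrt{\tfrac{s}{s-e}} = \sqrt{\tfrac{q}{q-1}}, \\
\epsilon &= 1 - \tfrac{1}{s} = 1 - \tfrac{1}{q}.
\end{align*}
For the asymptotic form of $\beta$, write $\beta = \sqrt{1 + \tfrac{1}{q-1}} = 1 + \tfrac{1}{2(q-1)} + O(q^{-2})$; since $q = \sqrt{d} + O(1)$, this gives $\beta = 1 + \mathcal{O}(d^{-1/2})$, and it is immediate that $\beta \leq 2$ for all $q\geq 2$, so the APRMUB condition from Definition~\ref{def:APMUB} is satisfied.

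I do not anticipate any genuine obstacle here: the corollary is a clean specialization of Corollary~\ref{cor:(q-e)q}, itself a direct consequence of Theorem~\ref{th:(s-e)s}. The only points meriting explicit mention in the write-up are (i) the use of $N(q)=q-1$ for prime powers, which is what yields exactly $q$ (not fewer) APMUBs, and (ii) the observation that the choice $e=1$ minimizes $\beta$ among all $e\geq 1$ in the family $d=(q-e)q$ (as already noted in the paragraph preceding the corollary via the expansion in Equation~\ref{beta-d}), so this is the best instance of the construction in this family.
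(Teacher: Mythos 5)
Your proposal is correct and follows essentially the same route as the paper: the corollary is obtained by specializing Corollary~\ref{cor:(q-e)q} (hence Theorem~\ref{th:(s-e)s} with $s=q$, $e=1$, using $N(q)=q-1$ for prime powers) and reading off $\Delta$, $\beta$, and $\epsilon$, with the real Hadamard matrix of order $q-1$ supplying the real case. The paper gives no separate proof beyond noting that $e=1$ optimizes $\beta$ in Equation~\ref{beta-d}, which you also observe, so there is nothing to add.
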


For example, when $d = 20$ and $156$, there would be respectively $5$ and $13$ APRMUBs of the above type. One may note that we have $\lceil \sqrt{d} \rceil$ many APRMUBs in this case. If $m = \frac{q-3}{2} \equiv 1 \bmod 4$ and $m$ is some prime power, then using the Paley Construction \cite{paley1933orthogonal}, one can obtain Hadamard 
matrix of order $2(m+1) = q-1$. Hence for any prime power $q \equiv 1 \bmod 4$, if $\frac{q-3}{2}$ is also some prime power and is equal  to $1 \bmod 4$, then the real Hadamard matrix of order $q-1$ will necessarily exist through the Paley Construction. For example, one can  consider $q = 13, 29, 5^3$ etc. For such $q$'s, the result will become independent of the Hadamard Conjecture.

\subsection{$d= k \cdot s= (q-e)(q+f), \ 0 < f \leq e $ and $ 0 < (e+f) \leq \frac{3}{2} d^{\frac{1}{2}} $}
\label{case2}
As noted in Corollary \ref{cor:(q-e)q} that if $d$ can be expresses as $k\cdot s =(q-e)q$ with $\beta = \sqrt{\frac{s}{k}} =  \sqrt{\frac{q}{q-e}}\leq 2$, then there we can construct $q =\mathcal{O}(\sqrt{d})$ many APMUBs. However, if $d$ can not be expressed in this form then one can construct $N(s)$ many APMUBs if $d$ can be expressed as $k\cdot s$ with $s$ a composite such that $k\leq s\leq 4k$, i.e., for example in the cases $d=\{ 2\cdot 3^2,  \ 2\cdot 11, \  2\cdot 3\cdot 7, \  2^2\cdot 3\cdot 7, \ldots\} $  etc. The condition $k\leq s\leq 4k$ ensures $\beta \leq 2$ and if $d$ can be expressed as $(s-e)\cdot s$ then it is equivalent to $0 \leq e \leq \frac{3}{2} d^{\frac{1}{2}}$. The best known lower bound for general $s$ is $N(s) \sim s^{\frac{1}{14.8}}$ which is much less than $s$. 

In order to obtain significantly larger number of APMUBs, for the dimensions that cannot be expressed in the form $d= (q-e)q$ where $q$ is some prime power with $0 < e \leq \frac{3}{2} d^{\frac{1}{2}}$, we now consider the form of  $d = (q-e)(q+f)$, such that $q$ is a prime power with $e, f \in \mathbb{N}$. First we show that, in such a case if $e\geq f$, then we can construct RBD$(X,A)$ with $|X| = d$ having constant block size $(q-e)$ such that $A$ can be partitioned into at least $r= \lfloor \frac{q-(e-f)}{f} \rfloor= \lfloor \frac{q-e}{f} \rfloor+1$ many parallel classes. Hence such an RBD$(X,A)$ can be used to construct $\lfloor \frac{q-e}{f}\rfloor + 1$ many orthonormal bases following~\cite[ Theorem 1]{ak22}. Further, if $ 0 < (e+f) \leq \frac{3}{2} d^{\frac{1}{2}} $, then $\beta \leq 2$, and these orthonormal bases would be APMUBs, thus providing us $\mathcal{O}(q)$ many APMUBs in such a situation. 

We explain this construction in two parts. First we consider a $(q^2, q, 1)$ Affine Resolvable BIBD as the input. We call this RBD$(\bar{X}, \bar{A})$ where $|\bar{X}|=q^2$  and all the blocks of $A$ is of the same size $q$ and number of parallel classes in $A$ is $q+1$. We use this to construct RBD$(\widetilde{X},\widetilde{A})$,  where $|\widetilde{X}| = (q-e)(q+f)$ having same number of parallel class $(q+1)$, but  the blocks are not of the same size. The sizes of the blocks are from set $ \{(q-e), (q-e+1) , \ldots, (q-e+f), q\}$. 

In the second part we use the RBD$(\widetilde{X},\widetilde{A})$ as input and construct RBD$(X,A)$ where $|X| = (q-e)(q+f)$ such that each block in $A$ is of size $(q-e)$. However, now the number of parallel class reduces to $ \lfloor \frac{q-e}{f} \rfloor+1$. 
 
To demonstrate the first part of the construction, we take $|X|= (7-3)(7+1) = 4 \cdot 8= 32$,  where $q= 7, e= 3$ and $f= 1$. We use an Affine Resolvable $(7^2,7,1)$-BIBD which we  call  RBD$(\bar{X}, \bar{A})$. It will consist of eight parallel classes. Each parallel class would consist of seven blocks of constant size 7. We represent each parallel class as a $7 \times 7$ matrix, where each row represent one block of the parallel class. Hence there would be eight such matrices as below to represent the design.

{\tiny
%\small
%\normalsize
\begin{equation*}
\bar{P}_1 = \begin{bmatrix}
\bar{b}^1_7= \{ 1 & 2 & 3 & 4 & 5 & 6 & 7 \} \\
\bar{b}^1_6= \{ 8 & 9 & 10 & 11 & 12 & 13 & 14 \}\\
\bar{b}^1_5= \{ 15 & 16 & 17 & 18 & 19 & 20 & 21 \}\\
\bar{b}^1_4= \{ 22 & 23 & 24 & 25 & 26 & 27 & 28 \}\\
\bar{b}^1_3= \{ 29 & 30 & 31 & 32 & {\color{red}33} & {\color{red}34} & {\color{red}35}  \}\\
\bar{b}^1_2= \{ {\color{red}36} & {\color{red}37} & {\color{red}38} & {\color{red}39} & {\color{red}40} & {\color{red}41} & {\color{red}42} \}\\
\bar{b}^1_1= \{ {\color{red}43} & {\color{red}44} & {\color{red}45} & {\color{red}46} & {\color{red}47} & {\color{red}48} & {\color{red}49} \}\\
\end{bmatrix},
\bar{P}_2 = \begin{bmatrix}
\bar{b}^2_7= \{ 1 & 9 & 17 & 25 & {\color{red}33} & {\color{red}41} & {\color{red}49} \}\\
 \bar{b}^2_6= \{ 2 & 10 & 18 & 26 & {\color{red}34} & {\color{red}42} & {\color{red}43}\} \\
\bar{b}^2_5= \{  3 & 11 & 19 & 27 & {\color{red}35} & {\color{red}36} & {\color{red}44} \}\\
 \bar{b}^2_4= \{ 4 & 12 & 20 & 28 & 29 & {\color{red}37} & {\color{red}45} \}\\
 \bar{b}^2_3= \{ 5 & 13 & 21 & 22 & 30 & {\color{red}38} & {\color{red}46} \}\\
 \bar{b}^2_2= \{ 6 & 14 & 15 & 23 & 31 & {\color{red}39} & {\color{red}47} \}\\
 \bar{b}^2_1= \{ 7 & 8 & 16 & 24 & 32 & {\color{red}40} & {\color{red}48}  \}\\
 \end{bmatrix},
\end{equation*}

\begin{equation*}
\bar{P}_3 = \begin{bmatrix}
\bar{b}^3_7= \{  1 & 10 & 19 & 28 & 30 & {\color{red}39} & {\color{red}48} \}\\
\bar{b}^3_6= \{  2 & 11 & 20 & 22 & 31 & {\color{red}40} & {\color{red}49} \}\\
\bar{b}^3_5= \{  3 & 12 & 21 & 23 & 32 & {\color{red}41} & {\color{red}43} \}\\
\bar{b}^3_4= \{  4 & 13 & 15 & 24 & {\color{red}33} & {\color{red}42} & {\color{red}44} \} \\
\bar{b}^3_3= \{  5 & 14 & 16 & 25 & {\color{red}34} & {\color{red}36} & {\color{red}45}\}\\
 \bar{b}^3_2= \{ 6 & 8 & 17 & 26 & {\color{red}35} & {\color{red}37} & {\color{red}46}  \}\\
\bar{b}^3_1= \{ 7 & 9 & 18 & 27 & 29 & {\color{red}38} & {\color{red}47} \}\\
 \end{bmatrix},
\bar{P}_4 = \begin{bmatrix}
\bar{b}^4_7= \{  1 & 11 & 21 & 24 & {\color{red}34} & {\color{red}37} & {\color{red}47} \}\\
\bar{b}^4_6= \{  2 & 12 & 15 & 25 & {\color{red}35} & {\color{red}38} & {\color{red} 48} \}\\
\bar{b}^4_5= \{  3 & 13 & 16 & 26 & 29 & {\color{red}39} & {\color{red}49} \}\\
\bar{b}^4_4= \{  4 & 14 & 17 & 27 & 30 & {\color{red}40} & {\color{red}43} \}\\
\bar{b}^4_3= \{  5 & 8 & 18 & 28 & 31 & {\color{red}41} & {\color{red}44} \}\\
\bar{b}^4_2= \{  6 & 9 & 19 & 22 & 32 & {\color{red}42} & {\color{red}45} \}\\
\bar{b}^4_1= \{  7 & 10 & 20 & 23 & {\color{red}33} & {\color{red}36} & {\color{red}46} \}\\
 \end{bmatrix},
\end{equation*}

\begin{equation*}
\bar{P}_5 = \begin{bmatrix}
\bar{b}^5_7= \{      1 & 12 & 16 & 27 & 31 & {\color{red}42} & {\color{red}46} \}\\
\bar{b}^5_6= \{ 2 & 13 & 17 & 28 & 32 & {\color{red}36} & {\color{red}47} \}\\
\bar{b}^5_5= \{  3 & 14 & 18 & 22 & {\color{red}33} & {\color{red}37} & {\color{red}48} \}\\
\bar{b}^5_4= \{  4 & 8 & 19 & 23 & {\color{red}34} & {\color{red}38} & {\color{red}49} \}\\
\bar{b}^5_3= \{  5 & 9 & 20 & 24 & {\color{red}35} & {\color{red}39} & {\color{red}43} \}\\
\bar{b}^5_2= \{  6 & 10 & 21 & 25 & 29 & {\color{red}40} & {\color{red}44} \}\\
\bar{b}^5_1= \{  7 & 11 & 15 & 26 & 30 & {\color{red}41} &{\color{red}45} \}\\
 \end{bmatrix},
\bar{P}_6 = \begin{bmatrix}
\bar{b}^6_7= \{      1 & 13 & 18 & 23 & {\color{red}35} & {\color{red}40} & {\color{red}45} \}\\
\bar{b}^6_6= \{ 2 & 14 & 19 & 24 & 29 & {\color{red}41} &{\color{red}46} \}\\
\bar{b}^6_5= \{ 3 & 8 & 20 & 25 & 30 & {\color{red}42} & {\color{red}47} \}\\
\bar{b}^6_4= \{ 4 & 9 & 21 & 26 & 31 & {\color{red}36} & {\color{red}48} \}\\
\bar{b}^6_3= \{ 5 & 10 & 15 & 27 & 32 &{\color{red}37} & {\color{red}49} \}\\
\bar{b}^6_2= \{ 6 & 11 & 16 & 28 & {\color{red}33} & {\color{red}38} & {\color{red}43} \}\\
\bar{b}^6_1= \{ 7 & 12 & 17 & 22 & {\color{red}34} & {\color{red}39} &{\color{red}44} \}\\
 \end{bmatrix},
\end{equation*}

\begin{equation*}
\bar{P}_7 = \begin{bmatrix}
\bar{b}^7_7= \{1 & 14 & 20 & 26 & 32 & {\color{red}38} & {\color{red}44} \}\\
\bar{b}^7_6= \{ 2 & 8 & 21 & 27 & {\color{red}33} & {\color{red}39} & {\color{red}45} \}\\
\bar{b}^7_5= \{ 3 & 9 & 15 & 28 & {\color{red}34} & {\color{red}40} & {\color{red}46} \}\\
\bar{b}^7_4= \{ 4 & 10 & 16 & 22 & {\color{red}35} & {\color{red}41} & {\color{red}47} \}\\
\bar{b}^7_3= \{ 5 & 11 & 17 & 23 & 29 & {\color{red}42} & {\color{red}48} \}\\
\bar{b}^7_2= \{ 6 & 12 & 18 & 24 & 30 & {\color{red}36} & {\color{red}49} \} \\
\bar{b}^7_1= \{ 7 & 13 & 19 & 25 & 31 &{\color{red}37} & {\color{red}43} \}\\
 \end{bmatrix},
 \bar{P}_8 = \begin{bmatrix}
\bar{b}^8_7= \{1 & 8 & 15 & 22 & 29 & {\color{red}36} & {\color{red}43} \}\\
\bar{b}^8_6= \{ 2 & 9 & 16 & 23 & 30 & {\color{red}37} & {\color{red}44} \}\\
\bar{b}^8_5= \{ 3 & 10 & 17 & 24 & 31 &{\color{red}38} & {\color{red}45} \}\\
\bar{b}^8_4= \{ 4 & 11 & 18 & 25 & 32 & {\color{red}39} & {\color{red}46} \}\\
\bar{b}^8_3= \{ 5 & 12 & 19 & 26 &{\color{red} 33} & {\color{red}40} & {\color{red}47} \}\\
\bar{b}^8_2= \{ 6 & 13 & 20 & 27 & {\color{red}34} & {\color{red}41} & {\color{red}48} \}\\
\bar{b}^8_1= \{ 7 & 14 & 21 & 28 & {\color{red}35} & {\color{red}42} & {\color{red}49} \}\\
 \end{bmatrix},
\end{equation*}
}
%\normalsize
In order to construct RBD$(\widetilde{X},\widetilde{A})$,  where $|\widetilde{X}| = (q-e)(q+f)= (7-3)(7+1)= 32$ such that $\mu = 1$, we consider the following steps. 

\begin{enumerate}
\item Choose any $h = e-f = 3-1 = 2$  blocks from $\bar{P}_1$. Let these blocks be $\bar{b}^1_1$ and $\bar{b}^1_2$. Let $S_1 = \bar{b}^1_1 \cup \bar{b}^1_2 =  \{ 36, 37, 38, 39, 40, 41, 42, 43, 44, 45, 46, 47, 48, 49 \}$, as annotated in red in the above matrices.

\item  Choose another $f = 1$ block from $\bar{P}_1$. Let it be $\bar{b}^1_3$. Now choose any $e=3$ elements from it. Let these be $S_2 = \{33, 34, 35\}$. Set $S = S_1 \cup S_2 = \{33, 34, 35, 36, 37, 38, 39, 40$, $41, 42$, $43, 44, 45, 46, 47, 48, 49 \}$ (indicated in red).

\item  Remove the elements of set $S$ from  RBD$(\bar{X},\bar{A})$. Call the resulting combinatorial design RBD$(\widetilde{X},\widetilde{A})$, where $|\widetilde{X}|=32$ and $\widetilde{A} = \{\widetilde{P}_1,\widetilde{P}_2,\widetilde{P}_3,\widetilde{P}_4,\widetilde{P}_5,\widetilde{P}_6,\widetilde{P}_7,\widetilde{P}_8 \}$, presented as below.

{\tiny
$\widetilde{P}_1 = \begin{bmatrix}
\widetilde{b}^1_7= \{ 1 & 2 & 3 & 4 & 5 & 6 & 7 \} \\
\widetilde{b}^1_6= \{ 8 & 9 & 10 & 11 & 12 & 13 & 14 \}\\
\widetilde{b}^1_5= \{ 15 & 16 & 17 & 18 & 19 & 20 & 21 \}\\
\widetilde{b}^1_4= \{ 22 & 23 & 24 & 25 & 26 & 27 & 28 \}\\
\widetilde{b}^1_3= \{ 29 & 30 & 31 & 32   \}\\
\end{bmatrix},
\widetilde{P}_2 = \begin{bmatrix}
\widetilde{b}^2_7= \{ 1 & 9 & 17 & 25   \}\\
 \widetilde{b}^2_6= \{ 2 & 10 & 18 & 26 \} \\
\widetilde{b}^2_5= \{  3 & 11 & 19 & 27   \}\\
 \widetilde{b}^2_4= \{ 4 & 12 & 20 & 28 & 29  \}\\
 \widetilde{b}^2_3= \{ 5 & 13 & 21 & 22 & 30  \}\\
 \widetilde{b}^2_2= \{ 6 & 14 & 15 & 23 & 31 \}\\
 \widetilde{b}^2_1= \{ 7 & 8 & 16 & 24 & 32  \}\\
 \end{bmatrix},
 \widetilde{P}_3 = \begin{bmatrix}
\widetilde{b}^3_7= \{  1 & 10 & 19 & 28 & 30  \}\\
\widetilde{b}^3_6= \{  2 & 11 & 20 & 22 & 31 \}\\
\widetilde{b}^3_5= \{  3 & 12 & 21 & 23 & 32  \}\\
\widetilde{b}^3_4= \{  4 & 13 & 15 & 24  \} \\
\widetilde{b}^3_3= \{  5 & 14 & 16 & 25  \}\\
 \widetilde{b}^3_2= \{ 6 & 8 & 17 & 26    \}\\
\widetilde{b}^3_1= \{ 7 & 9 & 18 & 27 & 29 \}\\
 \end{bmatrix}$,
 
$\widetilde{P}_4 = \begin{bmatrix}
\widetilde{b}^4_7= \{  1 & 11 & 21 & 24  \}\\
\widetilde{b}^4_6= \{  2 & 12 & 15 & 25  \}\\
\widetilde{b}^4_5= \{  3 & 13 & 16 & 26 & 29  \}\\
\widetilde{b}^4_4= \{  4 & 14 & 17 & 27 & 30  \}\\
\widetilde{b}^4_3= \{  5 & 8 & 18 & 28 & 31 \}\\
\widetilde{b}^4_2= \{  6 & 9 & 19 & 22 & 32  \}\\
\widetilde{b}^4_1= \{  7 & 10 & 20 & 23  \}\\
\end{bmatrix},
\widetilde{P}_5 = \begin{bmatrix}
\widetilde{b}^5_7= \{      1 & 12 & 16 & 27 & 31  \}\\
\widetilde{b}^5_6= \{ 2 & 13 & 17 & 28 & 32  \}\\
\widetilde{b}^5_5= \{  3 & 14 & 18 & 22   \}\\
\widetilde{b}^5_4= \{  4 & 8 & 19 & 23   \}\\
\widetilde{b}^5_3= \{  5 & 9 & 20 & 24   \}\\
\widetilde{b}^5_2= \{  6 & 10 & 21 & 25 & 29  \}\\
\widetilde{b}^5_1= \{  7 & 11 & 15 & 26 & 30  \}\\
\end{bmatrix},
 \widetilde{P}_6 = \begin{bmatrix}
\widetilde{b}^6_7= \{      1 & 13 & 18 & 23   \}\\
\widetilde{b}^6_6= \{ 2 & 14 & 19 & 24 & 29  \}\\
\widetilde{b}^6_5= \{ 3 & 8 & 20 & 25 & 30  \}\\
\widetilde{b}^6_4= \{ 4 & 9 & 21 & 26 & 31  \}\\
\widetilde{b}^6_3= \{ 5 & 10 & 15 & 27 & 32  \}\\
\widetilde{b}^6_2= \{ 6 & 11 & 16 & 28  \}\\
\widetilde{b}^6_1= \{ 7 & 12 & 17 & 22   \}\\
\end{bmatrix}$,

$\widetilde{P}_7 = \begin{bmatrix}
\widetilde{b}^7_7= \{1 & 14 & 20 & 26 & 32  \}\\
\widetilde{b}^7_6= \{ 2 & 8 & 21 & 27  \}\\
\widetilde{b}^7_5= \{ 3 & 9 & 15 & 28   \}\\
\widetilde{b}^7_4= \{ 4 & 10 & 16 & 22   \}\\
\widetilde{b}^7_3= \{ 5 & 11 & 17 & 23 & 29  \}\\
\widetilde{b}^7_2= \{ 6 & 12 & 18 & 24 & 30  \} \\
\widetilde{b}^7_1= \{ 7 & 13 & 19 & 25 & 31  \}\\
 \end{bmatrix},
 \widetilde{P}_8 = \begin{bmatrix}
\widetilde{b}^8_7= \{1 & 8 & 15 & 22 & 29  \}\\
\widetilde{b}^8_6= \{ 2 & 9 & 16 & 23 & 30  \}\\
\widetilde{b}^8_5= \{ 3 & 10 & 17 & 24 & 31  \}\\
\widetilde{b}^8_4= \{ 4 & 11 & 18 & 25 & 32  \}\\
\widetilde{b}^8_3= \{ 5 & 12 & 19 & 26  \}\\
\widetilde{b}^8_2= \{ 6 & 13 & 20 & 27   \}\\
\bar{b}^8_1= \{ 7 & 14 & 21 & 28   \}\\
 \end{bmatrix}
$.
}
\end{enumerate}
Note that here all the blocks are not of the same sizes, but any two blocks from different parallel classes have at most 1 element in common, i.e., $\mu =1$. The blocks sizes are in the set $\{(q-3),(q-2), q \}= \{4, 5, 7\}$. The number of parallel classes in $\widetilde{A}$ remains $ q+1= 8$. This technique is now more formally explained for the general case in Construction \ref{con:1(q-e)(q+f)} below. Let $d= k \cdot s =(q-e)(q+f)$, with $0< f \leq e \leq q$. The steps for constructing the RBD$(X,A)$ are as follows.
  
\begin{cons}
\label{con:1(q-e)(q+f)}
\normalfont
Given $q$, a prime power, construct $(q^2, q, 1)$-ARBIBD. Call this design $(\bar{X}, \bar{A})$ with $\bar{X} = \{1, 2, \ldots, q^2\}$ and $|\bar{A}| = q(q+1)$ many blocks, each block is of constant size $q$. It will have $r = q+1$ many parallel classes, call them $\{\bar{P}_1,\bar{P}_2,\ldots,\bar{P}_{q+1}\}$, each parallel class having $q$ many blocks of constant size $q$. Between any two blocks from different parallel classes, exactly one element will be common, i.e., $|\bar{b}^l_i \cap \bar{b}^m_j| = 1, \forall \ l \neq m$.
\begin{enumerate}
\item  Given $e\geq f$, choose $h=e-f \geq 0$ many  blocks from $\bar{P}_1$, which are $\{\bar{b}^1_1, \bar{b}^1_2, \ldots, \bar{b}^1_h\}$. Let $S_1 =  \bar{b}^1_1 \cup \bar{b}^1_2 \cup \ldots \cup \bar{b}^1_h $. Therefore, $|S_1| = h\cdot q$. 

\item From $\{\bar{b}^1_{h+1}, \bar{b}^1_{h+2}, \ldots, \bar{b}^1_{h+f}\}$ blocks of $\bar{P}_1$, choose any $e$ number of  elements from each of them. Let $S_2$ be the union of all these elements. Therefore, $|S_2| = e\cdot f$. Let $S = S_1 \cup S_2$.  

\item Remove the elements of set $S$ from  RBD $(\bar{X},\bar{A})$ and  call the resulting design as RBD$(\widetilde{X},\widetilde{A})$.  
\end{enumerate} 
 \end{cons}
 
We claim that the above RBD$(\widetilde{X},\widetilde{A})$ is such that $|\widetilde{X}| = (q-e)(q+f)$ and $\widetilde{A}$ consists of $q+1$ many parallel classes having different block sizes, such that blocks from different parallel classes have at most one element in common, i.e., $ |b^l_i \cap b^m_j| \leq 1, \forall \ l \neq m$. Hence $\mu=1$. We formalize this in the form of a lemma below. 
 
\begin{lemma}
\label{lem:con1(q-e)(q+f)}
Let $d=(q-e)(q+f)$ for $f, e \in \mathbb{N}$ with $0 < f \leq e \leq q$ and some prime power $q$. Then one can construct an RBD$(\widetilde{X},\widetilde{A})$, with $|X| = d$ having block sizes from the set of integers $ \{(q-e), (q-e+1), \ldots, (q-e+f), q\}$ with $\mu = 1$, and having $r = q+1$ many parallel classes.
\end{lemma}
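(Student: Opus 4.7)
My plan is to check the four claimed properties of $\widetilde{\mathrm{RBD}}(\widetilde{X},\widetilde{A})$ one at a time, using a single structural fact: the ambient $(q^2,q,1)$-ARBIBD has $\mu = 1$, so any block of a parallel class $\bar{P}_m$ with $m \neq 1$ meets each of the $q$ blocks of $\bar{P}_1$ in exactly one element. This single observation will govern how many elements each original block loses under the removal of $S$.

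First I would compute $|\widetilde{X}|$. Since $S_1$ is the disjoint union of $h = e-f$ blocks of size $q$ drawn from $\bar{P}_1$, and $S_2$ contributes $e$ elements from each of $f$ further (pairwise disjoint) blocks of $\bar{P}_1$, we get $|S| = (e-f)q + ef$; a short algebraic rearrangement gives $q^2 - |S| = (q-e)(q+f)$, as required.

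The core step is the block-size analysis. For $\widetilde{P}_1$ it is immediate: the first $h$ chosen blocks are emptied and discarded, the next $f$ blocks each lose exactly $e$ elements (size $q-e$), and the remaining $q-h-f = q-e$ blocks are untouched (size $q$). For any other parallel class $\bar{P}_m$ with $m \neq 1$, fix a block $\bar{b}^m_j$; the $\mu=1$ property of the affine plane means the $q$ intersections of $\bar{b}^m_j$ with the blocks of $\bar{P}_1$ partition $\bar{b}^m_j$ into $q$ singletons. Thus $\bar{b}^m_j$ loses exactly $h$ elements from $S_1$ (one per block $\bar{b}^1_1,\ldots,\bar{b}^1_h$), and loses between $0$ and $f$ elements from $S_2$, depending on which of its singleton intersections with $\bar{b}^1_{h+1},\ldots,\bar{b}^1_{h+f}$ happen to lie in the chosen $e$-subsets. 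Hence the resulting size is $q - h - t$ with $0 \leq t \leq f$, so it lies in $\{q-e, q-e+1, \ldots, q-e+f\}$. Combined with the two sizes $q$ and $q-e$ coming from $\widetilde{P}_1$, this is exactly the claimed set.

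The remaining checks are short. For $\mu$: deleting elements cannot enlarge intersections, so $|\widetilde{b}^l_i \cap \widetilde{b}^m_j| \leq 1$ for $l \neq m$ is inherited from the ambient design, and equality is attained (for example, any untouched block of $\widetilde{P}_1$ still meets every block of every other $\widetilde{P}_m$ in one element), giving $\mu = 1$. For the parallel class count: for $m \neq 1$ every block of $\bar{P}_m$ keeps at least $q - e > 0$ elements and its blocks still partition $\widetilde{X}$; and $\widetilde{P}_1$, after discarding its $h$ emptied blocks, still partitions $\widetilde{X}$ into $q - e + f$ blocks. Hence all $q+1$ parallel classes survive. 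I expect the block-size analysis for $m \neq 1$ to be the main hurdle, since one genuinely needs the affine-plane $\mu = 1$ property to deduce that a block of $\bar{P}_m$ is partitioned by its intersections with $\bar{P}_1$ into $q$ singletons; once that observation is secured, everything else is bookkeeping.
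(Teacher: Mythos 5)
Your proposal is correct and follows essentially the same route as the paper's proof: compute $|S|=(e-f)q+ef$ to get $|\widetilde{X}|=(q-e)(q+f)$, use the affine-plane property that each block of $\bar{P}_m$ ($m\neq 1$) meets every block of $\bar{P}_1$ in exactly one element to bound the number of deleted elements per block, and observe that deletion cannot increase intersections. Your version is in fact slightly more explicit than the paper's (you parametrize the loss to $S_2$ by $t\in\{0,\ldots,f\}$ to exhibit the full range of block sizes, where the paper only states the maximum and minimum), but the underlying argument is identical.
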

\begin{proof}
Refer to Construction \ref{con:1(q-e)(q+f)} above. Here RBD$(\bar{X}, \bar{A})$ is an ARBIBD with $|\bar{X}|= q^2$, having constant block size $q$. Note that any pair of blocks from different parallel classes have exactly one element in common, i.e., $ |\bar{b}^l_i \cap \bar{b}^m_j| = 1 \ \forall \ l \neq m$. The number of elements in the set $|S| = |S_1\cup S_2| = |S_1| + |S_2| = (e-f)q+ ef < q^2$, which is a proper subset of $\bar{X}$. These are removed from all the parallel classes of RBD$(\bar{X},\bar{A})$. Hence, the resulting design RBD$(\widetilde{X},\widetilde{A})$ is such that $|\widetilde{X}| = q^2-(e-f)q+ef = (q-e)(q+f)= d$,  having same number of parallel classes as in $\bar{A}$. The number of element common between any two blocks from different parallel classes would be at most 1, i.e., $ |\widetilde{b}^l_i \cap \widetilde{b}^m_j| \leq 1, \forall \ l \neq m$.

To obtain the sizes of the blocks in RBD$(\widetilde{X},\widetilde{A})$, note that $S_1$ contains all elements from $h = (e-f)$ number of blocks of $\bar{P_1}$. Hence $S_1$ would have at least $h$ elements in common with all the blocks of remaining parallel classes. Thus, removal of the elements in $S_1$ from the parallel classes $\bar{P}_l, l \geq 2$ will remove at least $h$ elements from each block of $\bar{P_l}$ which implies $|\bar{b}^l_i \setminus S_1| = q-(e-f)$. Further $S_2$ contains $e$ elements from $f$ many blocks of $\bar{P}_1$. Thus, the blocks in $\bar{P}_2,  \bar{P}_3, \ldots, \bar{P}_{q+1}$ will have at most $f$ many elements in common with $S_2$. Consequently, after removal of all the elements in $S$, from the parallel class $\bar{P}_l$, the block size $|\widetilde{b}^l_i|, \ l\geq 2$ will be maximum $q-(e-f) = (q-h)$ and minimum  $q-(e-f)-f = (q-e)$. Further, the blocks in $\widetilde{P}_1$ will be of sizes $q, (q-e)$ and have total $(q-h)$ number of blocks. 
\end{proof}

We will now show how one can use $f$ many blocks of the parallel class $\widetilde{P}_1$ in RBD$(\widetilde{X},\widetilde{A})$, as constructed above, to reshape any one of the parallel classes $\widetilde{P}_l, \ l\neq 1$ into a parallel class having $(q+f)$ many blocks each of size $(q-e)$, which we denote by $P_l$ and the resulting combinatorial design by RBD$(X,A)$. Since $\widetilde{P}_1$ have $q-h$ number of blocks, thus $\lfloor \frac{q-h}{f}\rfloor$ many parallel classes of $\bar{A}$ can be reshaped into a parallel class of $A$.  

Let us first demonstrate the construction by using RBD$(\widetilde{X},\widetilde{A})$, as constructed in the example earlier in this section, where $|\widetilde{X}|= (7-3)(7+1) = 4 \cdot 8= 32$, with $q= 7, e= 3$, $f= 1$ and $h=e-f=2$. Note that $\widetilde{P}_1$ has $q-h=5$ blocks, and all the other parallel classes have $q=7$ blocks each. Let us denote the excess number of elements on each block of $\widetilde{b}^l_j, \ l\geq2$ than $(q-e)$ by $m^l_j$. Hence $m^l_i = | \widetilde{b}^l_i | - (q-e)$. Here for each block of the parallel class $\widetilde{P}_l, \ l \geq 2$, the value of $m^l_i $ is either 0 or 1. If the block $|\widetilde{b}^l_i| = 5$, then $m^l_i = |\widetilde{b}^l_i| - (q-e) = 5 -(7-3) = 1$ and similarly for block $|\widetilde{b}^l_i| = 4$, $m^l_i = 0$. Note that $\sum^q_{i=1} m^l_i  =  (q-e) \cdot f = (7-3) \cdot 1 = 4, \forall \ l\geq 2$. We modify this RBD$(\widetilde{X},\widetilde{A})$ as follows.

\begin{enumerate}
\item Since $f= 1$, consider one block $\widetilde{b}^1_3$ of $\widetilde{P}_1$, which has 4 elements in it. Remove these elements from different blocks of $\widetilde{P}_2$ and add them as separate block in $\widetilde{P}_2$. Denote the resulting parallel class as $P_2$.

\item  Consider the parallel class $\widetilde{P}_3$  and the next $f=1$ block of $\widetilde{P}_1$, i.e., the block $\widetilde{b}^1_4$. Choose a block from $\widetilde{P}_3$, say $\widetilde{b}^3_1$. Since $|\widetilde{b}^3_1| = 5$, i.e., it has one element more than $q-e = 4$, hence mark one common element between $\widetilde{b}^1_4$ and $\widetilde{b}^3_1$ which is 27 in this case. Sequentially execute this for all the blocks of $\widetilde{P}_1$. This will mark the elements $\{27, 23, 22, 28\}$ on $\widetilde{b}^1_4$. 

\item  Since $m^3_j= 0$ or $1$, the above step will mark exactly $(q-e) = 4$ elements on $\widetilde{b}^1_4$. In a situation, if $m^l_i$ has more than one elements, then further iterations are required to mark exactly $(q-e)$ elements on the blocks of $\widetilde{P}_1$. Refer to Step 3 of Construction \ref{cons:2(q-e)(q+f)} later for the exact strategy in this regard.

\item  Now remove the elements marked on $\widetilde{b}^1_4$, i.e., $\{27, 23, 22, 28\}$ from $\widetilde{P}_3$ and add them as a separate block of $\widetilde{P}_3$ and denote the resulting parallel class as $P_3$.

\item  Consider the next parallel class $\widetilde{P}_4$ and the next $f=1$ block of $\widetilde{P}_1$, i.e., the block $\widetilde{b}^1_5$. Then repeat the Steps 2, 3, 4 to obtain $P_4$. 

\item Since the number of blocks in $\widetilde{P}_1  = 5$, in this way $r = \frac{5}{1} = 5$  parallel classes, i.e., $\widetilde{P}_l, \ l = 2, 3, 4, 5, 6$ can be modified. Discard $\widetilde{P}_7$ and $\widetilde{P}_8$. The resulting RBD$(X,A)$ is such that $|X|= 32$ with $A$ consisting  of parallel classes $\{P_2, P_3, P_4, P_5, P_6\}$ as shown below.
\end{enumerate}
\small
\begin{equation*}
P_2 = \begin{bmatrix}
b^2_8= \{ 29 & 30 & 31 & 32   \}\\
b^2_7= \{ 1 & 9 & 17 & 25   \}\\
 b^2_6= \{ 2 & 10 & 18 & 26 \} \\
b^2_5= \{  3 & 11 & 19 & 27   \}\\
 b^2_4= \{ 4 & 12 & 20 & 28   \}\\
 b^2_3= \{ 5 & 13 & 21 & 22   \}\\
 b^2_2= \{ 6 & 14 & 15 & 23  \}\\
 b^2_1= \{ 7 & 8 & 16 & 24  \}\\
 \end{bmatrix},
P_3 = \begin{bmatrix}
b^3_8= \{  22 & 23 & 27& 28   \}\\
b^3_7= \{  1 & 10 & 19  & 30  \}\\
b^3_6= \{  2 & 11 & 20  & 31 \}\\
b^3_5= \{  3 & 12 & 21  & 32  \}\\
b^3_4= \{  4 & 13 & 15 & 24  \} \\
b^3_3= \{  5 & 14 & 16 & 25  \}\\
b^3_2= \{ 6 & 8 & 17 & 26    \}\\
b^3_1= \{ 7 & 9 & 18  & 29 \}\\
 \end{bmatrix},
P_4 = \begin{bmatrix}
b^4_8= \{  16 & 17 & 18 & 19  \}\\
b^4_7= \{  1 & 11 & 21 & 24  \}\\
b^4_6= \{  2 & 12 & 15 & 25  \}\\
b^4_5= \{  3 & 13  & 26 & 29  \}\\
b^4_4= \{  4 & 14  & 27 & 30  \}\\
b^4_3= \{  5 & 8  & 28 & 31 \}\\
b^4_2= \{  6 & 9  & 22 & 32  \}\\
b^4_1= \{  7 & 10 & 20 & 23  \}\\
\end{bmatrix},
\end{equation*}

\begin{equation*}
P_5 = \begin{bmatrix}
b^5_8= \{ 10 & 11 & 12 & 13  \}\\
b^5_7= \{ 1  & 16 & 27 & 31  \}\\
b^5_6= \{ 2  & 17 & 28 & 32  \}\\
b^5_5= \{  3 & 14 & 18 & 22   \}\\
b^5_4= \{  4 & 8 & 19 & 23   \}\\
b^5_3= \{  5 & 9 & 20 & 24   \}\\
b^5_2= \{  6  & 21 & 25 & 29  \}\\
b^5_1= \{  7  & 15 & 26 & 30  \}\\
\end{bmatrix},
 P_6 = \begin{bmatrix}
b^6_8= \{ 2 & 3 & 4 & 5   \}\\
b^6_7= \{ 1 & 13 & 18 & 23   \}\\
b^6_6= \{  14 & 19 & 24 & 29  \}\\
b^6_5= \{  8 & 20 & 25 & 30  \}\\
b^6_4= \{ 9 & 21 & 26 & 31  \}\\
b^6_3= \{ 10 & 15 & 27 & 32  \}\\
b^6_2= \{ 6 & 11 & 16 & 28  \}\\
b^6_1= \{ 7 & 12 & 17 & 22   \}\\
 \end{bmatrix},
 P_7 = \begin{bmatrix}
 b^7_8= \{ 5 & 12 & 19 & 26  \}\\
b^7_7= \{1 & 14 & 20  & 32  \}\\
b^7_6= \{ 2 & 8 & 21 & 27  \}\\
b^7_5= \{ 3 & 9 & 15 & 28   \}\\
b^7_4= \{ 4 & 10 & 16 & 22   \}\\
b^7_3= \{  11 & 17 & 23 & 29  \}\\
b^7_2= \{ 6  & 18 & 24 & 30  \} \\
b^7_1= \{ 7 & 13  & 25 & 31  \}\\
 \end{bmatrix}.
\end{equation*}
\normalsize
We will now discuss about $P_7$ as presented above.
From the two discarded parallel classes $\widetilde{P}_7$ and $\widetilde{P}_8$, we find that one can use the block $\widetilde{b}^8_3$ and remove the elements in it from $\widetilde{P}_7$, and place them as a separate block of parallel class $\widetilde{P}_7$, resulting into another parallel class having $(q+f)=8$ many blocks each block having $(q-e)=4$ elements each. We denote this parallel class as $P_7$. Certainly this is not unique as there are other possibilities to obtain a parallel class using  $\widetilde{P}_7$ and $\widetilde{P}_8$. Thus here we actually obtain $r = 6 > \left\lfloor  \frac{q -e}{f} \right\rfloor +1 = 5$, indicating that $\left\lfloor  \frac{q -e}{f} \right\rfloor +1$ is not a tight lower bound in this example. 

The above construction is formally explained for the general case in Construction \ref{cons:2(q-e)(q+f)} below. We consider RBD$(\widetilde{X},\widetilde{A})$ constructed in \ref{con:1(q-e)(q+f)} with $|\widetilde{X}| = (q-e)(q+f)$ as the input for following construction.  To begin with, compute $m^l_j =|\widetilde{b}^l_j| - (q-e), \ l\geq2 $ for each block of $\widetilde{P}_l, \ l\geq 2$, which is the count of the excess number of elements on each block of $\widetilde{b}^l_j, \ l\geq2$ than what is required, which is $q-e$. Note that $\sum_{j=1}^q m^l_j =\sum_{j=1}^q \left(|\widetilde{b}^l_j|- (q-e)\right) = \sum_{j=1}^q |\widetilde{b}^l_j|- q (q-e) =  (q-e)(q+f)-q(q-e) = (q-e)f, \ \forall \ l \geq 2$. Thus $\sum_{j=1}^q m^l_j=(q-e)f$ is constant for all the parallel classes of RBD$(\widetilde{X}, \widetilde{A})$ except $\widetilde{P}_1$, which consists of $q-h$ blocks of sizes $\{(q-e), q\}$ and are being used to modify the $r$ number of other parallel classes and will be discarded in the end.

\begin{cons}
\label{cons:2(q-e)(q+f)}
\normalfont
Let $d= k \cdot s =(q-e)(q+f)$, with $0< f \leq e \leq q$ and we consider RBD$(\widetilde{X},\widetilde{A})$ from Construction \ref{con:1(q-e)(q+f)} with $|\widetilde{X}| = (q-e)(q+f)$  as the input. 

\begin{enumerate}
\item Consider  $f$ many blocks of $\widetilde{P}_1$, which has $(q-e)$ elements, i.e., the blocks $\{\widetilde{b}^1_{h+1}, \widetilde{b}^1_{h+2}, \ldots$, $\widetilde{b}^1_{h+f}\}$. Remove the elements in the blocks $\{\widetilde{b}^1_{h+1}, \widetilde{b}^1_{h+2}, \ldots,\widetilde{b}^1_{h+f}\}$ from different blocks of $\widetilde{P}_{2}$ and add the blocks $\{ \widetilde{b}^1_{h+1}, \widetilde{b}^1_{h+2},\ldots, \widetilde{b}^1_{h+f}\}$ as blocks of $\widetilde{P}_{2}$. Name the resulting parallel class as $P_2$. 

\item Consider the parallel class $\widetilde{P}_{3}$ and next $f$ many blocks of $\widetilde{P}_1$, i.e., $ \{ \widetilde{b}^1_{(h+f+1)}, \widetilde{b}^1_{(h+f+2)}, \ldots$, $\widetilde{b}^1_{(h+2f)}\}$, each consisting of $q$ many elements. Call this set of blocks as $S^1_3$. Select a block from $\widetilde{P}_{3}$, say $\widetilde{b}^3_1$. Corresponding to this block, mark $m^3_1$ number of elements which are common with the blocks in set $S^1_3$. Then move them to the next block of $\widetilde{P}_{3}$, namely $\widetilde{b}^3_2$, and mark $m^3_2$ number of elements common with the blocks in $S^1_3$. Sequentially continue this for all the blocks of $\widetilde{P}_{3}$. 

\item Now consider $\widetilde{b}^1_u, \ \widetilde{b}^1_v \in S^1_3$, such that $\widetilde{b}^1_u$ has more elements marked than $(q-e)$ and $\widetilde{b}^1_v$ has less elements marked than $(q-e)$. Identify the blocks of $\widetilde{P}_3$ which have a marked element common with $\widetilde{b}^1_u$,  but has an unmarked element common with $\widetilde{b}^1_v$, say block $\widetilde{b}^3_j$. Then unmark this element in $\widetilde{b}^1_u$, and mark the common element between the block $\widetilde{b}^3_j$ and $\widetilde{b}^1_v$ on the block $\widetilde{b}^1_v$. Thus,  the marked element on $\widetilde{b}^1_u$ is removed and the marked on $\widetilde{b}^1_v$ is added. Iterate this for all such blocks in $S^1_3$ which has marked elements different from $(q-e)$ and continue this till all the blocks in set  $S^1_3$ have exactly $(q-e)$ marked element. 

Later in Lemma \ref{lem:con(q-e)(q+f)}, we will show that such a block $\widetilde{b}^3_j$ will always exist. Further this process will terminate in a finite number of steps as there are finite number of blocks and elements, and every iteration adds the marked element of a block having less elements marked than $(q-e)$ and removes the mark element of a block having  more elements marked than $(q-e)$.

\item Remove the elements marked on each of the blocks in the set $S^1_3$ from the parallel class $\widetilde{P}_{3}$ and add the elements marked on the block, say  $\widetilde{b}^1_{h+f+1}$ as separate blocks in $\widetilde{P}_{3}$. Similarly  add elements marked on the block $\widetilde{b}^1_{h+f+2}$ as separate blocks in $\widetilde{P}_{3}$ and so on for all the blocks in $S^1_3$. Call the resulting parallel class as $P_3$. 

\item  Then consider the next parallel class $\widetilde{P}_{4}$ and the next $f$ many blocks of $\widetilde{P}_1$ and repeat the steps 2, 3 and 4 as mentioned above. Denote the resulting class as $P_4$ and continue till the number of blocks in $\widetilde{P}_1$ becomes less than $f$ . 

\item Since the number of blocks in $\widetilde{P}_{1}$ is $q-h = q-(e-f)$, in this way $r = \left\lfloor  \frac{q - (e-f)}{f} \right\rfloor = \left\lfloor  \frac{q - e}{f} \right\rfloor +1$ many parallel classes $\widetilde{P}_{r}$ can be modified. Then the RBD$(X, A)$, where $X = \{1, 2, \ldots, (q-e)(q+f) \}$ and $A =\{P_2, P_3,\ldots, P_{r+1}\}$ is the required design.
\end{enumerate}
\end{cons}
Note that in Step 4 above, since each block in the set $S^1_3$ has $q-e$ elements marked and total number of blocks is $f$, hence total number of elements removed from $\widetilde{P}_3$ are $(q-e)\cdot f$. Thus, we are basically using a set of $f$ blocks from the parallel classes $\widetilde{P}_1$ to reshape one of the remaining parallel class, into blocks of size $(q-e)$, having $(q+f)$ blocks.  This is achieved by identifying $(q-e)$ elements of one block from the parallel classes $\widetilde{P}_1$, and in different blocks of  parallel class say $\widetilde{P}_l$. Thereafter, we delete these elements from different blocks $\widetilde{P}_l$, and add these elements as a separate block in $\widetilde{P}_l$. Thus, the resulting parallel class will consist of $(q+f)$ blocks, each having $(q-e)$ elements.
 
We claim that the above design $(X,A)$ is a RBD, such that $|X| = (q-e)(q+f)$ and $A$ consists of $r= \left\lfloor  \frac{q - e}{f} \right\rfloor +1$ many parallel classes each having $(q+f)$ many blocks  each of size  $(q-e)$, such that the blocks from different parallel classes have at most one element in common, i.e., $|b^l _i \cap b^m_j | \leq 1 \ \forall \ i \neq j$. We formalize this in the following lemma.

\begin{lemma}
\label{lem:con(q-e)(q+f)}
Let $d=(q-e)(q+f)$, for $f, e \in \mathbb{N}$ with $0 < f \leq e \leq q$ and $q$ some power of prime. Then one can construct an RBD$(X, A)$, with $|X| = d$ having constant block size $(q-e)$ with $\mu = 1$, and having at least $r =  \left\lfloor  \frac{q -e)}{f} \right\rfloor+ 1$ many parallel classes.
\end{lemma}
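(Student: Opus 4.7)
The plan is to begin from the RBD$(\widetilde{X}, \widetilde{A})$ provided by Lemma~\ref{lem:con1(q-e)(q+f)} and verify that Construction~\ref{cons:2(q-e)(q+f)} produces a valid RBD$(X, A)$ with the claimed parameters. First, I would record the finer structure of $\widetilde{P}_1$: of its $q-h = q-e+f$ blocks, exactly $f$ blocks (namely $\widetilde{b}^1_{h+1},\ldots,\widetilde{b}^1_{h+f}$) have size $(q-e)$, while the remaining $q-e$ blocks $\widetilde{b}^1_{h+f+1},\ldots,\widetilde{b}^1_{q}$ retain their original $q$ elements. Step 1 of the construction then produces $P_2$ directly: the $q$ modified blocks of $\widetilde{P}_2$, together with the $f$ small blocks of $\widetilde{P}_1$ promoted as new blocks, form $q+f$ pairwise disjoint blocks of size $(q-e)$ covering $\widetilde{X}=X$.

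Next I would analyse Steps 2--4 for the construction of $P_l$ ($l\geq 3$) from $\widetilde{P}_l$ and a group $S^1_l$ of $f$ large blocks of $\widetilde{P}_1$. The key observation is that since each $\widetilde{b}^1_u \in S^1_l$ retains all $q$ of its original elements and $\mu=1$ in $\widetilde{A}$, every block $\widetilde{b}^l_j$ of $\widetilde{P}_l$ meets each block of $S^1_l$ in exactly one element, so it contains exactly $f$ elements of $\bigcup_{u}\widetilde{b}^1_u$. The marking step then reduces to choosing, in the complete bipartite graph between the $q$ blocks of $\widetilde{P}_l$ (to which we assign out-degree $m^l_j=|\widetilde{b}^l_j|-(q-e)\in\{0,1,\ldots,f\}$) and the $f$ blocks of $S^1_l$ (to which we assign in-degree $q-e$), a biregular realisation; row- and column-sums both equal $(q-e)f$, so such a realisation exists by a standard Gale--Ryser/transportation argument.

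The main obstacle is to justify that the explicit iterative rebalancing in Step~3 always finds the required swap block and terminates. Let $n_u$ denote the current number of marks in $\widetilde{b}^1_u$, and suppose $n_u>q-e>n_v$. Since every block of $\widetilde{P}_l$ contains exactly one element of $\widetilde{b}^1_u$ and exactly one of $\widetilde{b}^1_v$, the set $U\subseteq \widetilde{P}_l$ of blocks whose $u$-element is marked has $|U|=n_u$, and the set $V$ of blocks whose $v$-element is unmarked has $|V|=q-n_v$. From $|U|+|V|=n_u+q-n_v>q=|\widetilde{P}_l|$ and inclusion--exclusion, $U\cap V\neq\emptyset$, which supplies the required block $\widetilde{b}^l_j$. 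Each swap strictly decreases $\sum_u|n_u-(q-e)|$ by $2$, so the process halts with every $n_u=q-e$; consequently Step~4 produces $f$ new blocks of size $q-e$ while reducing each modified $\widetilde{b}^l_j$ to size $q-e$ as well.

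Finally I would verify that $P_l$ is a bona fide parallel class of $X$ (modified blocks are mutually disjoint and disjoint from the promoted blocks, since the promoted elements were removed from the modified blocks; together they cover $X$) and that $\mu=1$ is preserved across the new design. For any two parallel classes $P_l, P_{l'}$, modified-modified intersections are inherited from $\widetilde{A}$ and remain $\leq 1$; modified-promoted intersections are contained in $\widetilde{b}^l_j\cap \widetilde{b}^1_u$ which has size $\leq 1$; and promoted-promoted intersections involve subsets of distinct blocks of $\widetilde{P}_1$, hence are empty. Counting the output, $\widetilde{P}_1$ supplies one group of $f$ small blocks (consumed by Step~1) and $q-e$ large blocks (consumed by Steps~2--5 in groups of $f$), producing at least $1+\lfloor (q-e)/f\rfloor=\lfloor(q-e)/f\rfloor+1$ parallel classes, which is the stated lower bound on $r$.
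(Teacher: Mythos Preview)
Your proof is correct and follows essentially the same route as the paper: you verify that Construction~\ref{cons:2(q-e)(q+f)} applied to the RBD$(\widetilde{X},\widetilde{A})$ of Lemma~\ref{lem:con1(q-e)(q+f)} yields the desired design, with the pigeonhole argument $|U|+|V|=n_u+(q-n_v)>q$ for the swap step being exactly the paper's ``otherwise $|\widetilde{P}_l|>q$'' contradiction. Your treatment is in fact more careful than the paper's---you supply an explicit potential function for termination, a clean three-case verification of $\mu=1$, and the observation that each large block of $S^1_l$ meets every $\widetilde{b}^l_j$ in exactly one point---but the underlying ideas coincide.
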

\begin{proof}
Refer to Construction \ref{cons:2(q-e)(q+f)} above. Since in RBD$(\widetilde{X},\widetilde{A})$  any pair of blocks from different parallel classes has at most one element in common, we have $|\bar{b}^l_i \cap \bar{b}^m_j| \leq 1 \ \forall \ l \neq m$. Since no element of RBD$(\widetilde{X},\widetilde{A})$ has been deleted or added to it to obtain RBD$(X, A)$, hence $d=|\widetilde{X}|=|X|= (q-e)(q+f)$. From the Step 6 of Construction \ref{cons:2(q-e)(q+f)}, we obtain $r =  \left\lfloor  \frac{q -e)}{f} \right\rfloor+ 1$.

Now we show that Step 2 can be successfully executed. That is, from the set of $f$ many blocks of $\widetilde{P}_1$, it will be possible to mark $\sum^q_{i=1} m^l_i  =  (q-e)\cdot f, \ \forall \ l\geq 2$. Note that each block  $\widetilde{b}^1_j, \ j> e$ has exactly one element common with $\widetilde{b}^l_j, \ j= 1, 2, \ldots, q, \ l\geq 2$. As $0 \leq m^l_j \leq f $, corresponding to each block $\widetilde{b}^l_j$, there will always be $m^l_j$ elements on different blocks, which is total $f$ in number. Hence Step 2 can be successfully executed.
    
Steps 3 and 4 are related to the construction where elements are to be marked on blocks in the set $S^1_3$ having $f$ many blocks. These steps will finally result into $(q-e)$ elements being marked on each of these blocks. For this, note that $\sum_{i=1}^q m^l_i =  (q-e)f $. That means, if it is not possible to mark $q-e$ elements on each of the blocks in the set $S^1_3$, then on some block there will have more elements marked than $q-e$ and on some blocks there are less element marked than $q-e$. It is not possible that all the blocks have less than $(q-e)$ elements marked or all the blocks have more than $(q-e)$ elements marked as in that case $\sum_{i=1}^q m^l_i <  (q-e)f $ or $> (q-e)f$ accordingly.
 
In case there is a block $\widetilde{b}^1_u, \ \widetilde{b}^1_v \in S^1_3$, such the $\widetilde{b}^1_u$ has more elements marked than $(q-e)$ and $\widetilde{b}^1_v$ has less element marked than $(q-e)$, then there will exist a block of $\widetilde{P}_3$ which have marked element common with $\widetilde{b}^1_u$,  but non-marked element common with $\widetilde{b}^1_v$. Suppose there is no such block in $\widetilde{P}_3$. Then the marked elements of block $\widetilde{b}^1_u$ (which is $> q-e$) and the unmarked elements of $\widetilde{b}^1_v$ (which is $> e$) would all lie on the different blocks of $\widetilde{P}_3$. However, this would imply number of blocks in $|\widetilde{P}_3| > q-e+e = q$, which is a contradiction as $|\widetilde{P}_3| = q$. 

Finally, we show that $\mu=1$. Note that the blocks which has been added in the parallel classes $\widetilde{P}_2, \widetilde{P}_3, \ldots, \widetilde{P}_{r+1}$ to construct the parallel classes $P_2, P_3, \ldots, P_{r+1}$ are respectively part of the blocks of $\widetilde{P}_1$. Since any block of $\widetilde{P}_1$ has at most one element common with any other block of $\widetilde{P}_l, \ l\geq 2$, $\mu$ will be 1 for RBD$(X,A)$.
\end{proof}

Now we can use this RBD$(X,A)$ to construct APMUBs in dimension $d= |X|= (q-e)(q+f)$ having parameters as given by Theorem~\ref{th1}, which we formally state and prove in the next theorem.

\begin{theorem}\label{th:(q-e)(q+f)}
Let $d=(q-e)(q+f)$, for some prime power $q$, where $0 < f \leq e$ and $0 < (e+f) \leq \frac {3}{2} d^{\frac{1}{2}}$, $e, f \in \mathbb{N}$.  Then there exist at least $r =  \left\lfloor  \frac{q - e}{f} \right\rfloor +1$ many APMUBs, with $\Delta = \{0, \frac{1}{q-e}\}$, $\beta = \sqrt{\frac{q+f}{q-e} } =1 - \mathcal{O}(d^{-\lambda}) \leq 2$, where $\lambda= \frac{1}{2} $ and $\epsilon = 1 - \frac{1}{q+f}$. Further, if there exists a Real Hadamard matrix of order $(q-e)$, then one can construct $r$ many Almost Perfect Real MUBs with same parameters.
\end{theorem}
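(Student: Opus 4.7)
The proof would be a direct assembly of the machinery already built in the preceding sections, so the plan is essentially to orchestrate existing results rather than invent new combinatorics. I would begin by invoking Lemma~\ref{lem:con(q-e)(q+f)}, which under the hypothesis $0 < f \leq e \leq q$ produces an RBD$(X,A)$ with $|X| = d = (q-e)(q+f)$, constant block size $k = q-e$, intersection parameter $\mu = 1$, and at least $r = \lfloor (q-e)/f \rfloor + 1$ parallel classes. This gives exactly the combinatorial input required by Theorem~\ref{th1}.

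Next I would feed this RBD into Theorem~\ref{th1} using a (complex) Hadamard matrix of order $q-e$, which exists since $q-e$ is a positive integer and a complex Hadamard matrix of any order exists (e.g., the Fourier matrix). Because $\mu = 1$, the theorem delivers $r$ orthonormal bases with $\Delta = \{0, \tfrac{1}{q-e}\}$, with sparsity $\epsilon = 1 - \tfrac{1}{q+f}$, and with $\beta = \mu \sqrt{\tfrac{q+f}{q-e}} = \sqrt{\tfrac{q+f}{q-e}}$. Substitution into the asymptotic expansion~(\ref{beta-d}) with $\mu=1$ yields $\beta = 1 + \tfrac{e+f}{2\sqrt{d}} + \mathcal{O}(d^{-1}) = 1 + \mathcal{O}(d^{-1/2})$, which is precisely the form required by Definition~\ref{def:APMUB} with $\lambda = 1/2$.

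It remains to show that $\beta \leq 2$, which is the numerical condition needed to upgrade from an AMUB to an APMUB. Setting $c = 2$ and $\mu = 1$ in the bound derived in the proof of Theorem~\ref{th1}, the inequality $\mu(\sqrt{1+x^2}+x) \leq c$ with $x = (e+f)/(2\sqrt{d})$ reduces to $0 < (e+f) \leq \tfrac{3}{2} d^{1/2}$, which is exactly the hypothesis of the theorem. Hence all conditions of Definition~\ref{def:APMUB} are met and the $r$ bases are APMUBs with the claimed parameters. For the real case, the only change is to replace the complex Hadamard matrix by a real Hadamard matrix of order $q-e$ when one exists; this keeps every component of every basis vector real while preserving the magnitude structure, so the same argument yields $r$ APRMUBs.

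The only genuinely nontrivial content has already been absorbed into Lemma~\ref{lem:con(q-e)(q+f)} and Theorem~\ref{th1}; I do not expect a real obstacle in the theorem itself. If there is any subtlety to guard against, it is purely a sanity check: the hypothesis $f \leq e$ must be compatible with the bound $(e+f) \leq \tfrac{3}{2}\sqrt{d}$ for the construction to be nonvacuous, and one should note that the construction uses the block $P_1$ of the source ARBIBD only as a scaffold (it is discarded at the end), so the $r$ produced classes are legitimately distinct parallel classes of the final RBD. Once these are checked, the proof is essentially two lines: apply Lemma~\ref{lem:con(q-e)(q+f)}, then apply Theorem~\ref{th1} with $\mu=1$, $k=q-e$, $s=q+f$.
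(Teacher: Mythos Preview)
Your proposal is correct and follows essentially the same route as the paper's own proof: invoke Lemma~\ref{lem:con(q-e)(q+f)} to obtain the RBD with $\mu=1$ and $r=\lfloor(q-e)/f\rfloor+1$ parallel classes, then apply Theorem~\ref{th1} with a Hadamard matrix of order $q-e$ to read off $\Delta$, $\beta$, and $\epsilon$, and finally use the hypothesis $(e+f)\le \tfrac{3}{2}\sqrt{d}$ to secure $\beta\le 2$. Your write-up is in fact slightly more explicit than the paper's (e.g., noting that a complex Hadamard matrix of any order exists via the Fourier matrix, and pointing to the expansion~(\ref{beta-d}) for the asymptotic form of $\beta$), but the logical structure is identical.
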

\begin{proof}
In order to show that we can produce such APMUBs, let us consider an RBD$(X,A)$ with $|X| = (q-e)(q+f)$ having $r$ parallel classes of constant block size $(q-e)$, such that between the blocks from different parallel classes, there is at most one element in common, and hence $\mu = 1$.Then using this RBD along with a Hadamard matrix of order $(q-e)$, we can construct orthonormal bases following Theorem~\ref{th1}. Further, the condition that $\beta < 2$ for APMUB gives $(e+f) \leq \frac{3}{2} d^{\frac{1}{2}}$. In terms of $q$, this inequality becomes  $4e+f \leq 3 q$. The parameters of APMUBs are $\Delta = \{0, \frac{1}{q-e} \}$, $\beta = \sqrt{\frac{q+f}{q-e}} = 1 + \mathcal{O}(d^{-\frac{1}{2}}) \leq 2$ and sparsity $\epsilon = 1- \frac{1}{q+f}$. Further when a real Hadamard matrix of order $(q-e)$ is available, the same can be used in the construction of Approximate Real MUBs. Since number of parallel classes $r $ is at least $\lfloor\frac{q-e}{f} \rfloor+1$, hence we get at least these many APMUBs.
\end{proof}
\begin{remark}
\label{rem5x}
Note that $\frac{q - e}{f}$ is $\mathcal{O}(\sqrt{d})$, when $e, f$ are considered to be constants. That is, in such cases we obtain 
$\mathcal{O}(\sqrt{d})$ many APMUBs for the dimension $d$.
\end{remark}
Following Theorem~\ref{th:(q-e)(q+f)}, we can get at least $r =  \left\lfloor  \frac{q - e}{f} \right\rfloor +1$ many APMUBs. This can enable us to beat the {\sf{Mutually Orthogonal Latin Square (MOLS) Lower Bound construction}} for APMUBs (Theorem \ref{th:(s-e)s}), according to which we obtain $N(q+f)+1$ many APMUBs. Let us present a few illustrative examples in this regard. 
\begin{itemize}
\item For $d = 60 = 6\cdot 10$, the known value of $N(10)$ is $2$ hence {\sf{MOLS Lower Bound construction}} provides three APMUBs with $\beta$ value of 1.29. On the other hand, if we use above construction method, by expressing $d  = (9-3)(9+1)$, we obtain $\frac{9-(3-1)}{1}= 7$ many APMUBs with $\beta= 1.29$. In this case the number of complex MUBs, that can be constructed following prime factorization formula, is $3+1 = 4$ only.
\item For $d = 24 =  4 \cdot 6$, with $N(6) = 1$, the {\sf{MOLS Lower Bound construction}} generates $2$ APRMUBs with $\beta = 1.22$. On the other hand, expressing $d= 24 = (5-1)(5+1)$ we obtain 5 APRMUBs with $\beta = 1.22$. The number of real MUBs for $d=24$ is 2 \cite{boykin2005real} only.
\end{itemize}

Further, to illustrate the advantage of Construction \ref{cons:2(q-e)(q+f)} over Construction \ref{cons:(s-e)s}, consider the example of $d= 2^2 \cdot 3^2 \cdot 5 \cdot 7$  and different ways of expressing it as product of two factors, that we considered earlier following Theorem \ref{th:(s-e)s}. 
\begin{itemize}
\item Expressing  $d = 30 \cdot 42$, the {\sf{MOLS Lower Bound construction}} will provide $N(42)+ 1 = 6$ many APMUBs with $\beta = 1.18$, where as expressing $d= (41-11)(41+1)$ and using Theorem \ref{th:(q-e)(q+f)} will provide 31 many APMUBs with $\beta = 1.18$.
\item or expressing  $d = 28 \cdot 45$, the {\sf{MOLS Lower Bound construction}} will provide $N(45)+ 1 = 7$ many  APRMUB with $\beta = 1.27$, where as expressing $d= (43-15)(43+2)$ and using Theorem \ref{th:(q-e)(q+f)} will provide 15 many  APRMUB with $\beta = 1.27$. 
\end{itemize}
Note that, $N(42) = 5$ and $N(45) = 6$ are the presently known values of the maximum number of MOLS of these orders \cite{HandBook}. Here expressing $d= 35 \cdot 36$, we cannot use Theorem \ref{th:(q-e)(q+f)} as it cannot be expressed as $(q-e)(q+f)$, with $q$ some power of prime and $0 \leq f\leq e $. Thus with this factorization of $d$, {\sf{MOLS Lower Bound construction}} provides $N(36)+ 1 = 9$ APMUBs with $\beta =1.01$. 

Note that, $r =  \left\lfloor  \frac{q - (e-f)}{f} \right\rfloor$, with condition $0 < f\leq e$, is maximum for a given $q$ when $f = e = 1$. From the asymptotic expression of $\beta$, it is clear that for $0< f\leq e $ we will obtain $\beta$ closest to 1 when $e= f=1$. Thus for $e = f = 1$, we state the result of APMUB as a corollary below.
\begin{corollary}
\label{cor:(q-1)(q+1)}
Let $d = q^2-1 = (q-1)(q+1)$ where $q$ is a prime power. Then one can construct $q$ many APMUBs with $\Delta = \left\{0, \frac{1}{q-1} \right\}$ and $\beta =  \sqrt{\frac{q+1}{q-1}}$ with sparsity $\epsilon = 1- \frac{1}{q+1}$. If a real Hadamard Matrix of order $q-1$ exists, then we have $q$ many APRMUBs with the same parameters. 
\end{corollary}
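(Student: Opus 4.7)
The plan is to observe that Corollary \ref{cor:(q-1)(q+1)} is simply the specialization of Theorem \ref{th:(q-e)(q+f)} to the case $e = f = 1$. So the whole proof amounts to checking that the hypotheses of that theorem are satisfied, and then reading off the stated parameters by substitution.

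First I would verify the hypotheses. Since $e = f = 1$, we have $0 < f \leq e$ trivially. The side condition $0 < (e+f) \leq \tfrac{3}{2} d^{1/2}$ becomes $2 \leq \tfrac{3}{2}\sqrt{q^2-1}$, which holds for every prime power $q \geq 2$ (in fact it is an extremely loose constraint for the regime $e=f=1$, which is precisely why this case gives $\beta$ closest to $1$, as already pointed out in the discussion after Remark \ref{rem5x}). Thus Theorem \ref{th:(q-e)(q+f)} applies to $d = (q-1)(q+1)$.

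Next I would substitute $e = f = 1$ into the conclusions of Theorem \ref{th:(q-e)(q+f)} to read off the parameters. The lower bound on the number of APMUBs becomes
\[
r \;\geq\; \left\lfloor \frac{q-e}{f} \right\rfloor + 1 \;=\; (q-1) + 1 \;=\; q,
\]
the set of inner product absolute values becomes $\Delta = \{0,\frac{1}{q-e}\} = \{0,\frac{1}{q-1}\}$, the scaling parameter is $\beta = \sqrt{\frac{q+f}{q-e}} = \sqrt{\frac{q+1}{q-1}}$, and the sparsity is $\epsilon = 1 - \frac{1}{q+f} = 1 - \frac{1}{q+1}$. All of these match the claimed expressions.

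Finally, the real case is handled in exactly the same way: Theorem \ref{th:(q-e)(q+f)} asserts that if a real Hadamard matrix of order $(q-e) = (q-1)$ exists, then the same construction yields $q$ many APRMUBs with the identical parameters, because the orthonormal bases produced from the RBD via the construction in Section \ref{TheoreticalAnalysis} inherit reality from the Hadamard blocks used. There is essentially no obstacle here; the only thing to double-check is that the $e=f=1$ case does not degenerate the underlying RBD construction of Lemma \ref{lem:con(q-e)(q+f)}, which it does not, since $q$ is a prime power (so a $(q^2,q,1)$-ARBIBD exists as the starting point of Construction \ref{con:1(q-e)(q+f)}) and the marking/reshaping procedure of Construction \ref{cons:2(q-e)(q+f)} is well-defined for any $0 < f \leq e \leq q$.
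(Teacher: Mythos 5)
Your proposal is correct and is exactly how the paper obtains this corollary: the paper states it immediately after the discussion noting that $e=f=1$ maximizes $r=\left\lfloor\frac{q-(e-f)}{f}\right\rfloor$ and minimizes $\beta$, i.e., as a direct specialization of Theorem~\ref{th:(q-e)(q+f)} with the same substitutions you carry out. Your additional checks of the hypotheses (the bound $e+f\leq\frac{3}{2}d^{1/2}$ and the non-degeneracy of Constructions~\ref{con:1(q-e)(q+f)} and~\ref{cons:2(q-e)(q+f)}) are sound and slightly more explicit than the paper's treatment.
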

Our observation made in connection with Hadamard matrix of order $(q-1)$, constructed through Paley method \cite{paley1933orthogonal} after Corollary \ref{cor:(q-1)q} is applicable here as well. We revisit that once again here for better explanation. That is, if $m = \frac{q-3}{2} \equiv 1 \bmod 4$ and $m$ is some prime power, then using the Paley Construction \cite{paley1933orthogonal}, we can construct Hadamard matrix of order $2(m+1) = q-1$. Thus, for a prime power $q \equiv 1 \bmod 4$, if $\frac{q-3}{2}$ is also some prime power, as well as, is equivalent to $1 \bmod 4$, then the real Hadamard matrix of order $q-1$ will exist through the Paley Construction. For all such $q$'s, there exist $q$ APRMUBs in $\mathbb{R}^{q^2 - 1}$ and the above corollary will become independent of the Hadamard Conjecture. Examples of such $q$ are $13, 29, 5^3$ etc. 
\begin{itemize}
\item For $d = (13-1)(13+1)= 2^3\cdot 3 \cdot 7$, we obtain $13$ many APRMUBs with $\beta = 1.080$. In this case number of real MUBs is only 2 and complex MUBs is 4.
 \item For $d= (29-1)(29+1) = 2^3 \cdot 3\cdot 5 \cdot 7$, we obtain $29$ many APRMUBs with $\beta = 1.035$. In this case also the number of real MUBs is only 2 and complex MUBs is 4.
\end{itemize}
The above examples clearly indicates that as $d$ increases, $\beta$ approaches closer to 1, hence we obtain APRMUBs which are significantly close to the MUBs.  

\subsection{Some problems that require further attention}
\label{further5}
It was pointed out in the example constructed above for RBD$(X, A)$, that for $|X|= d= 4\cdot 8 = (7-3)(7+1)$, one could construct more number of parallel classes than $r = \left\lfloor  \frac{q - e}{f} \right\rfloor +1 = 5$ in this case, $q=7, e=3, f=1$. From our experience of constructing RBD$(X,A)$, for the situation when $|X|$ can be expressed as $(q-e)(q+e) = q^2 -e^2$, i.e., for the situation $e=f> 0$, there appears to be always more than $r =  \left\lfloor  \frac{q}{f} \right\rfloor +1$ many parallel classes. In this situation it is possible to use other parallel classes, apart from the first one of $(q^2,q,1)$-ARBIBD, which enable us to obtain more parallel classes for RBD$(X,A)$ than $r =  \left\lfloor  \frac{q}{f} \right\rfloor+1$. 
A proof of this in the following form in a general setting might be an interesting open problem.

{\it Let $d=(q-e)(q+e)$, for $e \in \mathbb{N}$ with $0  \leq e \leq q$ and $q$ a prime power. Then one can construct an RBD$(X,A)$, with $|X| = d$ having constant block size $(q-e)$ with maximum intersection number $\mu = 1$, and having $r \geq \frac{q}{2}$ many parallel classes.}

Further our efforts for the following form of composite $d$  could not result into number of APMUBs of the order of $\mathcal{O}(\sqrt{d})$, where $q$ is a prime power.
 
\begin{itemize}
\item For $d= q(q+f)$, RBD having block size $q$, with $q+f$ many blocks in each parallel classes.
\item For $d = (q-e)(q+f),  0 < e < f$, RBD having block size $q-e$ with $q+f$ many blocks in each parallel classes. 
\end{itemize}
For the above forms of $d$, we could not construct more number of APMUBs than what is given by {\sf{Mutually Orthogonal Latin Square Lower Bound construction}}. Further efforts in this direction or new ideas may be required for this. We believe that it should be possible to improve the MOLS lower bound in such cases as well. 

Note that our construction of APMUBs are very sparse and hence the set of Bi-angular vectors  are very sparse. The sparsity of each vector inner constructions, is $\epsilon = 1- \frac{1}{s} \approx 1- \frac{1}{\sqrt{d}}$. Also the non zero components of the vectors are all of the same absolute value, which is  $ \frac{1}{s-e}\approx \frac{1}{\sqrt{d}}  $. Our extensive search of literature could not find any study on bounds on the cardinality of such kind of sparse vectors, each having same sparsity. Nevertheless, there are bounds on the cardinality of flat equiangular lines. Here flat signifies that all the component of the vectors are of same magnitude. In such situation the cardinality of set of equiangular lines in $\mathbb{C}^d$ is bounded by $(d^2 - d -1)$, Refer to \cite[Lemma 2.2]{godsil2009equiangular} which is less than $d^2$, which is cardinality when the constrain of flatness is relaxed. We similarly believe that the cardinality of such Bi-angular set, with with such large sparsity would be significantly less than those given in~\cite[Table I]{delsarte1991bounds} and~\cite[Equations 3.9,5.9]{Calderbank1997}]. Hence we subsequently intend to study the bounds on the cardinality of the set of Bi-angular vectors, with large sparsity.

\section{Conclusion}
\label{conclusion}
In this paper we consider construction of APMUBs, which are significantly good approximation of MUBs. In asymptotic sense, the APMUBs are almost as good as the MUBs. That is, for a dimension $d$, the value of the dot product between two vectors from different bases will be very close to $\frac{1}{\sqrt{d}}$, and in a few cases 0. In this paper we have formalized the definition of APMUBs and shown that for a good proportion of integers, we can construct $O(\sqrt{d})$ many APMUBs. Such a generic result is elusive in cases of perfect MUBs. Thus, for all practical purposes in the domain of quantum information, or related areas, our construction ideas open up a larger possibility of obtaining required combinatorial structures. How dense are these values of $d$ for which we can construct such APMUBs? We note by preliminary calculations that for $d < N$, there are approximately $\frac{3}{4}\frac{N}{\log N}$ many such $d$'s, which can be expressed as $(q-e)q$ with $0 \leq e< \frac{3}{4}q$. One may note that this is of the order of the density of the primes. As the main scope of this paper is understanding the combinatorial techniques, we leave this as a future research effort. Another important issue in this regard is that our constructions are directly related to the concept of Bi-angular vectors. We primarily note that when two vectors are randomly selected from the set of such Bi-angular vectors, there exists a very large probability that they will be making an angle of $\frac{\beta}{\sqrt{d}}$. In fact as $d$ increases, the probability converges to certainty. This is the scenario that happens in our APMUB related constructions. We leave this too for future investigation in a disciplined manner.

\section{Declarations: Funding and/or Competing interests}
No funding was received to assist with the preparation of this manuscript. 
The authors have no relevant financial or non-financial interests to disclose.
The authors have no conflict in any form related to this manuscript.

\bibliographystyle{plain}

\end{document}